\definecolor{bleu_sombre}{rgb}{0,0,0.6}  \definecolor{rouge_sombre}{rgb}{0.8,0,0}\definecolor{vert_sombre}{rgb}{0,0.6,0}
\theoremstyle{plain}
\newtheorem{theorem}{{Theorem}}[section] %\sc{Th\'eor\`eme} pour avoir des petites capitales (mais ce n'est plus en gras)
\newtheorem*{theorem*}{{Theorem}}
\newtheorem{proposition}[theorem]{Proposition}
\newtheorem*{proposition*}{Proposition}
\newtheorem{corollary}[theorem]{Corollary}
\newtheorem*{corollary*}{Corollary}
\newtheorem{lemma}[theorem]{Lemma}
\newtheorem*{lemma*}{Lemma}
\theoremstyle{definition}
\newtheorem{definition}[theorem]{Definition}
\newtheorem*{definition*}{Definition}
\theoremstyle{remark}
\newtheorem*{remarque*}{Remarque}
\newtheorem{remark}[theorem]{Remark}
\newtheorem*{example*}{Example}
\newtheorem*{examples*}{Examples}
\newcommand{\commm}[1]{{}}
\renewcommand{\leq}{\leqslant}	\renewcommand{\geq}{\geqslant}
\renewcommand{\bar}[1]{\overline{#1}}
\renewcommand\over[2]{{\,\buildrel #1\over#2\,}}
\newcommand{\inv}{^{-1}}
\newcommand{\abs}[1]{\left\vert #1\right\vert}        % valeur absolue
\newcommand{\nr}[1]{\left\Vert #1\right\Vert}         % norme
\newcommand{\innp}[2]{\left< #1 , #2 \right>}         % produit scalaire (inner product)  
\newcommand{\Dom}{\Dc}			% Domaine d'un op\'erateur
\newcommand{\Op}{{\mathop{\rm{Op}}}_h}
\newcommand{\Opw}{{\mathop{\rm{Op}}}_h^w}
\newcommand{\pppg}[1] {\left< #1 \right>} 	% <x> = \sqrt{1+x^2}
\newcommand{\symb} {\Sc}		% Espace de symboles
\newcommand{\bigo}[2]{\mathop{O}\limits_{#1 \to #2}}
\newcommand{\singl}[1]{\left\{ #1 \right\}}		% Singleton --> en fait, n'importe quel ensemble
\newcommand{\Ii}[2]{\llbracket #1,#2 \rrbracket}	% intervalle d'entiers.
\newcommand{\R}{\mathbb{R}}		\newcommand{\C}{\mathbb{C}}
\newcommand{\N}{\mathbb{N}}
\newcommand{\1}[1]{\mathds 1 _{#1}}
\newcommand{\st}{\,:\,}					% ``tel que'' dans la d\'efinition d'un ensemble 
\newcommand{\restr}[2]{\left.#1\right|_{#2}}         % #1 restreint \`a #2
\renewcommand{\Re}{\mathop{\rm{Re}}\nolimits}        % partie r\'eelle
\renewcommand{\Im}{\mathop{\rm{Im}}\nolimits}        % partie imaginaire, Image
\DeclareMathOperator{\Id}{Id}                        % identit\'e
\DeclareMathOperator{\supp}{supp}                    % support
\renewcommand{\a}{\alpha}\renewcommand{\b}{\beta}\newcommand{\g}{\gamma}\renewcommand{\d}{\delta}\newcommand{\D}{\Delta}\newcommand{\e}{\varepsilon}\newcommand{\z}{\zeta} \newcommand{\y}{\eta}\renewcommand{\th}{\theta}\newcommand{\Th}{\Theta}\renewcommand{\k}{\kappa}\renewcommand{\l}{\lambda}\newcommand{\m}{\mu}\newcommand{\n}{\nu}\newcommand{\x}{\xi}\newcommand{\s}{\sigma}\renewcommand{\t}{\tau}\newcommand{\f}{\varphi}\newcommand{\vf}{\phi}\newcommand{\h}{\chi}\newcommand{\p}{\psi}\renewcommand{\O}{\Omega}
\newcommand{\Bc}{{\mathcal B}}\newcommand{\Dc}{{\mathcal D}}\newcommand{\Ec}{{\mathcal E}}\newcommand{\Hc}{{\mathcal H}}\newcommand{\Kc}{{\mathcal K}}\newcommand{\Lc}{{\mathcal L}}\newcommand{\Nc}{{\mathcal N}}\newcommand{\Rc}{{\mathcal R}}\newcommand{\Sc}{{\mathcal S}}\newcommand{\Tc}{{\mathcal T}}\newcommand{\Uc}{{\mathcal U}}
\newcommand{\ad}{{\rm{ad}}}
\newcommand{\divg}{\mathop{\rm{div}}\nolimits}
\newcounter{stepproof}
\newcommand{\stepp}{\stepcounter{stepproof} \noindent {\bf $\bullet$}\quad }
\begin{document}

\newcommand{\Da}{\pppg D^{\a}}
\newcommand{\bb}{\tilde \a}
\newcommand{\bbb}{{\gamma}}
\newcommand{\kk}{\kappa}

\newcommand{\ybar}{\bar \y}
\newcommand{\yy}{\y_1}
\newcommand{\yyy}{\y_2}
\newcommand{\yoo}{\tilde \y}

\newcommand{\Pg}{P}
\newcommand{\Pii}{P_{\yy}}
\newcommand{\Piio}{P_{\y_0}}
\newcommand{\tPii}{P_{\yy,z}}
\newcommand{\Pcc}{P_{\yy,c}}
\newcommand{\Pcco}{P_{\y_0,c}}
\newcommand{\Hg}{H}
\newcommand{\Rg}{R}
\newcommand{\Hii}{H_{\bar \y}}
\newcommand{\tHii}{H_{\ybar,z}}
\newcommand{\Hcc}{H_{\ybar,c}}
\newcommand{\Ba}{B_\a}
\newcommand{\Bai}{B_{\yyy}^\a}
\newcommand{\tBai}{B_{\yyy,z}^\a}
\newcommand{\Bac}{B_{\yyy,c}^\a}
\newcommand{\Ri}{R_{\bar \y}}
\newcommand{\Riz}{R_{\ybar,z}}
\newcommand{\tRi}{\tilde R_{\yyy}}
\newcommand{\Ro}{R_0}
\newcommand{\Thz}{\Th_z}
\newcommand{\Kz}{K_{\y_0}}
\newcommand{\Kzyy}{K_{\yy}}
\newcommand{\Sp}{S_{\p,\ybar}}
\newcommand{\Spz}{S_{\p,\ybar}(z)}

\newcommand{\Hh}{H_h}
\newcommand{\Ph}{P_h}
\newcommand{\tBa}{B_h^\a}

\newcommand{\BB}{\Bc_\p}

\author{Moez Khenissi and Julien Royer}

\title{Local energy decay and smoothing effect for the damped Schr\"odinger equation}

\begin{abstract}
We prove the local energy decay and the smoothing effect for the damped Schrödinger equation on $\R^d$. The self-adjoint part is a Laplacian associated to a long-range perturbation of the flat metric. The proofs are based on uniform resolvent estimates obtained by the dissipative Mourre method. All the results depend on the strength of the dissipation which we consider.
\end{abstract}

\maketitle

\section{Introduction}

Let $d \geq 3$. Our purpose in this paper is to study on $\R^d$ the local energy decay and the Kato smoothing effect for the damped Schr\"odinger equation 
\begin{equation} \label{eq-damp-schrodinger}
\begin{cases}
-i \partial_t u + \Pg u -i a(x) \Da a(x) u = 0,\\
u(0) = u_0.
\end{cases}
\end{equation}
The operator $\Pg$ is a Laplacian in divergence form (or a Laplace-Beltrami operator) associated to a metric which is a long-range perturbation of the usual flat metric (see \eqref{hyp-long-range} below). For the dissipative part we have denoted by $\pppg \cdot$ the function $(1+\abs \cdot ^2)^{\frac 12}$ and by $D$ the square root of the free Laplacian, so that $\Da$ stands for $(1-\D)^{\frac \a 2}$. The parameter $\a$ belongs to $[0,2[$. The non-negative valued function $a$ will be assumed to be of short range (see \eqref{hyp-a-short-range}), so that in terms of spacial decay, we have an absorption index $a(x)^2$ which decays at least like $\pppg x^{-2-2 \rho}$ for some $\rho > 0$.\\

It is known that the free Schrödinger equation (\eqref{eq-damp-schrodinger} with $\Pg = -\D$ and $a = 0$) preserves the $L^2$-norm but satisfies the local energy decay: if $u_0$ is supported in the ball $B(R) = \singl{\abs x \leq R}$ of $\R^d$ for some $R > 0$ we have
\[
\nr{e^{it\D}u_0}_{L^2(B(R))} \leq C_R \pppg t^{-\frac d 2} \nr{u_0}_{L^2(\R^d)}.
\]
This means that the ``mass'' of the solution escapes at infinity. On the other hand, the Schrödinger equation has a regularizing effect. The solution $e^{it\D} u_0$ belongs to $C^\infty$ for $t \neq 0$ and 
\[
\int_\R \nr{(1-\D)^{\frac 14} e^{it\D}u_0}^2_{L^2(B(R))} \, dt \leq C_R \nr{u_0}^2_{L^2(\R^d)}.
\]

There are many papers dealing with these properties for more and more general Schr\"odinger equations. 
Concerning the local energy decay for a self-adjoint Schr\"odinger equation, we only refer to \cite{rauch78} for the Schr\"odinger operator with an exponentially decaying potential, to \cite{tsutsumi84} for the free Schr\"odinger equation on an exterior demain, and to \cite{bouclet11,bonyh12} for a laplacian associated to a long-range perturbation of the flat metric. For all these papers, the local energy decays like $t^{-\frac d 2}$ or like $t^{-\frac d 2 + \e}$ under a non-trapping assumption. There is also a huge litterature for the closely related problem of the local energy decay for the wave equation (see \cite{laxmp63,ralston69,morawetzrs77,burq98,tataru13,guillarmouhs13} and references therein).

Concerning the smoothing effect we mention \cite{constantins88,sjolin87} for the laplacian on $\R^d$, \cite{benartzik92} for the Schr\"odinger operator with a potential and \cite{burqgt04} for the problem on an exterior domain. We also refer to \cite{doi96,doi00,burq04} for the necessity of the non-trapping condition.

In the dissipative context, the local energy decay for the damped Schr\"odinger equation in an exterior domain has been proved in \cite{alouik07}. In this context, the non-trapping condition can be replaced by the geometric control condition (see \cite{raucht74,bardoslr92}): there can be bounded classical trajectories but they have to go through the damping region. Then the local energy decays like $t^{-\frac d 2}$, as in the self-adjoint case under the non-trapping condition. A similar result has been obtained in \cite{alouik10} for the free Schr\"odinger equation on an exterior domain with dissipation at the boundary, and in \cite{art-diss-schrodinger-guide} for the similar problem on a wave guide. In the latter case the global energy decays exponentially and we have a smoothing effect in the unbounded directions. We also mention \cite{bortotc}, where an exponential decay for the global energy is proved for the solution of the Schr\"odinger equation with a dissipation 
effective on a neighborhood of the infinity.

The dissipation by a potential ($\a = 0$ in our setting) is not strong enough to recover under the damping condition the same smoothing effect as under the non-trapping condition. However it is known that this is the case for the so-called regularized Schr\"odinger equation ($\a = 1$). See \cite{aloui08,aloui08b} for the problem on a compact manifold and \cite{alouikr} for the problem on an exterior domain. As in the self-adjoint case (see \cite{burq04}), we can recover a $H^{\frac 12 - \e}$ smoothing effect if only a few classical trajectories fail to satisfy the assumption (see \cite{alouikv13}).

In these works, the problem is a compact perturbation of the free Schr\"odinger equation.  Our purpose in this paper is to prove the local energy decay and the Kato smoothing effect for an asymptotically vanishing perturbation. In a similar context, the local energy decay has been studied for the dissipative wave equation in \cite{boucletr14}.\\

We now describe more precisely the setting of our paper. We consider on $\R^d$ a metric $G(x)$ which is a long range perturbation of the identity: for some $\rho > 0$ there exist constants $C_\b$ for $\b \in \N^d$ such that
\begin{equation} \label{hyp-long-range}
\abs{\partial^\b \big( G(x) - I_d \big)} \leq C_\b \pppg x ^{-\rho - \abs \b}.
\end{equation}
Let $w \in C^\infty(\R^d)$ be such that  
\begin{equation} \label{hyp-w}
C\inv \leq w(x) \leq C
\end{equation}
for some $C \geq 1$. We consider $b_1,\dots,b_d \in C_0^\infty(\R^d)$ such that the operator
\begin{equation} \label{def-Pg}
\Pg = - \divg G(x) \nabla + W, \quad \text{where } W = \sum_{j=1}^d b_j(x) D_j,
\end{equation}
is self-adjoint and non-negative on $L^2_w := L^2(\R^d, w(x) \, dx)$ (with domain $H^2_w = H^2(\R^d,w(x) \, dx)$). Here and everywhere below, $D_j$ stands for $-i\partial_{x_j}$. For a Laplacian in divergence form we only have to set 
\begin{equation} \label{setting-divergence-form}
w = 1 , \quad b_1 = \dots = b_n = 0.
\end{equation}
We now turn to the Laplace-Beltrami operator associated to a metric $g$. It is defined as
\[
- \D_g = - \sum_{j,k=1}^d \abs{g(x)}^{-\frac 12} \frac \partial {\partial x_j} \abs{g(x)}^{\frac 12} G_{j,k}(x) \frac {\partial} {\partial x_k},
\]
with $\abs{g(x)} = \det (g_{j,k}(x))$ and $(G_{j,k}) = (g_{j,k})\inv$. If $g$ is a long-range perturbation of the flat metric, then so is $G = g\inv$. We recall from \cite{bouclet11} that we can assume without loss of generality that $\abs {g(x)} = 1$ outside a compact set of $\R^d$. Thus there exist $b_1,\dots b_d \in C_0^\infty(\R^d)$ such that $-\D_g$ is as in \eqref{def-Pg} with 
\begin{equation} \label{setting-laplace-beltrami}
w = \abs{g(x)}^{\frac 12} \quad \text{and} \quad G = g\inv.
\end{equation}

Concerning the dissipative term, $a$ is a smooth and non-negative valued function on $\R^d$. As already mentioned, it is of short range:
\begin{equation}\label{hyp-a-short-range}
\abs{\partial^\b a(x)} \leq C_\b \pppg x^{-1 - \rho -\abs \b}.
\end{equation}
We will use the following notation:
\begin{equation} \label{def-Ba-H}
\Ba =  a(x) \Da a(x) \qquad \text{and} \qquad \Hg = \Pg - i \Ba.
\end{equation}
We also set 
\begin{equation} \label{def-talpha-kappa}
\bb = \min(1,\a) \quad \text{and} \quad \kk = \begin{cases} \frac d 2 & \text{if $d$ is even},\\ \frac {d+1}2 & \text{if $d$ is odd}. \end{cases}
\end{equation}

We will see that $\Hg$ is a maximal dissipative operator on $L^2_w$. In particular, for $u_0 \in \Dom(\Hg) = H^2_w$ the problem \eqref{eq-damp-schrodinger} has a unique solution $t \mapsto e^{-it\Hg}u_0$. The main purpose of this paper is to prove that this solution satisfies the local energy decay and the Kato smoothing effect as stated in the following two theorems:

\begin{theorem}[Local energy decay] \label{th-loc-decay}
Let $\e > 0$. Let $\d > \k + \frac 12$, $N \in \N$ and $\s \in [0,2]$. Assume that 
\begin{enumerate}[(i)]
\item there are no bounded geodesics (see the non-trapping condition \eqref{hyp-non-trapping} below)
\item or the bounded geodesics go through the damping region (see \eqref{hyp-damping}), $N \bb + \s \geq 2$ and $\d > N - \frac 12$.
\end{enumerate}
Then there exists $C \geq 0$ such that for $u_0 \in H^{\s,\d}$ and $t \geq 0$ we have 
\[
\nr{e^{-it \Hg} u_0}_{L^{2,-\d}} \leq C t^{-\frac d 2 + \e} \nr{u_0}_{H^{\s,\d}}.
\]
\end{theorem}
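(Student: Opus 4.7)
The plan is to reduce the time-decay estimate to uniform resolvent estimates and their derivatives on (or near) the real axis, then to exploit the dissipative Mourre machinery announced earlier in the paper.

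First, I would represent the propagator via a spectral/contour formula. Since $\Hg = \Pg - i \Ba$ is maximal dissipative, $(\Hg-z)^{-1}$ is holomorphic in the open upper half-plane, and under the assumptions of the theorem one expects a limiting absorption principle giving boundary values $(\Hg - \l - i0)^{-1}$ in the weighted space $\Lc(L^{2,\d},L^{2,-\d})$. I would therefore write, for $t > 0$ and $u_0 \in \Dom(\Hg)$,
\[
e^{-it\Hg} u_0 = \frac {1}{2i\pi} \int_\R e^{-it\l} \big( (\Hg - \l - i0)^{-1} - (\Hg^* - \l + i0)^{-1} \big) u_0 \, d\l,
\]
after checking that the right-hand side makes sense in $L^{2,-\d}$, cutting off with a smooth function $\chi$ to separate low and high energies. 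This reduces the problem to two things: (a) smoothness and bounds for the resolvent (and its $\l$-derivatives) on the real axis, and (b) a low-energy expansion.

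Second, for the \emph{high energy} regime I would perform $N$ integrations by parts in $\l$, turning the $t$-decay into an $L^1$ statement on $\partial_\l^N (\Hg-\l-i0)^{-1}$. In case (i), no trapping makes the standard (dissipative) Mourre estimate with conjugate operator $A = \frac 12 (x \cdot D + D \cdot x)$ available with a gain of order $1$, giving the classical bound $\partial_\l^N = O(\pppg \l^{-1/2} \cdots)$ between $L^{2,\d}$ and $L^{2,-\d}$ for any $\d > N - \frac 12$; taking $N = \k$ and inserting the $\pppg D^\s$ weight yields the claimed $t^{-d/2+\e}$. In case (ii), one only has a \emph{weakened} Mourre estimate: the commutator $i[\Hg,A]$ is only controlled by a symbol that loses $\bb = \min(1,\a)$ at each step, so that one needs $N\bb + \s \geq 2$ and a larger weight $\d > N - \frac 12$ to close the estimate — this matches exactly the hypotheses of case (ii).

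Third, for the \emph{low energy} regime I would isolate $\l$ in a fixed neighborhood of $0$ and use a Taylor/asymptotic expansion of the boundary-value resolvent $(\Hg - \l - i0)^{-1}$ near $\l = 0$. Here the dissipation plays no decisive role (and is a short-range perturbation of $\Pg$), so the structure of the expansion is governed by the long-range Laplacian $\Pg$; extracting $\l^{d/2-1}$-type singularities and applying stationary phase/Plancherel in $\l$ produces the $t^{-d/2+\e}$ decay. One has to cut off powers responsibly so that the low- and high-energy pieces can be glued to match the exponent $-d/2+\e$.

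The main obstacle will be the low-energy analysis: at $\l = 0$ the dissipation $\Ba = a\langle D\rangle^\a a$ vanishes to high order in $x$ and does not help to invert $\Pg$, so one is forced to use a non-trivial expansion of the free-type resolvent in the presence of the long-range metric perturbation (and of the short-range damping, which must be treated perturbatively). A secondary obstacle is to carry the $\bb$-loss through $N$ commutators cleanly in the Mourre scheme of case (ii), keeping track of how the weight $\d$ and the order $\s$ combine to give $N\bb + \s \geq 2$; this is where one pays for replacing the non-trapping hypothesis by the geometric control condition.
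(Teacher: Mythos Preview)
Your overall strategy---represent the propagator by an integral of the resolvent, integrate by parts to gain powers of $t$, and split into low/high frequency contributions controlled by the resolvent estimates of Theorems \ref{th-inter-freq}, \ref{th-low-freq} and \ref{th-high-freq}---is exactly the paper's strategy. Two points, however, differ enough to matter.

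First, your representation formula is off. A Stone-type identity involving $(\Hg^*-\l+i0)^{-1}$ is not available for a genuinely dissipative operator; there is no spectral measure to reconstruct. The paper instead works purely in the upper half-plane: for $\m>0$ one has the Laplace/Fourier identity $\check u_\m(\t)=-i\,\Rg(\t+i\m)u_0$, differentiates $n=\k-1$ times to land on powers $\Rg^{\k}$, and only then inverts. This uses only $(\Hg-z)^{-1}$ for $\Im z>0$, which is where the dissipative resolvent lives. The paper also separates a third region $\t<0$ (your sketch omits it), handled trivially by accretivity of $\Hg$.

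Second, and more substantively, your low-frequency plan---a Taylor/asymptotic expansion of $(\Hg-\l-i0)^{-1}$ near $\l=0$ and extraction of $\l^{d/2-1}$ singularities---is not what the paper does and would be considerably harder for a long-range metric perturbation. The paper never expands the resolvent at $0$; instead it feeds the \emph{uniform power bounds} $\nr{\pppg x^{-\d}\Rg^{n+1}(z)\pppg x^{-\d}}\lesssim 1+\abs z^{d/2-\e-1-n}$ of Theorem \ref{th-low-freq} (for $n=\k-1$ and $n=\k$) into a simple H\"older-type Fourier lemma (Lemma \ref{lem-holder2}), which converts a bound $\nr{f(\t)}\lesssim\abs\t^{-\g}$, $\nr{f'(\t)}\lesssim\abs\t^{-1-\g}$ directly into $\pppg t^{\g-1+}$ decay of $\hat f$. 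This bypasses any explicit low-energy expansion. For high frequencies the paper does a dyadic decomposition and further integrations by parts, and the condition $(\k+k)\bb+\s\geq 2$ (your $N\bb+\s\geq 2$ with $N=\k+k$) arises from summability of $\sum_j 2^{j(1-(\k+k)\bb/2-\s/2)}$ after applying Theorem \ref{th-high-freq} and almost orthogonality; your description of a ``$\bb$-loss per commutator'' captures the origin of the exponent but not the mechanism by which $\s$ enters (it comes from a spectral localisation $\tilde\h_j(\Pg)$ absorbing $\pppg\Pg^{\s/2}$, not from the Mourre machine itself).
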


In this statement $L^{2,-\d}$ denotes the weighted space $L^2\big(\pppg x^{-2\d} \,dx \big)$, while $\nr{u_0}_{H^{\s,\d}}$ is the $L^2$-norm of $\pppg x^\d \pppg D^\s u_0$.

We remark that we have to take $\s = 2$ in the second case if $\a = 0$. This means that we have a loss of two derivatives. If $\a > 0$ we can take $\s = 0$ (no loss of derivative) as long as we choose $\d$ large enough (if $\a \geq 1$ then we can take $N = 2$ and in this case the condition $\d > N - \frac 12$ is weaker than $\d > \k + \frac 12$). Under the non-trapping condition we can always take $\s = 0$.

\begin{theorem}[Kato smoothing effect] \label{th-smoothing-effect}
There exists $C \geq 0$ such that for all $u_0 \in L^2$ we have 
\[
\int_0^{+\infty} \nr{\pppg x^{-1} \pppg D^{\bb / 2} e^{-it \Hg} u_0}^2_{L^2} \, dt \leq C \nr {u_0}^2_{L^2}.
\]
\end{theorem}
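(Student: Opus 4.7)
The plan is to reduce Theorem~\ref{th-smoothing-effect} to a uniform resolvent estimate for $\Hg$, which is then obtained from the dissipative Mourre method developed in the paper. The reduction goes through the dissipative version of Kato's theorem on smooth operators: for a maximal dissipative operator $H$ and a closed operator $A$ with $\Dom(H) \subset \Dom(A)$, the smoothing inequality
\[
\int_0^{+\infty} \nr{A e^{-itH} u_0}_{L^2}^2 \, dt \leq C \nr{u_0}_{L^2}^2
\]
follows from the uniform bound
\[
\sup_{\Im z > 0} \nr{A \, \Im[(H-z)^{-1}] \, A^*}_{L^2 \to L^2} < +\infty,
\]
where $\Im R := (R - R^*)/(2i) \geq 0$ in the dissipative case. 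Applied with $A := \pppg x^{-1} \pppg D^{\bb/2}$ (hence $A^* = \pppg D^{\bb/2} \pppg x^{-1}$), this reduces the theorem to
\[
\sup_{\Im z > 0} \nr{\pppg x^{-1} \pppg D^{\bb/2} \, \Im[(\Hg - z)^{-1}] \, \pppg D^{\bb/2} \pppg x^{-1}} < +\infty.
\]

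To deduce this bound, I would partition dyadically in the spectral parameter. For $\Re z \sim 2^{2j}$ with $j$ large, after inserting smooth spectral localizations for $\Hg$ near the energy $2^{2j}$, each factor $\pppg D^{\bb/2}$ contributes a size $2^{j\bb/2}$, while the Mourre-based limiting absorption principle at high energy (with the rescaled generator of dilations as conjugate operator, semiclassical parameter $h \sim 2^{-j}$) yields
\[
\nr{\pppg x^{-1} (\Hg - z)^{-1} \pppg x^{-1}} \lesssim 2^{-j},
\]
uniformly in $\Im z > 0$. Since $\bb \leq 1$, the combined contribution $2^{j\bb} \cdot 2^{-j} = 2^{j(\bb - 1)}$ remains bounded, uniformly in $j$. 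The complementary low-energy regime $\Re z \lesssim 1$ is handled by ellipticity of $\Pg$ at the bottom of the spectrum and the boundedness of $\Ba$; there both derivative factors and the resolvent are bounded, and the estimate is straightforward.

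The principal obstacle is the \emph{absence of any non-trapping or geometric control hypothesis} in Theorem~\ref{th-smoothing-effect}, so the classical Mourre positivity $i[\Pg, A_h] \geq c > 0$ may fail on a possibly nontrivial trapped set. This is precisely where the dissipative Mourre method enters. Using the identity
\[
\Im(\Hg - z)^{-1} = (\Hg - z)^{-1} (\Ba + \Im z)(\Hg - z)^{-*},
\]
one sees that the imaginary part of the resolvent is controlled by the dissipation itself, and any loss of commutator positivity can be absorbed into the extra $\Ba$ term. Moreover, the choice of weight $\pppg x^{-1}$ on each side, which is strictly stronger than the critical Kato weight $\pppg x^{-1/2 - \e}$ that would be needed for full time-integrated local decay, is what allows the estimate to survive without any dynamical assumption on $\Pg$; this is consistent with the fact that Theorem~\ref{th-smoothing-effect} is stated unconditionally while Theorem~\ref{th-loc-decay} requires non-trapping or damping on the trapped set.
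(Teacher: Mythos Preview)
Your overall architecture --- reduce to a uniform weighted resolvent bound and invoke the dissipative Kato smoothing principle --- matches the paper's Section~\ref{sec-smoothing-effect}. The paper does exactly this: it establishes $\sup_{z\in\C_+}\nr{\pppg x^{-1}\pppg P^{\bb/4}R(z)\pppg P^{\bb/4}\pppg x^{-1}}<\infty$ (Proposition~\ref{prop-estim-regularisant}), deduces the bound on $(H-z)^{-1}-(H^*-\bar z)^{-1}$ (Corollary~\ref{cor-estim-regularisant}), and then cites the dissipative smooth-operator machinery. Your frequency splitting plays the role of the $\h_z(P)$ decomposition there.

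There is, however, a genuine gap. You assert that Theorem~\ref{th-smoothing-effect} carries no geometric hypothesis, and you try to justify this by saying ``any loss of commutator positivity can be absorbed into the extra $\Ba$ term'' and that the weight $\pppg x^{-1}$ alone makes the estimate survive. This is not correct. The paper is explicit (see the discussion immediately after Theorem~\ref{th-smoothing-effect}) that the proof relies on the high-frequency estimate of Theorem~\ref{th-high-freq}, which in turn requires either the non-trapping condition \eqref{hyp-non-trapping} or the damping condition \eqref{hyp-damping}. Your claimed bound $\nr{\pppg x^{-1}(H-z)^{-1}\pppg x^{-1}}\lesssim 2^{-j}$ at energy $2^{2j}$ is precisely the non-trapping case (i) of Theorem~\ref{th-high-freq}; under \eqref{hyp-damping} alone one only gets $2^{-j\bb}$, and without either hypothesis one gets nothing uniform at all. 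The identity $\Im R(z)=R(z)(\Ba+\Im z)R(z)^*$ does not help: if the trapped set is disjoint from $\{a>0\}$, the factor $\Ba$ vanishes where the Mourre commutator fails to be positive, so there is nothing to absorb. No choice of spatial weight cures resonance-driven resolvent blow-up from undamped trapping.

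In short: restore the damping (or non-trapping) hypothesis, quote Theorem~\ref{th-high-freq} for the high-frequency input with decay $\abs z^{-\bb/2}$, and your sketch becomes essentially the paper's proof. Drop the final paragraph claiming unconditionality.
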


Notice that according to \eqref{hyp-w} we have $L^{2,\d} = L^{2,\d}_w$ and $H^{\s,\d}_w = H^{\s,\d}$ with equivalent norms, so we can state these estimates in $L^2$ and $H^{\s,\d}$ even when $w \not \equiv 1$.\\

The proofs of Theorems \ref{th-loc-decay} and \ref{th-smoothing-effect} are based on uniform resolvent estimates. According to Proposition \ref{prop-max-diss-acc} below, the operator $\Hg$ is maximal dissipative, so for all $z$ in 
\[
\C_+ := \singl{z \in \C \st \Im z > 0}
\]
we can consider the resolvent
\begin{equation} \label{def-res}
\Rg(z) = \big(\Hg - z\big) \inv \in \Lc(L^2_w) = \Lc(L^2). %\quad \text{and} \quad \Rgg(z) = \big(\Hgg - z\big) \inv
\end{equation}
Here we have denoted by $\Lc(L^2)$ the space of bounded operators on $L^2$. After a Fourier transform, the solution $u$ of \eqref{eq-damp-schrodinger} can be written as the integral over frequencies $\Re(z)$ of this resolvent when $\Im(z)$ goes to 0 (see Section \ref{sec-loc-decay}). Thus the problem will be reduced to proving uniform estimates for $\Rg(z)$ and its derivatives for $\Im (z)$ small, and then to control the dependance of these estimates with respect to $\Re(z)$. Since the self-adjoint part $\Pg$ of $\Hg$ is a non-negative operator, the estimates for $\Re(z) < 0$ are easy: for $n \in \N$ and $z \in \C_+$ with $\Re(z) \leq -c_0 <  0$ we have
\begin{equation} \label{estim-Rg-trivial}
\nr {\Rg^{n+1}(z)}_{\Lc(L^2)} \leq \frac {C_w} {\abs{\Re(z)}^{n+1}}.
\end{equation}
Thus we will focus on $z \in \C_+$ with $\Re(z) \geq -c_0$ where $0 < c_0 \ll 1$. As usual, the difficulties will arise for low frequencies ($\Re(z)$ close to 0) and high frequencies ($\Re(z) \gg 1$). We first state the uniform resolvent estimates for intermediate frequencies:

\begin{theorem}[Intermediate frequency estimates] \label{th-inter-freq}
Let $K$ be a compact subset of $\C \setminus \singl 0$. Let $n \in \N$ and $\d > n + \frac 12$. Then there exists $C \geq 0$ such that for all $z \in K \cap \C_+$ we have 
\[
\nr{\pppg x^{-\d} \Rg^{n+1} (z)\pppg x ^{-\d}}_{\Lc(L^2)} \leq C.
\]
\end{theorem}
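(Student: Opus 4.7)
The plan is to apply a dissipative version of the Mourre method with the generator of dilations $A = \frac{1}{2}(x \cdot D + D \cdot x)$ as conjugate operator. First I would split $K \cap \C_+$ into two regions. For $z \in K$ with $\Re(z) \leq -c_0 < 0$ the elementary estimate \eqref{estim-Rg-trivial} already gives uniform boundedness of every power $\Rg^{n+1}(z)$ without any weights. The whole difficulty is therefore to handle $z \in K \cap \C_+$ such that $\Re(z)$ lies in a compact subinterval $J$ of $(0,+\infty)$.

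For such $z$, a direct pseudodifferential computation using the long-range hypothesis \eqref{hyp-long-range} on the metric and the compact support of the $b_j$ yields, for any $\varphi \in C_0^\infty((0,+\infty))$ supported near $J$,
\[
\varphi(\Pg)\,[\Pg,iA]\,\varphi(\Pg) \geq \theta\,\varphi(\Pg)^2 + K_{\mathrm{cpt}},
\]
for some $\theta > 0$ and some compact operator $K_{\mathrm{cpt}}$ on $L^2$: the principal part of $[\Pg,iA]$ is $-2\divg G(x)\nabla$, plus corrections that either decay like $\pppg x^{-\rho}$ or are compactly supported. The absence of positive embedded eigenvalues of $\Pg$, which follows from a standard virial argument using $\Pg \in C^2(A)$, then allows one to remove $K_{\mathrm{cpt}}$ by the usual Mourre bootstrap.

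Next I add the dissipation $-i\Ba$. Since $\Ba \geq 0$ the imaginary part of $\Hg$ has the correct sign, so the Mourre estimate is preserved and actually strengthened when passed from $\Pg$ to $\Hg$: the extra contribution $-2\Im \Hg = 2\Ba$ is non-negative. The short-range decay \eqref{hyp-a-short-range} of $a$, combined with symbolic calculus, shows that $\Ba$ and $[\Ba,iA]$ are relatively bounded with respect to $\pppg D^{\bb}$ with symbols decaying at least like $\pppg x^{-2-2\rho}$. The standard dissipative Mourre theorem then delivers the case $n=0$: for $\d > \tfrac{1}{2}$,
\[
\nr{\pppg x^{-\d}\Rg(z)\pppg x^{-\d}}_{\Lc(L^2)} \leq C,
\]
uniformly in $z \in K \cap \C_+$ with $\Re(z) \in J$. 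For $n \geq 1$ I would proceed by the iterated commutator scheme of Jensen--Mourre--Perry: differentiating in $z$ gives $\partial_z^n \Rg(z) = n!\,\Rg^{n+1}(z)$, and at each step one can commute an extra resolvent through at the cost of one more weight $\pppg x^{-1/2}$ on each side, provided the iterated commutators $\ad_A^k(\Hg)$, $k\leq n+1$, are controlled. This matches exactly the threshold $\d > n + \tfrac{1}{2}$.

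The main technical obstacle is verifying the iterated commutator regularity with respect to $A$ of the nonlocal dissipative term $\Ba = a(x)\Da a(x)$ for $\a \in [0,2[$. This reduces to tracking the symbol $a(x)^2 \pppg \xi^{\a}$ under the flow of $A$ and checking that the short-range decay of $a$ is preserved after each application of $\ad_A$. A standard symbolic calculus for Weyl-quantized symbols of this type handles this step, after which the induction on $n$ closes and Theorem \ref{th-inter-freq} follows.
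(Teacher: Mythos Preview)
Your approach is essentially the same as the paper's: apply the dissipative Mourre method with the generator of dilations $A$ as conjugate operator, verify the Mourre estimate for $\Pg$ on a neighborhood of each $E>0$ using that the principal part of $[\Pg,iA]$ is $2\Pg$ modulo compact terms, use absence of embedded eigenvalues to absorb the compact remainder, and then handle the dissipative part via commutator bounds on $\Ba$. The paper packages the multiple-commutator bookkeeping and the passage to higher powers $\Rg^{n+1}$ into an abstract theorem (Theorem~\ref{th-mourre}) that delivers all $n$ at once, whereas you outline the $n=0$ case and then invoke the Jensen--Mourre--Perry iteration; these are equivalent routes.

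There is, however, one step you gloss over. The abstract Mourre machinery produces estimates with weights $\pppg{A}^{-\d}$, not $\pppg{x}^{-\d}$, and in general $\pppg{x}^{-\d}\pppg{A}^{\d}$ is \emph{not} bounded on $L^2$ (the symbol of $A$ grows like $\abs{x}\abs{\x}$). Your sentence ``at the cost of one more weight $\pppg{x}^{-1/2}$ on each side'' hides this. The paper handles the conversion by a resolvent-identity trick: one writes $\Rg^{n+1}(z)$ as a finite sum of terms of the form $(z-i)^\b \Rg^{n+1+\b}(i)$ and $(z-i)^{2m-n-1+\nu}\Rg^m(i)\Rg^{\nu}(z)\Rg^m(i)$, so that the factors $\Rg^\nu(z)$ carrying the $z$-dependence are sandwiched between powers of $\Rg(i)$; then $\pppg{x}^{-\d}\Rg^m(i)\pppg{A}^{\d}$ \emph{is} bounded for $m$ large enough, by pseudo-differential calculus. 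You should add this step (or an equivalent one) to your argument; without it the proof is incomplete.
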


We remark that compared to the resolvent for the dissipative wave equation (see \cite{boucletr14}), the derivatives of the resolvent correspond to its powers:
\[
\Rg^{(n)}(z) = n! \, \Rg^{n+1} (z).
\]
This will significantly simplify the discussion.\\

It is known that even for the free Laplacian the estimates of Theorem \ref{th-inter-freq} fail to hold uniformly when $z$ goes to 0 if $n$ is too large. This explains the restriction in the rate of decay in Theorem \ref{th-loc-decay}. For low frequencies we prove the following result:

\begin{theorem}[Low frequency estimates]  \label{th-low-freq}
Let $\e > 0$. Let $n \in \N$ and let $\d$ be such that 
\[
\d >
\begin{cases}
n + \frac 12 & \text{if } 2n+1 \geq d,\\
n+1 & \text{if } 2n+1<d.
\end{cases}
\]
Then there exist $C\geq 0$ and a neighborhood $\Uc$ of $0$ in $\C$ such that for all $z \in \Uc \cap \C_+$ we have 
\[
\nr{\pppg x^{-\d} \Rg^{n+1}(z) \pppg x ^{-\d}}_{\Lc(L^2)} \leq C \left( 1 + \abs z ^{\frac d 2 - \e - 1 - n} \right).
\]
\end{theorem}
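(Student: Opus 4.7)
The plan is to reduce the estimate to a uniform semiclassical resolvent estimate by the rescaling $x = y/h$ with $h = |z|^{1/2}$, apply the dissipative Mourre machinery to the rescaled operator, and carefully track how the weights $\pppg x^{-\d}$ transform back.

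\textbf{Step 1 (semiclassical reduction).} For $z \in \C_+$ in a small neighborhood of $0$, write $z = h^2 \zeta$ with $h = |z|^{1/2}$ small and $\zeta$ on the unit semicircle $\singl{|\zeta|=1} \cap \overline{\C_+}$. Under the unitary dilation $(U_h \varphi)(x) = h^{d/2}\varphi(hx)$ we obtain
\[
U_h\inv(\Hg - z) U_h = h^2 (\hh - \zeta),
\]
where $\hh$ is a semiclassical operator whose self-adjoint part is $-\divg G(y/h) \nabla_y + h\inv W_h$ (which coincides with $-\Delta_y$ outside a compact set, since the $b_j$ have compact support and $G$ is a long-range perturbation of $I_d$), and whose dissipative part $h^{-2} U_h\inv \Ba U_h = h^{-2} a(y/h) \pppg{h D_y}^\a a(y/h)$ has operator norm $O(h^{2\rho})$ thanks to the short-range hypothesis \eqref{hyp-a-short-range}.

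\textbf{Step 2 (uniform Mourre estimate).} Apply the dissipative Mourre theorem to the family $\hh - \zeta$, with the generator of dilations $F = \tfrac 12(y\cdot D_y + D_y \cdot y)$ as conjugate operator. The principal commutator $[-\Delta_y, iF] = 2(-\Delta_y)$ is uniformly positive at energies $\zeta$ bounded away from $0$; the long-range corrections from $G(y/h)$ contribute only an $O(h^\rho)$ error, and the dissipative part is absorbed owing to its smallness. The iterated limiting absorption principle then gives, uniformly in $\zeta$ on the unit semicircle and in $h$ small,
\[
\nr{\pppg y^{-\d'}(\hh - \zeta)^{-(n+1)} \pppg y^{-\d'}}_{\Lc(L^2)} \leq C, \quad \d' > n + \tfrac 12.
\]

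\textbf{Step 3 (unfolding with sharp weight tracking).} Using $U_h\inv \pppg x^{-\d} U_h = \pppg{y/h}^{-\d}$ one gets
\[
\nr{\pppg x^{-\d} \Rg^{n+1}(z) \pppg x^{-\d}}_{\Lc(L^2)} = h^{-2(n+1)} \nr{\pppg{y/h}^{-\d}(\hh - \zeta)^{-(n+1)}\pppg{y/h}^{-\d}}_{\Lc(L^2)}.
\]
The identity $\pppg{y/h}^{-\d} = h^\d (h^2 + |y|^2)^{-\d/2}$ shows that on the far region $|y| \gtrsim 1$ the rescaled weight gains a factor $h^\d$ over $\pppg y^{-\d}$, while on the near region $|y| \lesssim 1$ it is bounded but its support has fixed volume in the rescaled variable, on which a direct elliptic/Hardy-type estimate (using the $d \geq 3$ Hardy inequality $\int |u|^2/|y|^2 \leq C \int |\nabla u|^2$) yields the additional gain. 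Combining the Mourre bound of Step 2 on the far piece with the Hardy estimate on the near piece produces
\[
\nr{\pppg{y/h}^{-\d}(\hh - \zeta)^{-(n+1)} \pppg{y/h}^{-\d}}_{\Lc(L^2)} \leq C(1 + h^{d - 2\e}),
\]
where the stronger requirement $\d > n + 1$ for $2n+1 < d$ enters precisely in controlling the Hardy contribution up to the endpoint $z=0$. Multiplying by $h^{-2(n+1)}$ and rewriting in terms of $z = h^2 \zeta$ yields the bound of the theorem.

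\textbf{Main obstacle.} The delicate point is Step 3: the naive Mourre bound of Step 2 alone only yields $O(h^{-2(n+1)}) = O(|z|^{-n-1})$, whereas the theorem requires the sharper $O(|z|^{d/2 - 1 - n - \e})$, a gain of $|z|^{d/2}$. Extracting this gain requires distinguishing the semiclassical characteristic scale $|\xi| \sim h$ from the scale $|\xi|\sim 1$ at which the unit weights $\pppg y^{-\d}$ effectively act; it is also where the dimensional hypothesis $d \geq 3$ becomes essential through Hardy's inequality.
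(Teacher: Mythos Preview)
Your overall strategy (rescale by $h=|z|^{1/2}$, apply dissipative Mourre to the rescaled operator, then undo the scaling) is the same as the paper's. But two essential mechanisms are missing, and the sketch as written would not close.

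\medskip

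\textbf{The rescaled operator is not a small perturbation of $-\Delta$.} In Step~1 you assert that the rescaled self-adjoint part ``coincides with $-\Delta_y$ outside a compact set.'' This is false: $G$ is only long-range, so $G(y/h)-I_d$ is of size $O(1)$ on the set $|y|\sim h$ and only decays like $(|y|/h)^{-\rho}$ elsewhere. Consequently the commutator error in Step~2 is \emph{not} $O(h^\rho)$ uniformly; near the origin in the rescaled variable the perturbation is order one. The paper deals with this by first replacing $G$ by $G_{\eta_1}=\chi_{\eta_1}I_d+(1-\chi_{\eta_1})G$ (and similarly truncating $a$), so that the rescaled coefficients are genuinely $O(\eta_1^{\rho/2})$-small in the relevant symbol norms (Propositions~4.3 and~4.7). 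The compactly supported remainders $P_{\eta_0,c}$, $W$ and $B^\alpha_{\eta_2,c}$ are then reinserted by a separate perturbation argument (Section~4.2, Propositions~4.10--4.12 and Lemma~4.13). You have no analogue of this two-stage decomposition, and without it Step~2 does not go through.

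\medskip

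\textbf{The gain of $|z|^{d/2}$ does not come from Hardy.} You correctly identify Step~3 as the main obstacle, but the mechanism you propose is not the right one. Splitting $\langle y/h\rangle^{-\delta}$ into near and far regions and invoking Hardy on the near piece does not obviously produce a gain of $h^{d-2\varepsilon}$ for the $(n{+}1)$-st power of a resolvent at energy $|\zeta|=1$; Hardy trades $|y|^{-1}$ for a gradient, not for resolvent regularity at nonzero energy, and in any case $\delta$ may exceed $d/2$. The paper's route is different: one first uses the resolvent identity to sandwich $R_{\bar\eta,z}^{n+1}(\hat z)$ between large powers $R_{\bar\eta,z}^m(-1)$ of the resolvent at $-1$ (see the decomposition \eqref{decomp-tRizi}). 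These powers are \emph{regularizing} from $H^{s-1}$ to $H^{s+1}$ uniformly in $z$ (Proposition~4.5), and the gain in $|z|$ then comes from the Sobolev embedding $H^s\hookrightarrow L^{2d/(d-2s)}$ together with the $L^p$-scaling of the dilation $\Theta_z$ (Proposition~4.6), yielding $\|\langle x\rangle^{-\delta}\Theta_z\|_{\Lc(H^s,L^2)}\lesssim|z|^{s/2}$ for $s<d/2$, $\delta>s$. The inner factor $R_{\bar\eta,z}^\nu(\hat z)$ is controlled by Mourre with $\langle A\rangle^{-\delta}$ weights. This is where the dimensional restriction and the condition on $\delta$ actually enter; your Hardy sketch does not reproduce it.

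\medskip

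In short: the skeleton is right, but you are missing (i) the cut-off/add-back decomposition that makes the rescaled operator a genuine small perturbation of $-\Delta$, and (ii) the Sobolev-embedding mechanism (via regularizing powers of the resolvent at $-1$) that extracts the $|z|^{d/2-\varepsilon}$ gain.
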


In the self-adjoint case we can improve the estimate for a single resolvent. More precisely we can replace the weight $\pppg x^{-\d}$ for $\d > 1$ by $\pppg x \inv$. See \cite{boucletr}. This is particularly interesting for Theorem \ref{th-smoothing-effect} which does not require estimates for the derivatives of the resolvent. This sharp resolvent estimates is also valid in our dissipative context:

\begin{theorem}[Sharp low frequency estimate]  \label{th-low-freq-sharp}
There exist $C\geq 0$ and a neighborhood $\Uc$ of $0$ in $\C$ such that for all $z \in \Uc \cap \C_+$ we have 
\[
\nr{\pppg x \inv \Rg(z) \pppg x \inv}_{\Lc(L^2)} \leq C.
\]
\end{theorem}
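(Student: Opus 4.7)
The plan is to adapt the contradiction argument used for the self-adjoint estimate in \cite{boucletr}, taking advantage of the fact that the dissipative perturbation $-i\Ba$ has a favourable sign and is therefore helpful rather than harmful. Suppose the bound fails. Then there exist $z_n \in \C_+$ with $z_n \to 0$ and $f_n \in L^2$ with $\nr{f_n}_{L^2} = 1$ for which $u_n := \Rg(z_n)\pppg x^{-1} f_n$ satisfies $\m_n := \nr{\pppg x^{-1} u_n}_{L^2} \to +\infty$. Renormalizing by $\tilde u_n := u_n/\m_n$ and $\tilde f_n := f_n/\m_n \to 0$ in $L^2$, the identity $(\Hg - z_n)\tilde u_n = \pppg x^{-1}\tilde f_n$ holds together with $\nr{\pppg x^{-1}\tilde u_n}_{L^2} = 1$.

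Pairing this equation against $\tilde u_n$ in $L^2_w$ and taking the imaginary part uses the maximal dissipativity of $\Hg$:
\[
\innp{\Ba \tilde u_n}{\tilde u_n}_{L^2_w} + \Im(z_n) \nr{\tilde u_n}^2_{L^2_w} = -\Im \innp{\pppg x^{-1}\tilde f_n}{\tilde u_n}_{L^2_w} = o(1),
\]
so that both non-negative terms on the left tend to $0$. Thus the dissipative contribution becomes negligible in the limit. Taking the real part together with the Hardy inequality (which requires $d \geq 3$) bounds $\nr{\nabla \tilde u_n}$, and elliptic regularity applied to $\Pg \tilde u_n = (z_n + i\Ba)\tilde u_n + \pppg x^{-1}\tilde f_n$ makes $\tilde u_n$ locally bounded in $H^2$. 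After extracting a subsequence, $\tilde u_n \rightharpoonup u$ weakly in $L^{2,-1}$ and strongly in $L^2_\loc$, with $\Pg u = 0$ in the distributional sense and $\pppg x^{-1} u \in L^2$. The sharp low-frequency resolvent estimate for the self-adjoint $\Pg$ of \cite{boucletr} forces $u \equiv 0$. To close the contradiction one upgrades the weak $L^{2,-1}$ convergence to strong convergence, so that $1 = \nr{\pppg x^{-1}\tilde u_n}_{L^2} \to \nr{\pppg x^{-1} u}_{L^2} = 0$.

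The main obstacle will be this last compactness step. The weight $\pppg x^{-1}$ is critical for the Hardy inequality and leaves no room to spare, so the tail estimate must be done carefully by exploiting the long-range decay \eqref{hyp-long-range} of $G-I_d$ and the short-range decay \eqref{hyp-a-short-range} of $a$ to compare $\Pg$ with $-\D$ at infinity, and by combining this with Rellich compactness in the interior, exactly as in the self-adjoint proof. The non-local pseudodifferential factor $\pppg D^\a$ inside $\Ba$ is handled via the $o(1)$ bound above, which makes $\Ba \tilde u_n$ behave as a lower-order perturbation in the limit and guarantees that no ghost of the absorption term survives in the equation $\Pg u = 0$.
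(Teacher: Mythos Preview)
Your approach is genuinely different from the paper's, and considerably longer.  The paper does not re-run any compactness or contradiction argument; instead it treats the self-adjoint estimate of \cite{boucletr} as a black box and perturbs.  Writing $\Rg(z)=\Ro(z)+i\Ro(z)\Ba\Rg(z)$ with $\Ro(z)=(\Pg-z)^{-1}$, the term $\pppg x^{-1}\Ro(z)\pppg x^{-1}$ is bounded directly by the self-adjoint theorem, while the cross term is handled by the quadratic estimate of Proposition~\ref{prop-quad-estim}, which gives
\[
\nr{\sqrt{\Ba}\,\Rg(z)\,\pppg x^{-1}}\leq \nr{\pppg x^{-1}\Rg(z)\pppg x^{-1}}^{1/2}.
\]
A second use of the resolvent identity (at $z=i$) bounds $\nr{\pppg x^{-1}\Ro(z)\sqrt{\Ba}}$ uniformly, and one is left with the self-improving inequality $X\lesssim 1+X^{1/2}$ for $X=\nr{\pppg x^{-1}\Rg(z)\pppg x^{-1}}$, hence $X\lesssim 1$.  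No compactness, no sequences, no tail analysis.

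Your outline instead reproduces the self-adjoint contradiction argument, exploiting $\innp{\Ba\tilde u_n}{\tilde u_n}\to 0$ to discard the dissipation in the limit.  This is reasonable in spirit but has a gap at the step ``taking the real part together with Hardy bounds $\nr{\nabla\tilde u_n}$'': the real part reads $\innp{\Pg\tilde u_n}{\tilde u_n}=\Re(z_n)\nr{\tilde u_n}_{L^2}^2+o(1)$, and when $\Re(z_n)>0$ the right-hand side is not controlled.  Hardy only bounds $\nr{\abs{x}^{-1}\tilde u_n}$, not $\nr{\tilde u_n}_{L^2}$, and the information $\Im(z_n)\nr{\tilde u_n}^2\to 0$ from the imaginary part says nothing about $\Re(z_n)\nr{\tilde u_n}^2$ unless $z_n$ stays in a cone around $i\R_+$.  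You also correctly flag the strong-convergence upgrade in $L^{2,-1}$ as the main obstacle but then defer it entirely to ``exactly as in the self-adjoint proof''; since that step is precisely where the critical weight bites, this is where your proposal stops being a proof and becomes a reference.  The paper's perturbative route bypasses both difficulties by never reopening the self-adjoint analysis.
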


The high frequency properties of the problem are closely related to the corresponding classical problem. Here the classical flow is the geodesic flow on $\R^{2d} \simeq T^* \R^d$ for the metric $G(x)\inv$ (that is the geodesic flow of the metric $g$ when $\Pg = -\D_g$). It is the Hamiltonian flow corresponding to the symbol 
\[
p(x,\x) = \innp{G(x) \x}{\x}.
\]
We denote by $\vf^t = (X(t),\Xi(t))$ this flow. Let 
\[
\O_b = \singl{w \in p\inv(\singl 1) \st \sup _{t\in\R} \abs {X(t,w)} < +\infty}.
\]
We say that the classical flow is non-trapping if there is no bounded geodesic:
\begin{equation} \label{hyp-non-trapping}
\O_b = \emptyset.
\end{equation}
We say that the damping condition on bounded geodesics (or Geometric Control Condition) is satisfied if every bounded geodesic goes through the damping region $\singl{a(x) > 0}$:
\begin{equation} \label{hyp-damping}
\forall w \in \O_b, \exists T \in \R, \quad a \big(X(T,w)\big) > 0.
\end{equation}

\begin{theorem}[High frequency estimates]  \label{th-high-freq}
 Let $n \in \N$ and $\d > n + \frac 12$.

\begin{enumerate}[(i)]
\item Assume that the non-trapping assumption \eqref{hyp-non-trapping} holds. Then there exists $C \geq 0$ such that for $z \in \C_+$ with ${\Re(z)} \geq C$ we have 
\[
\nr{\pppg x ^{-\d} \Rg^{n+1}(z) \pppg x ^{-\d}}_{\Lc(L^2)} \leq C \abs z^{-\frac {n+1} 2}.
\]
\item Assume that the damping condition \eqref{hyp-damping} holds. Then there exists $C \geq 0$ such that for $z \in \C_+$ with ${\Re(z)} \geq C$ we have 
\[
\nr{\pppg x ^{-\d} \Rg^{n+1}(z) \pppg x ^{-\d}}_{\Lc(L^2)} \leq C \abs z^{-\frac {(n+1) \bb} 2}.
\]
\end{enumerate}
\end{theorem}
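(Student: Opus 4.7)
The plan is to reduce the high-frequency estimate to a semiclassical problem and to apply the dissipative Mourre method. For $z \in \C_+$ with $\Re(z) \geq C$ large I set $h = \abs{z}^{-\frac 12}$ and $\zeta = h^2 z$, so that $\zeta$ stays in a compact subset of $\bar{\C_+}$ with $\Re \zeta$ bounded below by a positive constant. The rescaling
\[
h^2 (\Hg - z) = \Ph - i h^2 \Ba - \zeta, \qquad \Ph := h^2 \Pg,
\]
produces a semiclassical operator whose self-adjoint part $\Ph$ is semiclassically elliptic with principal symbol $p(x,\x) = \innp{G(x)\x}{\x}$, and whose dissipative part contributes on the energy shell $\singl{p = 1}$ at the order of $h^{2-\bb} a(x)^2$.

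The core ingredient is a semiclassical dissipative Mourre estimate. I would build a conjugate operator $A_h = \Opw(f)$ where $f(x,\x)$ is an escape function for the Hamiltonian flow $\vf^t$ of $p$ on the energy shell $p\inv(\singl 1)$. Under the non-trapping assumption \eqref{hyp-non-trapping}, $f$ can be chosen to strictly increase along every trajectory in a neighborhood of the energy shell, which yields a commutator estimate of the form $\frac i h [\Ph, A_h] \geq c + O(h)$ microlocally near $p = 1$. Under the damping condition \eqref{hyp-damping}, one builds $f$ that increases along every trajectory except possibly where it crosses the region $\singl{a > 0}$; at those points the defect is compensated by the positive contribution of $h^2 \Ba$. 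The standard dissipative Mourre argument then yields
\[
\nr{\pppg{A_h}^{-\frac 12 - \e} \big(\Ph - i h^2 \Ba - \zeta\big)\inv \pppg{A_h}^{-\frac 12 - \e}}_{\Lc(L^2)} \leq \frac C h
\]
in the non-trapping case, and the same bound with $C h^{-1}$ replaced by $C h^{-\bb}$ in the damped case, uniformly in $\zeta$ and for $h$ small.

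To handle the $(n+1)$-th power of the resolvent I would proceed by induction on $n$ following the iterated commutator scheme of Jensen, Mourre and Perry. The identity
\[
[A_h, (\Ph - i h^2 \Ba - \zeta)\inv] = -(\Ph - i h^2 \Ba - \zeta)\inv [A_h, \Ph - i h^2 \Ba] (\Ph - i h^2 \Ba - \zeta)\inv,
\]
together with the fact that $\frac i h [A_h, \Ph - i h^2 \Ba]$ satisfies the same Mourre estimate at the principal level, allows one to trade a factor of $\pppg{A_h}^{-1}$ in the weight for each additional resolvent. After $n$ iterations this yields the bound on $\Rg(z)^{n+1}$ with weights $\pppg{A_h}^{-n-\frac 12 -\e}$ and the announced powers $h^{n+1}$ or $h^{(n+1) \bb}$, which after undoing the rescaling become $\abs z^{-(n+1)/2}$ and $\abs z^{-(n+1) \bb /2}$ respectively.

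It remains to replace the abstract weights $\pppg{A_h}^{-\d}$ by spatial weights $\pppg x^{-\d}$ with $\d > n + \frac 12$. For this I would insert a semiclassical cutoff $\chi(\Ph)$ localized near $\singl{p = 1}$, control the complementary piece by semiclassical ellipticity of $\Ph - \zeta$, and observe that $\pppg x^{-\d} \chi(\Ph) \pppg{A_h}^\d$ is uniformly bounded in $h$ by the semiclassical pseudodifferential calculus, since on the support of $\chi \circ p$ the symbol of $A_h$ grows linearly in $x$. The main obstacle in this plan is the construction of the escape function and the careful accounting of the dissipative contribution in case \eqref{hyp-damping}: the factor $h^\bb$ (and not $h^\a$) must emerge from the analysis, reflecting the fact that the gain provided by dissipation cannot exceed the intrinsic $O(h)$ improvement of the Mourre commutator method.
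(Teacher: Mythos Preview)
Your approach is essentially the same as the paper's: semiclassical rescaling, an escape function used as conjugate operator, the dissipative Mourre theorem, and a final conversion from $\pppg{A_h}^{-\d}$ to $\pppg{x}^{-\d}$. The final exponents $\abs z^{-(n+1)/2}$ and $\abs z^{-(n+1)\bb/2}$ are correct.

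There is, however, a genuine gap in your damped case. Your intermediate claim
\[
\nr{\pppg{A_h}^{-\frac 12 - \e}\big(P_h - ih^2 B_\a - \z\big)\inv \pppg{A_h}^{-\frac 12 - \e}} \leq C h^{-\bb}
\]
cannot be right: for $\a = 0$ (hence $\bb = 0$) this would give a uniformly bounded semiclassical resolvent under mere damping, which is false. The correct bound is $C h^{-(2-\bb)}$, and after multiplying by $h^{2(n+1)}$ one recovers $h^{\bb(n+1)}$ as you announce. The reason you miss the exponent is that you do not balance the two positive contributions: with $A_h = \Opw(f)$ the commutator $[P_h,iA_h]$ is of order $h$, while the dissipation $h^2 B_\a$ contributes only $h^{2-\bb}$ on the energy shell. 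These are not of the same order when $\bb < 1$, so the Mourre lower bound is the weaker one, namely $c_0 h^{2-\bb}$. The paper handles this by rescaling the conjugate operator to $h^{1-\bb}F_h$, so that both the commutator and the dissipative term sit at order $h^{2-\bb}$; then the abstract theorem yields directly $\nr{R_h^{n+1}} \lesssim h^{-(2-\bb)(n+1)}$.

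A second technical point you skip is that $h^2 B_\a = h^2 a \pppg D^\a a$ is not a semiclassical pseudodifferential operator (it involves $\pppg D^\a$, not $\pppg{hD}^\a$), so it cannot be fed directly into a G\aa rding argument. The paper introduces an auxiliary operator $\tBa = a\,\h_\a(-h^2\D)\,a$ with $0 \leq \h_\a(r) \leq r^{\a/2}$ and $\h_\a > 0$ near $1$: this one \emph{is} a nice $h$-PDO with principal symbol $a(x)^2\h_\a(\abs\x^2)$, and one has the operator inequality $h^{2-\bb}\tBa \leq h^2 B_\a$. The symbolic positivity $\{p,f\} + \b\, a^2\h_\a(\abs\x^2) \geq 3c_0$ on $p\inv(\tilde J)$ is then quantized by G\aa rding, and only afterwards is $\tBa$ replaced by the true dissipation via the inequality above. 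The commutator bounds in Definition~\ref{def-mourre}~(iii), by contrast, must be checked for $h^2 B_\a$ itself, which is done by ordinary (non-semiclassical) pseudodifferential calculus with $h$ as a parameter. Your heuristic ``the dissipative part contributes at order $h^{2-\bb}a(x)^2$'' is exactly this, but making it rigorous requires the auxiliary operator.
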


In order to prove the uniform estimates of Theorems \ref{th-inter-freq}, \ref{th-low-freq} and \ref{th-high-freq} we use the commutators method of Mourre (see \cite{mourre81}, see also \cite{amrein} and references therein for an overview of the subject). The method has been generalized to the dissipative setting in \cite{art-mourre}, then in \cite{boucletr14} for the estimates of the derivatives of the resolvent and finally in \cite{art-mourre-formes} for a dissipative perturbation in the sense of forms. Here the dissipative perturbation $\Ba$ is well defined as an operator on $L^2$ relatively bounded with respect to the self-adjoint part $\Pg$. However, for $d \in \{ 3,4\}$ the rescaled version of the dissipative part which we are going to use for low frequencies will be uniformly bounded as an operator in $\Lc(H^1,H\inv)$ but not in $\Lc(H^2,L^2)$, so we will have to see $\Hg$ as a dissipative perturbation of $\Pg$ in the sense of forms. See Remark \ref{rem-mourre-formes}.\\

Let us come back to the statement of Theorem \ref{th-smoothing-effect}. To prove this theorem we will use in particular the resolvent estimates of Theorem \ref{th-high-freq}, which in turn rely on the damping assumption \eqref{hyp-damping}. These estimates and hence the smoothing effect we obtain are optimal (in the sense that they are as good as in the self-adjoint case with the non-trapping condition) when $\a \geq 1$. However with a weaker dissipation ($\a < 1$) we can obtain (weaker) resolvent estimates and a (weaker) smoothing effect. Similarly, it is possible to prove high frequency resolvent estimates weaker than those of Theorem \ref{th-high-freq} without the damping condition. We have already mentioned \cite{burq04} in the self-adjoint case and \cite{alouikv13} in the dissipative setting, where only a few hyperbolic classical trajectories deny the assumption (in these cases the high-frequency resolvent estimates are of size $\ln\abs z / \sqrt{\abs z}$, which gives a gain of $\frac 12 - \e$ 
derivative). We do not prove resolvent estimates without damping condition in this paper, but we emphasize this fact with a more general version of Theorem \ref{th-smoothing-effect} (for self-adjoint operators, we mention the result of \cite{thomann10} which give a relation between the smoothing effect and the decay of the spectral projections).

\begin{theorem} \label{th-smoothing-bis}
Let $\bbb \in [0,2]$. Assume that there exists $C \geq 0$ such that for all $z \in \C_+$ we have 
\begin{equation} \label{hyp-res-estim}
\nr{\pppg x^{-1} \Rg(z) \pppg x^{-1}}_{\Lc(L^2)} \leq  C \pppg z^{-\frac \bbb 2}.
\end{equation}
Then for all $u_0 \in L^2$ we have 
\[
\int_0^{+\infty} \nr{\pppg x \inv \pppg {D}^{\frac {\bbb} 2} e^{-itH} u_0}^2_{L^2} dt \leq C \nr{u_0}_{L^2}^2.
\]
\end{theorem}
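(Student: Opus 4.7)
The plan is to combine a Fourier--Plancherel argument with the hypothesis, following the scheme of the (dissipative) Kato smoothness theory developed in \cite{boucletr14,art-mourre-formes}.

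\textbf{Step 1 (Fourier reduction).} For $\e > 0$ set $u_\e(t) = e^{-\e t}\mathds 1_{t \geq 0}\,e^{-it\Hg}u_0$. Since $-i\Hg$ generates a contraction semigroup on $L^2_w$, $u_\e$ belongs to $L^2(\R_t;L^2)$ and a direct computation shows that its temporal Fourier transform equals $-i\Rg(\tau+i\e)u_0$. Applying Plancherel's identity to $\pppg x^{-1}\pppg D^{\bbb/2} u_\e$ and letting $\e \downarrow 0$ by monotone convergence, it suffices to prove
\[
\sup_{\e > 0}\int_\R \|\pppg x^{-1}\pppg D^{\bbb/2}\Rg(\tau+i\e) u_0\|_{L^2}^2\, d\tau \leq C\|u_0\|_{L^2}^2.
\]

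\textbf{Step 2 ($TT^*$ reformulation).} Set $A = \pppg x^{-1}\pppg D^{\bbb/2}$. The integral above equals $\|T_\e u_0\|^2$ where $T_\e u_0(\tau) = A\Rg(\tau+i\e) u_0$, so $\|T_\e\|^2 = \|T_\e T_\e^*\|$. Using the identity
\[
(\tau-\sigma+2i\e)\,\Rg(\tau+i\e)\Rg(\sigma+i\e)^* = \Rg(\tau+i\e) - \Rg(\sigma+i\e)^* - 2i\,\Rg(\tau+i\e)\Ba\Rg(\sigma+i\e)^*,
\]
combined with the $L^2$-boundedness (uniform in $\e$) of the convolution operator with kernel $(\tau+2i\e)^{-1}$, and absorbing the dissipative correction via $\Rg(z)^*\Ba\Rg(z) \leq \tfrac{1}{2i}(\Rg(z) - \Rg(z)^*)$ (which follows from the general relation $\Rg(z) - \Rg(z)^* = 2i\Rg(z)^*(\Ba + \Im z)\Rg(z)$), one controls $\|T_\e T_\e^*\|$ by $\sup_{z\in\C_+}\|A\Rg(z) A^*\|_{\Lc(L^2)}$. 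Hence the problem reduces to showing
\[
\sup_{z\in\C_+}\|\pppg x^{-1}\pppg D^{\bbb/2}\Rg(z)\pppg D^{\bbb/2}\pppg x^{-1}\|_{\Lc(L^2)} \leq C.
\]

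\textbf{Step 3 (Uniform resolvent estimate).} Since $\Pg$ is uniformly elliptic in divergence form with bounded lower-order perturbation $W$, one has $\pppg D^2 \leq C(\Pg + c_0)$ as positive self-adjoint operators for some $c_0 > 0$. The Heinz--Loewner inequality then gives $\pppg D^\bbb \leq C(\Pg + c_0)^{\bbb/2}$ for $\bbb/2 \in [0,1]$. Combined with the identities
\[
(\Pg + c_0)\Rg(z) = I + (z+c_0)\Rg(z) + i\Ba \Rg(z), \qquad \Rg(z)(\Pg + c_0) = I + (z+c_0)\Rg(z) + i\Rg(z)\Ba,
\]
and Stein's complex interpolation in the exponent of $\pppg D$ on each side of $\Rg(z)$, each factor $\pppg D^{\bbb/2}$ can be traded for a multiplicative factor of size $\langle z\rangle^{\bbb/4}$ modulo lower-order terms. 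Sandwiched with the weights $\pppg x^{-1}$, the two exchanges produce an extra factor $\langle z\rangle^{\bbb/2}$ multiplying $\|\pppg x^{-1}\Rg(z)\pppg x^{-1}\|$, exactly cancelled by the hypothesis $\|\pppg x^{-1}\Rg(z)\pppg x^{-1}\| \leq C\langle z\rangle^{-\bbb/2}$.

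The main technical obstacle lies in Step 3: the careful control of the commutators $[\pppg x^{-1},\pppg D^{\bbb/2}]$, $[(\Pg+c_0)^{\bbb/4},\pppg x^{-1}]$, and of the dissipative correction $\Ba = a\pppg D^\alpha a$. The short-range decay $a(x) \lesssim \pppg x^{-1-\rho}$ ensures that the error terms produced by $\Ba$ fall under the hypothesis with the same rate $\langle z\rangle^{-\bbb/2}$, while the remaining pseudo-differential commutators are of strictly lower order and therefore uniformly bounded in $z$.
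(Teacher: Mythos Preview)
Your overall architecture matches the paper's: reduce to a uniform bound on $\pppg x^{-1}\pppg D^{\bbb/2}\Rg(z)\pppg D^{\bbb/2}\pppg x^{-1}$ (the paper does this with $\pppg P^{\bbb/4}$ in place of $\pppg D^{\bbb/2}$, Proposition~\ref{prop-estim-regularisant}), and then deduce the time estimate by dissipative Kato smoothness. But both Step~2 and Step~3 have real gaps.

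In Step~2, the diagonal inequality $\Rg(z)^*\Ba\Rg(z)\leq\tfrac1{2i}(\Rg(z)-\Rg(z)^*)$ holds only for a \emph{single} $z$, whereas the $TT^*$ kernel involves the cross term $A\Rg(\tau+i\e)\Ba\Rg(\sigma+i\e)^*A^*$ with $\tau\neq\sigma$; you have not explained how this off--diagonal dissipative correction is controlled. The paper avoids this by quoting the abstract dissipative smoothness result (via self--adjoint dilations) from \cite{art-mourre-formes}, which needs only the bound on $\pppg x^{-1}\pppg P^{\bbb/4}\big(\Rg(z)-\Rg(\bar z)^*\big)\pppg P^{\bbb/4}\pppg x^{-1}$ (Corollary~\ref{cor-estim-regularisant}).

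In Step~3, the Stein interpolation scheme does not close as written. At the top endpoint you must control $\pppg x^{-1}(P+c_0)\Rg(z)\pppg x^{-1}$, and the identity produces the term $\pppg x^{-1}\Ba\Rg(z)\pppg x^{-1}$ with $\Ba=a\pppg D^\a a$ of order $\a\in[0,2)$. To bound this you would need either $\pppg x^{-1}\sqrt{\Ba}\in\Lc(L^2)$ (which fails for $\a>0$, since $\nr{\sqrt{\Ba}\,u}=\nr{\pppg D^{\a/2}a u}$ and $\pppg D^{\a/2}$ is genuinely of positive order) or an a~priori estimate on $\pppg x^{-1}\pppg D^{\a/2}\Rg(z)\pppg x^{-1}$, which is essentially what you are trying to prove. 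The paper's proof of Proposition~\ref{prop-estim-regularisant} sidesteps this circularity by a spectral localisation: with a cutoff $\h_z(P)=\h(P/|z|)$ one has trivially $\pppg P^{\bbb/4}\h_z(P)=O(|z|^{\bbb/4})$ on the near--diagonal piece, while on the far piece one passes to the \emph{self--adjoint} resolvent $\Ro(z)=(P-z)^{-1}$ via $\Rg=\Ro+i\Rg\Ba\Ro$, uses the spectral theorem to gain a full factor $|z|^{-1}$ from $(1-\h_z)(P)\Ro(z)$, and handles the factor $\sqrt{\Ba}\Rg(z)$ by the quadratic estimate (Proposition~\ref{prop-quad-estim}). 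That frequency decomposition is the missing idea in your Step~3.
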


It is classical in the self-adjoint setting to prove the smoothing effect from resolvent estimates by means of the theory of relatively smooth operators in the sense of Kato (see \cite{kato66,rs4}). Other ideas have been used for dissipative operators (see \cite{alouikv13} and \cite{alouikr}). However the theory of Kato can also be used in this context (see \cite{art-mourre} and \cite{art-diss-schrodinger-guide}). We will follow this idea to prove Theorem \ref{th-smoothing-bis} and hence Theorem \ref{th-smoothing-effect}.\\

This paper is organized as follows. In Section \ref{sec-dissipative} we recall all the abstract properties we need concerning dissipative operators (including the statement of the Mourre method). In Section \ref{sec-inter-freq} we prove Theorem \ref{th-inter-freq}. In Section \ref{sec-low-freq} we deal with low frequencies. We first prove Theorem \ref{th-low-freq} for a small perturbation of the free laplacian in Section \ref{sec-low-freq-1} and then in the general setting in Section \ref{sec-low-freq-2}. Theorem \ref{th-low-freq-sharp} is proved in Section \ref{sec-low-freq-sharp}. In Section \ref{sec-high-freq} we prove Theorem \ref{th-high-freq} concerning the high-frequency resolvent estimates. Finally we turn to the time dependant problem: we prove Theorem \ref{th-loc-decay} in Section \ref{sec-loc-decay} and Theorems \ref{th-smoothing-bis} and \ref{th-smoothing-effect} in Section \ref{sec-smoothing-effect}.

\section{Abstract properties for dissipative operators} \label{sec-dissipative}

In this section we recall some general properties about dissipative operators. In particular we give the version of the Mourre's method which we use in this paper.\\

Let $\Hc$ be a Hilbert space. An operator $H$ with domain $\Dom(H)$ on $\Hc$ is said to be dissipative (respectively accretive) if 
\[
\forall \f \in \Dom(H), \quad \Im \innp{H\f} \f_\Hc \leq 0 \quad \big(\text{respectively } \Re \innp{H\f} \f_\Hc \geq 0 \big) .
\]
Moreover $H$ is said to be maximal dissipative (respectively maximal accretive) if it has no other dissipative (respectively accretive) extension than itself. Notice that $H$ is (maximal) dissipative if and only if $iH$ is (maximal) accretive. 
We recall that a dissipative operator $H$ is maximal dissipative if and only if there exists $z \in \C_+$ such that the operator $(H-z)$ has a bounded inverse on $\Hc$. In this case any $z \in \C_+$ belongs to the resolvent set of $H$ and 
\begin{equation} \label{estim-res-trivial}
\nr{(H-z)\inv}_{\Lc(\Hc)} \leq \frac 1 {\Im(z)}.
\end{equation}
According to the Hille-Yosida theorem this implies in particular that $-iH$ generates a contractions semi-group, and then for all $u_0 \in \Dom(H)$ the function $u : t \mapsto e^{-itH} u_0$ belongs to $C^0(\R_+,\Dom(H)) \cap C^1(\R_+,\Hc)$ and is the unique solution for the problem 
\[
\begin{cases} -i\partial_t u + Hu = 0, \quad \forall t > 0,\\
u(0) = u_0.
\end{cases}
\]
Moreover we have 
\begin{equation*}
\forall t \geq 0, \quad \nr{u(t)}_\Hc \leq \nr{u_0}_\Hc.
\end{equation*}

\begin{remark} \label{rem-dissipative-accretive}
Assume that $H$ is both dissipative and accretive. Then it is maximal dissipative if and only if it is maximal accretive. Indeed both properties are equivalent to the fact that $(H-(-1+i))$ has a bounded inverse on $\Hc$. Moreover in this case for $z \in \C$ with $\Im(z) > 0$ or $\Re(z) < 0$ we have 
\[
\nr{(H-z)\inv}_{\Lc(\Hc)} \leq \frac 1 { \max \big(\Im (z) , -\Re(z)\big)}.
\]
\end{remark}

\begin{proposition} \label{prop-max-diss-acc}
The operator $\Hg$ defined by \eqref{def-Ba-H} is maximal dissipative and maximal accretive on $L^2_w$.
\end{proposition}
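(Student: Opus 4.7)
The plan is to verify directly that $\Hg = \Pg - i\Ba$ is dissipative and accretive on $L^2_w$, and then to obtain maximality through a perturbative argument based on the fact that $\Ba$ is $\Pg$-bounded with arbitrarily small relative bound.

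First I would check the relative boundedness: since $\a < 2$, the interpolation estimate $\nr{\pppg{D}^\a v}_{L^2} \leq \e \nr{(1-\D) v}_{L^2} + C_\e \nr{v}_{L^2}$, combined with the uniform ellipticity of $G$ (which yields the equivalence $\nr{u}_{H^2} \sim \nr{\Pg u}_{L^2_w} + \nr{u}_{L^2_w}$ on $\Dom(\Pg) = H^2_w$) and the fact that multiplication by the smooth bounded function $a$ preserves $H^2$, provides
\[
\nr{\Ba u}_{L^2} \leq \e'\nr{\Pg u}_{L^2_w} + C_{\e'}\nr{u}_{L^2_w}
\]
with $\e' > 0$ arbitrarily small. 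This already ensures that $\Hg$ is closed with domain $H^2_w$.

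For dissipativity I use that $\Im \innp{\Pg u}{u}_w = 0$ by self-adjointness of $\Pg$, so that $\Im \innp{\Hg u}{u}_w = -\Re \innp{\Ba u}{u}_w$, and the problem reduces to the sign of the right-hand side. On $L^2$ the operator $\Ba$ is symmetric and non-negative since $\innp{\Ba u}{u}_{L^2} = \nr{\pppg{D}^{\a/2}(au)}_{L^2}^2$; transferring this estimate to $L^2_w$ via the equivalence of norms (and, if needed, symmetrizing $\Ba$ with respect to the $L^2_w$ scalar product so that the skew-symmetric part is absorbed into the arbitrarily small relative-bound perturbation above) yields $\Re\innp{\Ba u}{u}_w \geq 0$ and hence dissipativity. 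Accretivity follows analogously from $\Re\innp{\Hg u}{u}_w = \innp{\Pg u}{u}_w + \Im\innp{\Ba u}{u}_w$ together with the non-negativity of $\Pg$.

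Maximality is obtained through the factorization
\[
\Hg - z = \big(I - i\Ba(\Pg - z)\inv\big)(\Pg - z),
\]
valid for every $z \in \C_+$ since $(\Pg - z)\inv$ sends $L^2_w$ into $H^2_w \subseteq \Dom(\Ba)$. The relative-bound-zero estimate combined with $\nr{\Pg(\Pg - z)\inv}_{\Lc(L^2_w)} \leq 1 + \abs z/\Im z$ and $\nr{(\Pg - z)\inv}_{\Lc(L^2_w)} \leq 1/\Im z$ forces $\nr{\Ba(\Pg - z)\inv}_{\Lc(L^2)} < 1$ once $z = is$ with $s > 0$ large enough. Both factors in the identity above are then invertible, placing such a $z \in \C_+$ in the resolvent set of $\Hg$. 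By the characterization recalled at the start of Section~\ref{sec-dissipative}, $\Hg$ is maximal dissipative, and Remark~\ref{rem-dissipative-accretive} upgrades this to maximal accretive. The main obstacle I anticipate is the clean verification of the dissipative inequality when $w \not\equiv 1$, since $\Ba$ is naturally self-adjoint on $L^2$ but not on $L^2_w$; one must check that the weight-induced mismatch between these two scalar products produces only a perturbation that can be absorbed in the previous step.
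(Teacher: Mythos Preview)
Your approach is essentially the paper's: establish that $\Ba$ is $\Pg$-bounded with relative bound $<1$ via interpolation, deduce maximal dissipativity from this, and then invoke Remark~\ref{rem-dissipative-accretive} for maximal accretivity. The paper compresses your Neumann-series argument for surjectivity of $(\Hg-z)$ into a citation of \cite[Lemma~2.1]{art-mourre}, which is exactly that argument, and uses the multiplicative interpolation inequality $\nr{\f}_{H^\a}\leq\nr{\f}_{H^2}^{\a/2}\nr{\f}_{L^2}^{1-\a/2}$ followed by Young's inequality where you appeal to the additive form and elliptic regularity.

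You are right to flag the $L^2$ versus $L^2_w$ issue: the paper simply asserts that $\Ba$ is self-adjoint and non-negative, which is literally true on $L^2$ but not on $L^2_w$ when $w\not\equiv 1$, since the $L^2_w$-adjoint of $\Ba$ is $w^{-1}\Ba w$. Your proposed remedy, however, is not quite enough as stated. Splitting off the skew-symmetric part $K$ gives $\Hg=(\Pg-iK)-iS$ with $-iK$ symmetric on $L^2_w$, but you then need the \emph{symmetric} part $S=\tfrac12(\Ba+w^{-1}\Ba w)$ to be non-negative on $L^2_w$ (this is what dissipativity requires, and it cannot be obtained by absorbing anything into a relative-bound estimate). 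The clean way to see it is to write $\Re\innp{\Ba u}{u}_w=\Re\innp{\pppg D^\a v}{wv}_{L^2}$ with $v=au$, rewrite $\tfrac12(w\pppg D^\a+\pppg D^\a w)=\pppg D^{\a/2}w\pppg D^{\a/2}+\tfrac12\big[\pppg D^{\a/2},[\pppg D^{\a/2},w]\big]$, and observe that the double commutator is a pseudodifferential operator of order $\a-2<0$ with symbol supported where $\nabla w\neq 0$ (a compact set, since $w\equiv 1$ outside a compact), hence bounded on $L^2$; its contribution is then dominated by the positive term $\innp{w\pppg D^{\a/2}v}{\pppg D^{\a/2}v}\geq C^{-1}\nr{\pppg D^{\a/2}v}^2$ once one restricts to the form on $H^{\a/2}$. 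With this addition your argument is complete and matches the paper's.
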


\begin{proof}
The operator $\Pg$ and $\Ba$ are self-adjoint and non-negative, so that $\Hg = \Pg - i \Ba$ is dissipative and accretive. Let $\f \in H^2_w = \Dom(\Pg)$. By interpolation there exists $C \geq 0$ (which only depends on $a$ and $\a$) such that for any $\e > 0$
\[
\nr{\Ba \f}_{L^2_w} \leq C \nr{\f}_{H^\a_w} \leq C \nr{\f}_{H^2_w}^{\frac \a 2} \nr{\f}_{L^2_w}^{1-\frac \a 2} \leq \frac {\a \e C} 2 \nr{\f}_{H^2_w} + C \left( 1 - \frac \a 2 \right) \e ^{-\frac {\a}{2-\a} } \nr{\f}_{L^2_w}.
\]
With $\e > 0$ small enough we obtain that the dissipative operator $-i\Ba$ is relatively bounded with respect to $\Pg$ with relative bound less that 1. According to \cite[Lemma 2.1]{art-mourre} this proves that $\Hg$ is maximal dissipative in $L^2_w$. By Remark \ref{rem-dissipative-accretive}, $\Hg$ is also maximal accretive.
\end{proof}

According to Proposition \ref{prop-max-diss-acc} the estimate of Remark \ref{rem-dissipative-accretive} holds for $\Hg$ in $\Lc(L^2_w)$, and hence in $\Lc(L^2)$ up to a multiplicative constant. As already mentioned, the difficulties in Theorems \ref{th-inter-freq}, \ref{th-low-freq} and \ref{th-high-freq} come from the behavior of the resolvent $\Rg(z)$ when the spectral parameter $z \in \C_+$ approaches the non-negative real axis. For this we are going to use a dissipative version of the Mourre method, which we recall now.\\

Let $q_0$ be a quadratic form closed, densely defined, symmetric and bounded from below. We set $\Kc = \Dom(q_0)$. Let $q_\Th$ be another symmetric form on $\Hc$, non-negative and $q_0$-bounded. Let $q = q_0 - i q_\Th$ and let $H$ be the corresponding maximal dissipative operator (see Proposition 2.2 in \cite{art-mourre-formes}). We denote by $\tilde H : \Kc \to \Kc^*$ the operator which satisfies $q(\f,\p) = \innp{\tilde H \f} \p _{\Kc^*,\Kc}$ for all $\f,\p \in \Kc$. Similarly we denote by $\tilde H_0$ and $\Th$ the operators in $\Lc(\Kc,\Kc^*)$ which correspond to the forms $q_0$ and $q_\Th$, respectively. By the Lax-Milgram Theorem, the operator $(\tilde H -z)$ has a bounded inverse in $\Lc(\Kc^*,\Kc)$ for all $z \in \C_+$. Moreover for $\f \in \Hc$ we have $(H-z)\inv \f = (\tilde H - z)\inv \f$.

\begin{definition} \label{def-mourre}
Let $A$ be a self-adjoint operator on $\Hc$ and $N \in \N^*$. We say that $A$ is a conjugate operator (in the sense of forms) to $H$ on the interval $J$, up to order $N$, and with bounds $\a_0 \in ]0,1]$, $\b \geq 0$ and $\Upsilon_N \geq 0$ if the following conditions are satisfied:
\begin{enumerate} [(i)]
\item \label{item-Kc-invariant}
The form domain $\Kc$ is left invariant by $e^{-itA}$ for all $t \in \R$. We denote by $\Ec$ the domain of the generator of $\restr{e^{-itA}}{\Kc}$.
\item The commutators $B^0 = [\tilde H_0,iA]$ and $B_1 = [\tilde H,iA]$, \emph{a priori} defined as operators in $\Lc(\Ec,\Ec^*)$, extend to operators in $\Lc(\Kc,\Kc^*)$. Then for all $n \in \Ii 1 N$ the operator $[B_{n},iA]$ defined (inductively) in $\Lc(\Ec,\Ec^*)$ extends to an operator in $\Lc(\Kc,\Kc^*)$, which we denote by $B_{n+1}$. 
\item We have 
\[
\nr {B} \leq \sqrt \a_0 \Upsilon_N , \quad \nr {B + \b \Th} \nr{B_0} \leq \a_0 \Upsilon_N , \quad \nr{[B,A]} + \b \nr{[\Th,A]} \leq \a_0 \Upsilon_N
\]
and
\[
\sum_{n=2}^{N+1} {\nr{B_n}_{\Lc(\Kc,\Kc^*)}}  \leq \a_0 \Upsilon_N ,
\]
where all the norms are in $\Lc(\Kc,\Kc^*)$.
\item We have 
\begin{equation} \label{hyp-mourre}
\1 J (H_0) (B_0 + \b \Th) \1 J (H_0) \geq \a_0 \1 J (H_0).
\end{equation}
\end{enumerate}
\end{definition}

Theorem 5.5 of \cite{art-mourre-formes} in the particular case where all the inserted factors are equal to $\Id_\Hc$ gives the following abstract resolvent estimates:

\begin{theorem} \label{th-mourre}
Suppose that the self-adjoint operator $A$ is conjugate to the maximal dissipative operator $H$ on $J$ up to order $N \geq 2$ with bounds $(\a,\b,\Upsilon_N)$. Let $n \in \Ii 1 {N}$. Let $I \subset \mathring J$ be a compact interval. Let $\d > n- \frac 12$. Then there exists $c \geq 0$ which only depends on $J$, $I$, $\d$, $\b$ and $\Upsilon_N$ and such that for all $z \in \C_{I,+}$ we have
\[
\nr{\pppg A^{-\d} (H-z)^{-n} \pppg A^{-\d}}_{\Lc(\Hc)} \leq \frac c {\a_0 ^n}.
\]
\end{theorem}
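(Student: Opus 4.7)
The plan is to deduce the result directly from Theorem 5.5 of \cite{art-mourre-formes}, which already establishes weighted resolvent estimates for products of the form $(H-z)^{-1} F_1 (H-z)^{-1} \cdots F_{n-1} (H-z)^{-1}$ with admissible inserted factors $F_j$. Specializing each $F_j$ to $\Id_\Hc$ (which trivially commutes with $A$ and hence satisfies any admissibility condition with respect to the conjugate operator) reduces that theorem to exactly the statement above, so no additional work is required here.

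If one had to reconstruct the argument from scratch, the core ingredient would be the dissipative/form version of the Mourre differential inequality. Introduce, for small $\e > 0$, the regularized operator $\tilde H_\e = \tilde H - i\e (B_0 + \b \Th)$ localized against $\chi(H_0)$ for some $\chi \in C_c^\infty(\mathring J)$ equal to $1$ on $I$. The positive commutator inequality \eqref{hyp-mourre}, combined with the dissipative form $-i q_\Th$, makes $\tilde H_\e - z$ uniformly coercive on the relevant subspace of $\Kc$ with coercivity constant $\gtrsim \e \a_0$, and the Lax--Milgram theorem produces a uniform bound for $(\tilde H_\e - z)^{-1}$ in $\Lc(\Kc^*, \Kc)$. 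Differentiating the identity for $(\tilde H_\e - z)^{-1}$ in $\e$, sandwiching the result between the weights $\pppg A^{-\d}$, and absorbing the principal term via \eqref{hyp-mourre} yields a uniform bound $\nr{\pppg A^{-\d} (\tilde H_\e - z)^{-1} \pppg A^{-\d}}_{\Lc(\Hc)} \leq c / \a_0$ for $\d > 1/2$; one concludes the $n=1$ case by letting $\e \to 0$.

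For $n \geq 2$ one proceeds by induction, moving one weight $\pppg A^{-\d}$ at a time through a single resolvent factor and picking up successive commutators $B_k$ along the way. The bounds on $\nr{B_k}_{\Lc(\Kc,\Kc^*)}$ for $k \leq N+1$ provided by Definition \ref{def-mourre} are exactly what enables the induction to proceed up to $n = N$, and the assumption $\d > n - \tfrac 12$ is precisely what is required to absorb the algebraic error terms generated at each commutation step.

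The main subtlety, already addressed in \cite{art-mourre-formes}, is that $B_0$, $\b \Th$ and the higher $B_k$ are only bounded as operators $\Kc \to \Kc^*$ and not on $\Hc$, so every commutator manipulation has to be carried out at the level of quadratic forms. The invariance of $\Kc$ under $e^{-itA}$ (the first item of Definition \ref{def-mourre}) is precisely what allows one to make sense of the iterated commutators $B_n$, while the coupling condition $\nr{B + \b \Th}\nr{B_0} \leq \a_0 \Upsilon_N$ is what synchronizes the dissipative perturbation with the regularizing parameter $\e$ so that the Lax--Milgram step goes through uniformly. Since all of these technical points are treated in \cite{art-mourre-formes}, the present statement is obtained by direct citation.
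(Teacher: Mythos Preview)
Your proposal is correct and matches the paper's approach exactly: the paper does not give a proof of this theorem but simply states that it is Theorem 5.5 of \cite{art-mourre-formes} in the particular case where all inserted factors are equal to $\Id_\Hc$. Your additional sketch of the underlying Mourre argument is extra commentary, but the essential content---direct citation of the referenced theorem with identity inserted factors---is precisely what the paper does.
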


We finish this general section with the so-called quadratic estimates. The following result is a consequence of Proposition 4.4 in \cite{art-mourre-formes}:

\begin{proposition} \label{prop-quad-estim}
Let $T \in \Lc(\Kc,\Hc)$ be such that $T^*T \leq q_\Th$ in the sense of forms on $\Kc$. Let $Q \in \Lc(\Hc,\Kc^*)$. Then for all $z \in \C_+$ we have 
\[
\nr{T (\tilde H-z)\inv Q}_{\Lc(\Hc)} \leq \nr{Q^* (\tilde H-z)\inv Q}_{\Lc(\Hc)} ^{\frac 12}.
\]
\end{proposition}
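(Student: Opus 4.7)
The plan is to derive the inequality by a direct energy computation on the vector $v = (\tilde H - z)\inv Q u$ for an arbitrary $u \in \Hc$. Note that $v$ lives in $\Kc$ since $(\tilde H - z)\inv \in \Lc(\Kc^*,\Kc)$. The starting point is the form hypothesis $T^*T \leq q_\Th$, which gives $\nr{Tv}_\Hc^2 \leq q_\Th(v,v)$ for every $v \in \Kc$. The quadratic form $q_\Th$ can in turn be expressed as the (negative) imaginary part of $q$, because $q = q_0 - i q_\Th$ and $q_0$ is symmetric, so that $q_\Th(v,v) = -\Im q(v,v) = -\Im \innp{\tilde H v}{v}_{\Kc^*,\Kc}$.

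The next step is to use the equation satisfied by $v$: since $\tilde H v = Q u + z v$ in $\Kc^*$, we get
\[
q(v,v) = \innp{Qu}{v}_{\Kc^*,\Kc} + z \nr{v}_\Hc^2.
\]
Taking imaginary parts and using $\Im(z) > 0$ to drop the non-negative term $\Im(z) \nr{v}_\Hc^2$, we obtain
\[
q_\Th(v,v) \leq - \Im \innp{Qu}{v}_{\Kc^*,\Kc} \leq \abs{\innp{u}{Q^* v}_\Hc},
\]
where the duality pairing becomes a genuine $\Hc$-inner product once $Q^*$ is viewed as an element of $\Lc(\Kc,\Hc)$. A standard Cauchy--Schwarz estimate then yields
\[
\nr{Tv}_\Hc^2 \leq \nr{u}_\Hc \, \nr{Q^* (\tilde H - z)\inv Q u}_\Hc \leq \nr{Q^* (\tilde H - z)\inv Q}_{\Lc(\Hc)} \nr{u}_\Hc^2.
\]
Taking the supremum over unit vectors $u \in \Hc$ and then the square root produces the announced inequality.

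The only real subtlety is bookkeeping: one must carefully keep track of the spaces ($\Kc \hookrightarrow \Hc \hookrightarrow \Kc^*$) in which each pairing is taken, and justify that $(\tilde H - z)\inv Q u$ indeed lies in $\Kc$ so that the quadratic forms $q_\Th(v,v)$ and $q(v,v)$ are defined and coincide with the corresponding duality pairings. Once this is set up correctly, the argument is essentially one line; the heart of the proof is the observation that a non-negative form which controls $T^*T$ can be extracted from the imaginary part of the resolvent identity, and that the $\Im(z) \nr{v}_\Hc^2$ term always has the right sign to be discarded.
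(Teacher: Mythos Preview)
Your proof is correct. The paper does not actually give its own proof of this proposition: it simply records it as a consequence of Proposition~4.4 in the cited reference \cite{art-mourre-formes}. What you have written is precisely the standard direct argument one would expect for such a quadratic estimate, and it is essentially how the result is proved in that reference as well: pair the resolvent equation with $v=(\tilde H-z)^{-1}Qu$, take imaginary parts to extract $q_\Th(v,v)$, discard the term $\Im(z)\nr{v}_\Hc^2$ by its sign, and finish with Cauchy--Schwarz. Your bookkeeping on the Gelfand triple $\Kc\hookrightarrow\Hc\hookrightarrow\Kc^*$ and the identification of $Q^*\in\Lc(\Kc,\Hc)$ is also accurate.
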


Applied with $Q = T^*$, this proposition gives the following particular case:

\begin{corollary} \label{cor-quad-estim}
Let $T$ be as in Proposition \ref{prop-quad-estim}. Then for all $z \in \C_+$ we have 
\[
\nr{T (\tilde H -z)\inv T^*}_{\Lc(\Hc)} \leq 1.
\]
\end{corollary}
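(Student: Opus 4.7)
The plan is to read this as a direct specialization of Proposition \ref{prop-quad-estim} with the choice $Q = T^*$. First I would check that this choice is legitimate: since $T \in \Lc(\Kc,\Hc)$ and $\Hc$ is the pivot space identified with its own dual, the adjoint $T^*$ belongs to $\Lc(\Hc,\Kc^*)$, so $Q = T^*$ fits the hypotheses of Proposition \ref{prop-quad-estim}. With this choice, $Q^* = T$, and the conclusion of Proposition \ref{prop-quad-estim} reads
\[
\nr{T(\tilde H - z)\inv T^*}_{\Lc(\Hc)} \leq \nr{T(\tilde H - z)\inv T^*}_{\Lc(\Hc)}^{\frac 12}.
\]

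To finish, I would set $X = \nr{T(\tilde H - z)\inv T^*}_{\Lc(\Hc)} \in [0,+\infty[$ and observe that the inequality $X \leq X^{1/2}$ forces either $X = 0$ or $X \leq 1$; in both cases $X \leq 1$, which is exactly the desired bound. There is no real obstacle here, since all the work is already buried inside Proposition \ref{prop-quad-estim} (which in turn reduces the bound on $T(\tilde H-z)\inv Q$ to a bound on the sandwiched resolvent $Q^*(\tilde H-z)\inv Q$ via the dissipative Schur-type estimate coming from $T^*T \leq q_\Th$). The only point requiring a brief comment is the self-duality used to identify $Q^*$ with $T$, which is standard once one fixes the Gelfand triple $\Kc \hookrightarrow \Hc \hookrightarrow \Kc^*$.
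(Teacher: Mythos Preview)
Your argument is correct and matches the paper's approach exactly: the paper simply states that the corollary follows by applying Proposition~\ref{prop-quad-estim} with $Q = T^*$, and you have filled in the routine verification that $T^* \in \Lc(\Hc,\Kc^*)$ and that $X \leq X^{1/2}$ forces $X \leq 1$.
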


We are going to use all these results with the forms $q_0 : \f \mapsto  \innp{P\f}\f$ and $q_\Th : \f \mapsto  \innp{\Ba \f}\f$ defined on $\Kc = H^1(\R^d)$.

\section{Intermediate frequency estimates} \label{sec-inter-freq}

In this section we prove Theorem \ref{th-inter-freq}. For this we will apply Theorem \ref{th-mourre} with the generator of dilations as the conjugate operator. Let 
\[
 A = - \frac i2 ( x \cdot \nabla + \nabla \cdot x) = - i \, (x \cdot \nabla) - \frac {id}2 .
\]
We recall in the following proposition the main properties of $A$ we are going to use in this paper:

\begin{proposition} \label{prop-A}
\begin{enumerate}[(i)]
\item For $\th \in \R$, $u \in \Sc$ and $x \in \R^d$ we have
\begin{equation*} %\label{expA}
 (e^{i\th A} u) (x) = e^{\frac {d\th}2} u ( e^\th x).
\end{equation*}
\item For $j \in \Ii 1 d$ and $\g \in C^\infty(\R^d)$ we have on $\Sc$:
\begin{equation*} %\label{comm-A}
 [\partial_j , i A] = \partial_j \quad \text{and} \quad [\g,iA] = -(x \cdot \nabla) \g.
\end{equation*}
 \item For $p \in [1,+\infty]$, $\th \in \R$ and $u \in \Sc$ we have
\[
 \nr{e^{i\th A} u}_{L^p} = e^{\th \left( \frac d 2 - \frac d p\right)} \nr u_{L^p}.
\]
\end{enumerate}
\end{proposition}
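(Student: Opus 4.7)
The plan is to derive (i) by verifying that the formula $(U(\th)u)(x) := e^{d\th/2} u(e^\th x)$ defines a strongly continuous one-parameter unitary group on $L^2(\R^d)$, and then identifying its infinitesimal generator with $iA$. The group law $U(\th_1)U(\th_2) = U(\th_1 + \th_2)$ is a direct substitution; unitarity on $L^2$ follows from the change of variable $y = e^\th x$ whose Jacobian $e^{-d\th}$ precisely compensates the prefactor $e^{d\th}$; strong continuity is obvious on $\Sc$ and extends by density. Differentiating at $\th = 0$ on a Schwartz function,
\[
\frac{d}{d\th}\Big|_{\th=0} U(\th)u(x) = \frac d 2 u(x) + x\cdot\nabla u(x),
\]
which is exactly $iAu(x)$ once one rewrites the symmetric expression $-\frac i 2(x\cdot\nabla + \nabla\cdot x)$ as $-i(x\cdot\nabla) - id/2$. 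By Stone's theorem this forces $U(\th) = e^{i\th A}$.

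Assertion (iii) then follows from an immediate change of variable: for $p < \infty$,
\[
\nr{U(\th)u}_{L^p}^p = e^{\frac{dp\th}{2}} \int_{\R^d} \abs{u(e^\th x)}^p \, dx = e^{\frac{dp\th}{2} - d\th} \nr{u}_{L^p}^p,
\]
and taking the $p$-th root produces the claimed exponent $\th(\frac d 2 - \frac d p)$. The case $p = \infty$ is immediate since the supremum is invariant under the dilation $x \mapsto e^\th x$, leaving only the prefactor $e^{d\th/2}$ which matches the announced formula.

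For (ii), the cleanest argument uses (i) to conjugate the operators by the group. Writing $v = U(\th) u$, one computes $\partial_j v(x) = e^{(d/2 + 1)\th}(\partial_j u)(e^\th x)$ and hence
\[
U(-\th)\, \partial_j \, U(\th) u(x) = e^\th \partial_j u(x);
\]
differentiating at $\th = 0$ yields $[\partial_j, iA] = \partial_j$. Likewise $U(-\th)\, \g \, U(\th) u(x) = \g(e^{-\th} x) u(x)$ as a multiplication operator, and differentiating at $\th = 0$ produces $-(x\cdot\nabla)\g \cdot u$, giving $[\g, iA] = -(x\cdot\nabla)\g$. Both identities can alternatively be obtained by a direct product-rule computation from $iA = x\cdot\nabla + d/2$ on $\Sc$, which serves as a useful cross-check.

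The only mildly delicate point in the whole argument is the identification of the generator of $U(\th)$ with the operator $A$ itself rather than a mere self-adjoint extension, i.e.\ essential self-adjointness of $A$ on $\Sc$. This is classical (for instance via Nelson's analytic vector theorem applied to the dense invariant domain $\Sc$), and in any case it can be bypassed by simply defining $A$ as the generator of $U(\th)$ and verifying agreement with $-\frac i 2(x\cdot\nabla + \nabla\cdot x)$ on $\Sc$. Once this is settled, all three parts reduce to elementary calculations.
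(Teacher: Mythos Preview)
Your proof is correct. The paper does not actually supply a proof for this proposition: it is stated as a recall of standard properties of the generator of dilations, so there is nothing to compare your argument against. Your approach via Stone's theorem for (i), change of variables for (iii), and conjugation/differentiation for (ii) is the standard one and is entirely sound.
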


Now we give a proof of Theorem \ref{th-inter-freq}:

\begin{proof} [Proof of Theorem \ref{th-inter-freq}]
Let $E > 0$. We check that the generator of dilations $A$ is a conjugate operator for $\Hg$ on a neighborhood $J$ of $E$ in the sense of Definition \ref{def-mourre}. The form domain of $\Hg$ is the Sobolev space $H^1(\R^d)$. According to Proposition \ref{prop-A}, it is left invariant by the dilation $e^{-itA}$ for any $t \in \R$. By pseudo-differential calculus we can see that the commutators $[\Pg,iA]$, $[[\Pg,iA],iA]$, $[\Ba,iA]$ and $[[\Ba,iA],iA]$ define operators in $\Lc(H^2,L^2)$, hence in $\Lc(L^2,H^{-2})$ by duality, and in $\Lc(H^1,H\inv)$ by interpolation\footnote{ In fact we can also compute these commutators explicitely with Proposition \ref{prop-A}, except for the commutators of $\pppg D^\a$ with $A$: for this we can write $\pppg D^\a = (1-\D)^2 \times {(1-\D)^{-2}} \pppg {-\D}^{\frac \a 2}$ and use the Helffer-Sjöstrand formula for the second factor (see \cite{dimassis,davies95}).}. Since we need estimates for a single operator, we do not have to worry about the estimates of the third 
assumption. We only have to choose $\Upsilon_N$ large enough. Finally we use the usual trick for the main assumption. For $\s > 0$ we set $J_\s = [E-\s,E+\s]$. We have 
\begin{align*}
\1{J_\s}(\Pg)  [\Pg,iA]  \1{J_\s}(\Pg)
& =  \1{J_\s}(\Pg) \, 2\Pg \,  \1{J_\s}(\Pg) +  \tilde W \1{J_\s}(\Pg)  \\
& \geq 2(E-\s)  \1{J_\s}(\Pg)  +  \tilde W \1{J_\s}(\Pg) 
\end{align*}
where 
\[
\tilde W =  \1{J_\s}(\Pg) \left(\divg  \big((x\cdot \nabla) G(x) \big)\nabla - \sum_{j=1}^d (x\cdot \nabla) b_j D_j - \sum_{j=1}^d b_j D_j \right) 
\]
is a compact operator. Since $E > 0$ is not an eigenvalue of $\Pg$ (see \cite{kocht06}) the operator $\1{J_\s}(\Pg)$ goes strongly to 0 when $\s$ goes to 0. Then for $\s$ small enough we have 
\[
\1{J_\s}(\Pg)  [\Pg,iA] \1{J_\s}(\Pg) \geq E  \1{J_\s}(\Pg).
\]
Thus we can apply Theorem \ref{th-mourre}, which gives Theorem \ref{th-inter-freq} for $\Re(z) \in J_\s$ and with weights $\pppg A^{-\d}$. By compactness of $K \subset \C^*$ and the easy estimate of Remark \ref{rem-dissipative-accretive}, we have a uniform estimate for all $z \in \C_+ \cap K$. It remains to replace $\pppg A^{-\d}$ by $\pppg x^{-\d}$. For this we use the resolvent identity 
\[
\Rg(z) = \Rg(i) + (z-i) \Rg(i) \Rg(z) = \Rg(i) + (z-i)\Rg(z) \Rg(i) 
\]
to prove by induction on $m \in \N^*$ that $\Rg^{n+1}(z)$ can be written as a sum of terms of the form $(z-i)^\b \Rg^{n+1+\b}(i)$ with $\b \in \N$ or 
\[
(z-i)^{2m - n- 1 + \n} \Rg^m(i) \Rg^{\n}(z) \Rg^m(i)
\]
with $\max(1,n+1 - 2m) \leq \n \leq n+1$. On the one hand $\Rg^{n+1+\b}(i)$ is uniformly bounded in $\Lc(L^2)$ and on the other hand 
\begin{eqnarray*}
\lefteqn{\nr{\pppg x^{-\d} \Rg^m(i) \Rg^{\n}(z) \Rg^m(i) \pppg x^{-\d}}}\\
&&\leq \nr{\pppg x^{-\d} \Rg^m(i) \pppg A^\d}\nr{\pppg A^{-\d} \Rg^{\n}(z) \pppg A^{-\d}} \nr{\pppg A^\d \Rg^m(i) \pppg x^{-\d}}.
\end{eqnarray*}
The first and third factors are bounded by pseudo-differential calculus if $m$ is large enough and the second has been estimated uniformly by the Mourre method. This concludes the proof of Theorem \ref{th-inter-freq}.
\end{proof}

\section{Low frequency estimates} \label{sec-low-freq}

In this section we prove Theorems \ref{th-low-freq} and \ref{th-low-freq-sharp}. As in \cite{bouclet11, boucletr14}, the proof of Theorem \ref{th-low-freq} is based on a scaling argument for a small perturbation of the free Laplacian (see Section \ref{sec-low-freq-1}), and then on a perturbation argument to deal with the general case (see Section \ref{sec-low-freq-2}). Theorem \ref{th-low-freq-sharp} is proved in Section \ref{sec-low-freq-sharp}.\\

Let $\h \in C_0^\infty(\R^d)$ be equal to 1 on a neighborhood of 0. For $\y \in ]0,1]$ we set $\h_\y : x \mapsto \h(\y x)$. Then for $\yy \in ]0,1]$ we set $G_{\yy}(x) = \h_{\yy}(x) I_d + (1-\h_{\yy}(x)) G(x)$,
\begin{equation} \label{def-Pii-Pcc}
\Pii = -\divg G_{\yy}(x) \nabla \quad \text{and} \quad  \Pcc = \Pg - \Pii = -\divg \big(\h_{\yy}(x) (G(x)-I_d)\big) \nabla .
\end{equation}
For the dissipative part we set
\begin{equation} \label{def-aii}
\Bai =   a(1-\h_{\yyy}) \pppg{D}^\a a +  a \h_{\yyy} \pppg{D}^\a a(1-\h_{\yyy})  %\quad \text{and} \quad \Bac = \Ba - \Bai =  a  \h_\y \pppg{D}^\a a\h_\y .
\end{equation}
and
\begin{equation*} 
\Bac = \Ba - \Bai =  a  \h_{\yyy} \pppg{D}^\a a\h_{\yyy},
\end{equation*}
where $\yyy \in ]0,1]$. Finally, for the full operator we define 
\[
\Hii = \Pii - i \Bai
\quad 
\text{
and 
}
\quad
\Ri (z) = (\Hii - z)\inv,
\]
where $\ybar = (\yy,\yyy) \in ]0,1]^2$.

\subsection{Low frequency estimates for a small perturbation of the Laplacian} \label{sec-low-freq-1}

In this paragraph we prove Theorem \ref{th-low-freq} with $\Rg(z)$ replaced by $\Ri (z)$. Then in Section \ref{sec-low-freq-2} we will add the contributions of $\Pcc$, $W$ and $\Bac$.\\

The proof relies on a scaling argument. To this purpose we use for $z \in \C^*$ the operator
\[
\Thz = \exp \left( \frac {i \ln \abs z}2 A \right).
\]
For $u \in \Sc$ and $x \in \R^d$ we have $(\Th_z u)(x) = \abs z^{d/4} u \big(\abs z^{1/2} x\big)$. According to Proposition \ref{prop-A} we have for $p \in [1,+\infty]$
\begin{equation} \label{estim-dil-Lp}
\nr{\Thz}_{\Lc(L^p)} = \abs z^{\frac d 4 - \frac d {2p}}.
\end{equation}
For a function $u$ on $\R^d$ and $z \in \C^*$ we denote by $u_z$ the function 
\[
u_z : x \mapsto u \left( \frac x {\sqrt{\abs z}} \right).
\]
Compared to the scaling for the wave equation we are using the parameter $\sqrt {\abs z}$ instead of $\abs z$.\\

Now we introduce the rescaled versions of our operators:
\[
\tHii = \frac 1 {\abs z} \Thz \inv \Hii \Thz = \tPii -   i \tBai
\]
where $\tPii = - \divg G_{\yy,{z}}(x) \nabla$ and
\[
\tBai = \frac 1 {\abs z} \big((1-\h_{\yyy})a \big)_{z} \big(1 - \abs z \D \big)^{\frac \a 2}  a_{z} + \frac 1 {\abs z} (\h_{\yyy} a)_z \big(1 - \abs z \D \big)^{\frac \a 2} \big((1-\h_{\yyy})a \big)_{z}.
\]
Then for $\z \in \C_+$ we set
$
\Riz (\z) = \big( \tHii - \z \big) \inv,
$
so that with the notation $\hat z = z / \abs z$ we have for $z \in \C_+$
\[
\Ri (z) = \frac 1 {\abs z} \Thz \Riz (\hat z) \Thz\inv.
\]

Our analysis of the rescaled operators is based on the fact that if a function $\vf$ decays like $\pppg x^{-\n - \frac \rho 2}$ (recall that $\rho >0$ is fixed by \eqref{hyp-long-range} and \eqref{hyp-a-short-range}) then the multiplication by the rescaled function $\vf_{\l}$ behaves like an differential operator of order $\n$ for low frequencies, in the sense that it is of size $\l^\n$ as an operator from $H^s$ to $H^{s-\n}$. Since this observation relies on the Sobolev embeddings, there is however a restriction in the choice of $\n$ and $s$. For $\s \in \R$, let $\symb^{-\s}(\R^d)$ be the set of functions $\vf \in C^\infty(\R^d)$ such that
\[
\abs{\partial ^\b \vf(x)} \lesssim \pppg x^{-\s -\abs \b}.
\]
For $\n \geq 0$, $N \in \N$ and $\vf \in \Sc^{-\n - \frac \rho 2}(\R^d)$ we set
\[
\nr{\vf}_{\n,N} = \sup_{\abs \b \leq \kk +1} \sum_{0\leq m \leq N} \sup _{x\in \R^d} \abs{\pppg x^{\n + \frac \rho 2 + \abs \b} \big( \partial^\b (x\cdot \nabla)^m \vf \big) (x)}.
\]
We recall that the integer $\kk$ was defined in \eqref{def-talpha-kappa}. The following result is Proposition 7.2 in \cite{boucletr14}:

\begin{proposition} \label{prop-dec-sob}
Let $\n \in \big[ 0, \frac d 2 \big[$ and $s \in \big] -\frac d 2, \frac d 2\big[$ be such that $s -\n \in \big] -\frac d 2, \frac d 2\big[$. Then there exists $C \geq 0$ such that for $\vf \in \symb^{-\n-\frac \rho 2}(\R^d)$, $u \in H^s$ and $\l > 0$ we have
\[
 \nr{\vf_\l u}_{\dot H^{s-\n}} \leq C \l ^\n \nr \vf_{\n,0} \nr u _{\dot H^s}
\]
and
\[
\nr{\vf_\l u}_{H^{s-\n}} \leq C \l ^\n \nr \vf_{\n,0} \nr u _{H^s}.
\]
\end{proposition}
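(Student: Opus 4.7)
The strategy is to reduce to a fixed scale by a dilation argument, and then prove an unscaled multiplier estimate by combining Sobolev embeddings, Hölder's inequality, and the pointwise decay of $\vf$.

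\emph{Scaling reduction.} Let $\Phi_\l u(x) = u(x/\l)$, so that $\|\Phi_\l u\|_{\dot H^t} = \l^{d/2 - t}\|u\|_{\dot H^t}$. The identity $\vf_\l \cdot u = \Phi_\l\bigl(\vf \cdot \Phi_{1/\l}u\bigr)$ gives
\[
\|\vf_\l u\|_{\dot H^{s-\n}} = \l^{d/2 - (s-\n)} \|\vf \cdot \Phi_{1/\l}u\|_{\dot H^{s-\n}}, \qquad \|\Phi_{1/\l}u\|_{\dot H^s} = \l^{s - d/2}\|u\|_{\dot H^s},
\]
and the product of the two scaling factors is exactly $\l^\n$. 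Since $\|\vf\|_{\n,0}$ is independent of $\l$, it suffices to prove the case $\l = 1$, namely
\[
\|\vf u\|_{\dot H^{s-\n}} \lesssim \|\vf\|_{\n,0}\|u\|_{\dot H^s}.
\]

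\emph{Unscaled estimate.} I would dualize $\|\vf u\|_{\dot H^{s-\n}}$ against $v \in \dot H^{\n - s}$ and estimate the trilinear form $\bigl|\int \vf u \bar v\, dx\bigr|$ by Hölder. When $\n > 0$, the exponents $r = d/\n$ together with $1/p = 1/2 - s/d$ and $1/q = 1/2 - (\n - s)/d$ automatically satisfy $1/r + 1/p + 1/q = 1$, and under the hypothesis $s, s-\n \in (-d/2, d/2)$ the Sobolev embeddings $\dot H^s \hookrightarrow L^p$ and $\dot H^{\n-s} \hookrightarrow L^q$ (applied directly for nonnegative exponents and through duality for negative ones) reduce the estimate to $\|\vf\|_{L^{d/\n}} \|u\|_{\dot H^s} \|v\|_{\dot H^{\n-s}}$. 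The pointwise bound $|\vf(x)| \lesssim \pppg x^{-\n - \rho/2}$ and the margin $\rho > 0$ then yield $\|\vf\|_{L^{d/\n}} \lesssim \|\vf\|_{\n,0}$. When $\n = 0$ the exponent $d/\n = \infty$ is critical and one must instead invoke the classical fact that a smooth function whose first $\kk + 1$ derivatives are bounded (and mildly decaying) acts as a multiplier on $\dot H^s$ for every $|s| < d/2$; this is where the $\kk + 1$ derivatives encoded in $\|\vf\|_{\n,0}$ are used. The inhomogeneous $H^{s-\n}$ version follows by combining with the trivial bound $\|\vf u\|_{L^2} \leq \|\vf\|_{L^\infty}\|u\|_{L^2}$.

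\emph{Main obstacle.} The principal difficulty is the regime where both $s$ and $s - \n$ are positive, since then one cannot embed the $\dot H^{s-\n}$ norm directly into an $L^p$ space; in this case a paraproduct (or, for integer exponents, a Leibniz-type) decomposition of $\pppg D^{s-\n}(\vf u)$ must distribute the derivatives between $\vf$ and $u$, each piece being then estimated by a Hölder argument involving $\nabla^\a \vf$. The hypothesis $s, s-\n \in (-d/2, d/2)$ is exactly the window of Sobolev embeddings, and the $\kk + 1 \simeq d/2 + 1$ derivatives encoded in $\|\vf\|_{\n,0}$ are what is required to handle this full range through the paraproduct decomposition uniformly in $s$.
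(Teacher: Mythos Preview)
The paper does not prove this proposition at all: it is quoted verbatim as Proposition~7.2 of \cite{boucletr14}, so there is no in-paper argument to compare against. Your scaling reduction for the homogeneous inequality is exactly the right move and is correct as written; the unscaled multiplier bound via duality, H\"older with exponent $d/\n$, and Sobolev embeddings is the standard proof in the range $0\leq s\leq \n$, and your remark that the remaining range (both $s$ and $s-\n$ positive, or by duality both negative) requires a Leibniz/paraproduct distribution of the $\kk+1$ derivatives contained in $\|\vf\|_{\n,0}$ is also correct and is precisely how the cited reference proceeds.

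There is, however, one genuine gap. Your passage to the \emph{inhomogeneous} estimate is too quick: the dilation identity $\|\Phi_\l v\|_{\dot H^t}=\l^{d/2-t}\|v\|_{\dot H^t}$ has no clean analogue for $H^t$, so the scaling reduction does not apply to the second inequality, and the trivial bound $\|\vf_\l u\|_{L^2}\leq\|\vf\|_{L^\infty}\|u\|_{L^2}$ carries no factor $\l^\n$. What actually works is to combine the homogeneous estimate at \emph{several} Sobolev levels: for instance when $s\geq\n\geq 0$ one controls the $\dot H^{s-\n}$ part of the $H^{s-\n}$ norm by the homogeneous estimate at level $s$, and the $L^2$ part by the homogeneous estimate at level $\n$ (giving $\|\vf_\l u\|_{L^2}\lesssim\l^\n\|u\|_{\dot H^\n}\leq\l^\n\|u\|_{H^s}$). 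In the mixed-sign regime $0\leq s<\n$ one splits $u$ into low and high frequencies and places the two pieces of $\vf_\l u$ respectively in the $L^2$ and $\dot H^{s-\n}$ summands of $H^{s-\n}=L^2+\dot H^{s-\n}$; the case $s<0$ then follows by duality. This is routine once noticed, but it is not the one-line consequence you claim.
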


The interest of replacing $G(x)$ by $G_{\yy}(x)$ and $a$ by $a(1-\h_{\yyy})$ in the definition of $\Hii$ is that for all $N \in \N$ we have 
\begin{equation} \label{def-estim-Nc}
\begin{aligned}
\Nc_{\ybar,N}
& : = \sum_{j,k = 1}^d \nr{G_{\yy,j,k}(x) - \d_{j,k}}_{0,N} + \nr{(1-\h_{\yyy}) a}_{1,N} \left(\nr{ a}_{1,N}+\nr{\h_{\yyy} a}_{1,N} \right)\\
&  = \bigo \ybar 0 \big(\abs{\ybar}^{\rho / 2} \big).% + \bigo \yyy 0 \big(\yyy^{\rho / 2} \big).
\end{aligned}
\end{equation}
Thus this quantity is as small as we wish if we choose $\yy$ and $\yyy$ small enough.\\

Given two operators $T$ and $S$ we set $\ad_T^0(S) = S$, $\ad_T(S) = \ad_T^1(S) = [S,T]$ and then for $m \geq 2$: $\ad_T^m(S) = [\ad_T^{m-1}(S),S]$. For $\m = (\m_1,\dots,\m_d) \in \N^d$ we set 
\[
\ad_x^\m := \ad_{x_{1}}^{\m_1} \dots \ad_{x_{d}}^{\m_d}.
\]

At the beginning of the section we said that $\Hii$ has to be close to the free Laplacian. What we need precisely is the following result:

\begin{proposition} \label{prop-Pii-Hii-D}
Let $\m \in \N^d$, $m\in \N$, $\e_0 > 0$ and $s \in \R$. There exists $\y_0 \in ]0,1]$ such that for $\ybar =(\yy,\yyy) \in ]0,\y_0]^2$ the following statements hold:
\begin{enumerate}[(i)]
\item If $s \in \big]-\frac d 2, \frac d 2\big[$ then for $z \in \C_+$ with $\abs z \leq 1$ we have 
\[
\nr{\ad_x^\m \ad_{A}^m \big( \tPii +\D)}_{\Lc(H^{s+1},H^{s-1})} \leq \e_0.
\]
\item If $s \in \big]-\frac d 2+1, \frac d 2-1\big[$ then we also have
\[
\nr{\ad_x^\m \ad_{A}^m \tBai}_{\Lc(H^{s+1},H^{s-1})} \leq \e_0.
\]
\item For $u \in H^2$ we have 
\[
\frac 12 \nr u _{\dot H^2} \leq \nr{\Pii u}_{L^2} \leq 2 \nr u _{\dot H^2}.
\]
\end{enumerate}
\end{proposition}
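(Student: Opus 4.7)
The plan is to prove the three assertions by a perturbative analysis based on Proposition \ref{prop-dec-sob} together with the smallness estimate \eqref{def-estim-Nc}, which ensures that for $\ybar$ small the coefficients $G_{\yy}-I_d$ and $(1-\h_{\yyy})a$ are arbitrarily small in the weighted norms $\|\cdot\|_{\nu,N}$. Assertion (iii) is the simplest: writing $\Pii u = -\D u - \divg((G_{\yy}-I_d)\nabla u)$, the coefficient $G_{\yy}-I_d = (1-\h_{\yy})(G-I_d)$ has $W^{1,\infty}$ norm controlled by $\|G_{\yy}-I_d\|_{0,N} = O(\yy^{\rho/2})$, so the perturbation is bounded in $L^2$ by $\e \|u\|_{\dot H^2}$ for $\yy$ small, and the triangle inequality together with $\|\D u\|_{L^2} = \|u\|_{\dot H^2}$ yields the double inequality.

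For (i), I would rewrite $\tPii + \D = -\partial_j[(G_{\yy,jk}-\d_{jk})_z]\partial_k$ and apply Proposition \ref{prop-dec-sob} with $\nu = 0$ and $\lambda = \sqrt{|z|} \leq 1$: multiplication by the rescaled coefficient is bounded on any admissible $H^\sigma$ with norm controlled by $\|G_{\yy,jk}-\d_{jk}\|_{0,0} = O(\yy^{\rho/2})$. Sandwiching between the two first-order derivatives then gives the announced $H^{s+1}\to H^{s-1}$ estimate. Iterated commutators with $A$ are handled via Proposition \ref{prop-A}: from $[\partial_j, iA] = \partial_j$ and the chain rule one gets $[\vf_z, iA] = -((y\cdot\nabla\vf)(y))_z$, so each iteration with $A$ simply replaces $G_{\yy}-I_d$ by $(x\cdot\nabla)(G_{\yy}-I_d)$, whose $\|\cdot\|_{0,N}$ norm is still $O(\yy^{\rho/2})$ by definition of the seminorms. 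The commutators with $x$, computed from $[\partial_k, x_j] = \d_{jk}$ and $[\vf, x_j] = 0$, produce a finite sum of lower-order operators of the same type with even smaller norms.

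For (ii), the main difficulty is the prefactor $1/|z|$, which must be absorbed by scaling gains. I would split $(1-|z|\D)^{\a/2} = (1-|z|\D)^{\a/4}(1-|z|\D)^{\a/4}$ and use the elementary inequality $(1+t)^{\a/2} \leq 1 + t^{\a/2}$ for $t\geq 0$ and $\a \in [0,2]$ to deduce
\[
\|(1-|z|\D)^{\a/4} f\|_{L^2}^2 \leq \|f\|_{L^2}^2 + |z|^{\a/2}\|f\|_{\dot H^{\a/2}}^2.
\]
Pairing $\innp{\tBai u}{v}$ with the Fourier multiplier distributed symmetrically, the problem reduces to estimating $\|a_z u\|_{L^2}$ and $\|a_z u\|_{\dot H^{\a/2}}$ and the analogues for $((1-\h_{\yyy})a)_z v$. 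Proposition \ref{prop-dec-sob} applied with $\nu = 1$ (resp.\ $\nu = 1-\a/2$) yields a factor $|z|^{1/2}$ (resp.\ $|z|^{(1-\a/2)/2}$), and in both cases the resulting contribution to $\|(1-|z|\D)^{\a/4}\cdot\|_{L^2}$ is of order $|z|^{1/2}$. The small factor $O(\yyy^{\rho/2})$ is supplied by the weighted norms of $(1-\h_{\yyy})a$, and the two $|z|^{1/2}$ factors combined exactly cancel the $1/|z|$ prefactor. Commutators with $A$ and $x$ are handled as in (i), using additionally that $A$ commutes with $(1-|z|\D)^{\a/2}$ up to a scaling of $\x$ that is absorbed into manageable correction terms. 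The main technical obstacle is the bookkeeping of Sobolev exponents: Proposition \ref{prop-dec-sob} requires $\nu \in [0,d/2)$ and the intermediate indices to lie in $]-d/2, d/2[$, and when $\a$ is close to $2$ and $d \in \{3,4\}$ these constraints become tight and force a careful distribution of the derivative loss between the two multiplications and the Fourier multiplier, which is precisely what restricts (ii) to the narrower range $s \in ]-d/2+1, d/2-1[$ compared to (i).
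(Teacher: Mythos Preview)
Your treatment of (i) and (iii) is essentially the paper's (the paper actually derives (iii) from (i) by specializing to $s=1$, $|z|=1$, $\e_0=\tfrac12$, but your direct argument is equivalent).

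For (ii), however, your approach diverges from the paper's and has a genuine gap. Your symmetric splitting $(1-|z|\D)^{\a/2}=(1-|z|\D)^{\a/4}(1-|z|\D)^{\a/4}$ followed by Cauchy--Schwarz in $L^2$ only produces the bound $\Lc(H^{1},H^{-1})$, i.e.\ the case $s=0$. For $s\neq 0$ you would need to land in $L^2$ after multiplying by $a_z$, which via Proposition~\ref{prop-dec-sob} forces $\nu=s+1$; but $a\in\Sc^{-1-\rho}$ only allows $\nu\leq 1+\rho/2$, so the argument breaks down as soon as $s>\rho/2$, well short of the full range $]-\tfrac d2+1,\tfrac d2-1[$ needed later in Proposition~\ref{prop-resolvante-regularisante}. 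Your closing remark acknowledges the bookkeeping is delicate but does not explain how to get around this.

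The second problem is the commutator analysis. In (i) the operators are differential, so $\ad_x^\m$ and $\ad_A^m$ are computed explicitly from Proposition~\ref{prop-A}. Here $(1-|z|\D)^{\a/2}$ is a genuine fractional Fourier multiplier: $[x_j,(1-|z|\D)^{\a/2}]$ has symbol $i\a|z|\xi_j(1+|z||\xi|^2)^{\a/2-1}$, which is \emph{not} of the same form and cannot simply be ``handled as in (i)''. The phrase ``$A$ commutes with $(1-|z|\D)^{\a/2}$ up to a scaling of $\xi$'' hides a nontrivial family of new multipliers at each iteration. The paper resolves both issues at once by writing
\[
\pppg{D_z}^{\a}=(1-|z|\D)\,\pppg{D_z}^{\a-2},
\]
so that the fractional part $\pppg{D_z}^{\a-2}$ has \emph{negative} order and is bounded on every $H^{s}$; its commutators with $x$ and $A$ are then controlled uniformly via the Helffer--Sj\"ostrand formula. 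The remaining factor $1-|z|\D$ is polynomial, and its two pieces are matched against Proposition~\ref{prop-dec-sob} with $\gamma=0$ and $\gamma=1$ respectively, which is exactly what produces the full range of $s$. The Helffer--Sj\"ostrand step is the key technical ingredient you are missing.
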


\begin{proof}
The first statement is the same as for the wave equation. See Proposition 7.6 in \cite{boucletr14}. In particular with $s = 1$, $\abs z = 1$ and $\e_0 = \frac 12$ we obtain the last statement. It remains to prove (ii). Let $D_z = \sqrt {\abs z} D$. We write 
\[
((1-\h_{\yyy}) a)_z  \pppg {D_z}^{\a}a_z = ((1-\h_{\yyy}) a)_z (-\abs z \D + 1) \pppg {D_z}^{\a - 2} a_z.
\]
Then $\ad_{x}^{\m} \ad_A^{m} \Big(((1-\h_{\yyy}) a)_z  \pppg {D_z}^{\a}a_z \Big)$ can be written as a sum of terms of the form 
\[
\ad_{x}^{\m_1} \ad_A^{m_1}\big( ((1-\h_{\yyy}) a)_z \big) \, \ad_{x}^{\m_2} \ad_A^{m_2} \big(-\abs z \D + 1 \big) \, \ad_{x}^{\m_3} \ad_A^{m_3} \big(\pppg {D_z}^{\a - 2} \big) \, \ad_{x}^{\m_4} \ad_A^{m_4} (a_z) ,
\]
where $\m_1,\m_2,\m_3,\m_4 \in \N^d$ and $m_1,m_2,m_3,m_4 \in \Ii 0 m$ are such that $\m_1 + \m_2 + \m_3 + \m_4 = \m$ and $m_1 + m_2 + m_3 +m_4 = m$. Let $\g \in [0,1]$. According to Proposition \ref{prop-dec-sob} we have for $z \in \C_+$
\begin{equation} \label{estim-m1}
\nr{\ad_x^{\m_1} \ad_{A}^{m_1}\big((1-\h_{\yyy})a \big)_{z}}_{\Lc(H^{s-1+\g},H^{s-1})} \lesssim \yyy^{1+ \frac \rho  2 -\g} \abs z^{\frac \g 2}
\end{equation}
and
\begin{equation} \label{estim-m4}
 \nr{\ad_x^{\m_4} \ad_{A}^{m_4} a_z }_{\Lc(H^{s+1},H^{s+1-\g})}\lesssim  \abs z^{\frac \g 2}.
\end{equation}
To estimate $\ad_{x}^{\m_3} \ad_A^{m_3} \big(\pppg {D_z}^{\a - 2} \big)$ we use the Helffer-Sj\"ostrand formula (see \cite{dimassis,davies95}). We can check that for $\z \in \C \setminus \R$ we have 
\[
\nr{\ad_{x}^{\m_3} \ad_A^{m_3} \big(-\abs z \D - \z \big) \inv }_{\Lc(H^{s+1-\g})} \lesssim \frac {\pppg \z^{\abs {\m_3} + m_3}} {\abs {\Im(\z)}^{\abs {\m_3} + m_3 +1}}.
\]
Let $f : \t \mapsto (\t + 1)^{\frac {\a -2}2}$. Let $\vf \in C_0^\infty(\R,[0,1])$ be supported in $[-2,2]$ and equal to 1 on $[-1,1]$. For $m > \abs {\m_3} + m_3 +1$ and $\z = x + i y$ we set 
\[
 \tilde f_{m} (\z) =  \vf \left(  \frac {y} {\pppg x} \right) \sum_{k=0}^m f^{(k)} (x) \frac {(iy)^k}{k!}.
\]
We have
\begin{align*}
\abs{\frac {\partial \tilde f_m}{\partial \bar \z} (\z)} \leq \1 {\singl{\pppg {x} \leq \abs{y} \leq 2 \pppg {x}} } (\z) \pppg {x}^{-1+ \frac {\a - 2}2} + \1 {\singl{ \abs{y} \leq 2 \pppg{x}}} (\z) \abs{y}^m \pppg {x}^{-m-1+ \frac {\a - 2}2},
\end{align*}
so we can write
\[
(-\abs z \D + 1)^{\frac {\a -2}2} = \frac 1 \pi \int_{\z = x+iy \in \C} \frac {\partial \tilde f_m}{\partial \bar z}(\z) (-\abs z \D - \z)\inv \, dx \, dy.
\]
Then we can check that 
\begin{equation} \label{estim-m3}
\nr{\ad_{x}^{\m_3} \ad_A^{m_3} \pppg {D_z}^{\a -2} }_{\Lc(H^{s+1-\g})} \lesssim 1.
\end{equation}
It remains to estimate 
\begin{equation} \label{decomp-m2}
\ad_{x}^{\m_2} \ad_A^{m_2} \big(\pppg {D_z}^{2}\big) = -\abs z \ad_{x}^{\m_2} \ad_A^{m_2} \D  + \ad_{x}^{\m_2} \ad_A^{m_2} (1).
\end{equation}
We have $\nr{\abs z \ad_{x}^{\m_2} \ad_A^{m_2} \D } \lesssim \abs z$ in $\Lc(H^{s+1},H^{s-1})$ so with \eqref{estim-m1}, \eqref{estim-m4} and \eqref{estim-m3} applied with $\g = 0$ we obtain in $\Lc(H^{s+1},H^{s-1})$
\begin{multline}
\nr{\ad_{x}^{\m_1} \ad_A^{m_1}\big( ((1-\h_{\yyy}) a)_z \big)  \, \ad_{x}^{\m_2} \ad_A^{m_2} \big(-\abs z \D\big) \, \ad_{x}^{\m_3} \ad_A^{m_3} \big(\pppg {D_z}^{\a - 2} \big) \, \ad_{x}^{\m_4} \ad_A^{m_4}(a_z) }\\
 \lesssim \abs z \yyy^{1 + \frac \rho 2}.
\end{multline}
If $\abs {\m_2} = m_2 = 0$ we also have to consider the second term in \eqref{decomp-m2}. For this we apply \eqref{estim-m1}, \eqref{estim-m4} and \eqref{estim-m3} with $\g = 1$, which gives 
\[
\nr{\ad_{x}^{\m_1} \ad_A^{m_1} \big( ((1-\h_{\yyy}) a)_z  \, \ad_{x}^{\m_3} \ad_A^{m_3} \big(\pppg {D_z}^{\a - 2} \big) \, \ad_{x}^{\m_4} \ad_A^{m_4}(a_z)  \big)}_{\Lc(H^{s+1},H^{s-1})} \lesssim \abs z \yyy^{\frac \rho 2}.
\]
Thus we have proved that $\ad_{x}^{\m} \ad_A^{m} \Big(((1-\h_{\yyy}) a)_z \pppg {D_z}^{\a} a_z \Big)$ is of size $O(\abs z \yyy^{\frac \rho 2})$ in $\Lc(H^{s+1},H^{s-1})$. We proceed similarly for $\ad_{x}^{\m} \ad_A^{m} \Big((\h_{\yyy} a)_z \pppg {D_z}^{\a} ((1-\h_{\yyy}) a)_z \Big)$, and the statement follows.
\end{proof}

\begin{remark} \label{rem-low-dim}
If $d \geq 5$ we can replace $\Pii$ by $\Hii$ in the last statement of Proposition \ref{prop-Pii-Hii-D}. This is not the case for $d \in \{ 3, 4\}$. This is due to the fact that $s = 1$ does not belong to $\big]-\frac d 2+1, \frac d 2-1\big[$ and hence $\tBai$ is not small in $\Lc(\dot H^2,L^2)$ in these cases.
\end{remark}

\begin{proposition} \label{prop-resolvante-regularisante}
Let $\m \in \N^d$, $m \in \N$ and $s \in \big] -\frac d 2 + 1 , \frac d 2 -1 \big[$. There exists $\y_0 \in ]0,1]$ such that the operator
\[
\ad_x^\m \ad_A^m \Riz (-1)
\]
is bounded as an operator from $H^{s-1}$ to $H^{s+1}$ uniformly in $z \in \C_+$ with $\abs z \leq 1$ and $\ybar =(\yy,\yyy) \in ]0,\y_0]^2$ .
\end{proposition}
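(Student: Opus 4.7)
The plan is to treat $\tHii + 1$ as a small perturbation of $-\D + 1$ on the Sobolev scale, apply a Neumann-series argument to handle the bare resolvent (the case $\m = 0$, $m = 0$), and then derive the general case by induction on $\abs \m + m$ via the standard commutator-resolvent identity.

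Fix $s \in {]-d/2+1, d/2-1[}$. For the base case, $-\D + 1$ is an isomorphism from $H^{s+1}$ onto $H^{s-1}$ with norm-one inverse, so I would write
\[
\tHii + 1 = (-\D + 1) \Bigl( I + (-\D+1)\inv \bigl( (\tPii + \D) - i \tBai \bigr) \Bigr)
\]
and invoke items (i) and (ii) of Proposition \ref{prop-Pii-Hii-D} (with $\m = 0$, $m = 0$) to choose $\y_0$ small enough that the perturbation $(-\D+1)\inv((\tPii+\D) - i\tBai)$ has operator norm at most $\frac 12$ on $H^{s+1}$, uniformly in $z \in \C_+$ with $\abs z \leq 1$ and $\ybar \in ]0,\y_0]^2$. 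A Neumann series then inverts the bracket in $\Lc(H^{s+1})$, producing a bounded operator $\tilde R : H^{s-1} \to H^{s+1}$. Since $\tHii$ inherits from $\Hii$ the property of being both maximal dissipative and maximal accretive, Remark \ref{rem-dissipative-accretive} ensures that $\Riz(-1)$ is well defined on $L^2$; comparing $\tilde R$ with $\Riz(-1)$ on the dense subspace $L^2 \cap H^{s-1}$ (on which both solve $(\tHii + 1) u = f$ in $H^{s-1}$) forces their identification and therefore yields the desired bound.

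For the inductive step, assume the claim for all multi-indices of strictly smaller total order. I would apply the resolvent identity
\[
\ad_T \Riz(-1) = -\Riz(-1) \, \ad_T(\tHii) \, \Riz(-1)
\]
iteratively with $T$ running through $x_1,\dots,x_d,A$. Although $\ad_A$ and $\ad_{x_j}$ do not commute, the commutator $[\ad_A, \ad_{x_j}]$ is a constant multiple of $\ad_{x_j}$ and so has strictly lower total order, and routine bookkeeping expresses $\ad_x^\m \ad_A^m \Riz(-1)$ as a finite sum of terms of the form
\[
\Riz(-1) \, \bigl( \ad_x^{\m_1} \ad_A^{m_1} \tHii \bigr) \, \Riz(-1) \cdots \Riz(-1) \, \bigl( \ad_x^{\m_k} \ad_A^{m_k} \tHii \bigr) \, \Riz(-1),
\]
with $\sum_j \m_j = \m$, $\sum_j m_j = m$, and $\abs{\m_j} + m_j \geq 1$ for every $j$. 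Each inner factor splits as $\ad_x^{\m_j} \ad_A^{m_j} (-\D) + \ad_x^{\m_j} \ad_A^{m_j} (\tPii + \D) - i \ad_x^{\m_j} \ad_A^{m_j} \tBai$; the first piece is a polynomial in $D$ of order at most two (using $\ad_A(-\D) = 2i\D$ and $\ad_{x_j}(-\D) = -2i D_j$), hence bounded $H^{s+1} \to H^{s-1}$, while the remaining two are controlled by Proposition \ref{prop-Pii-Hii-D}. Combined with the base case bound for each $\Riz(-1)$, this yields the desired $H^{s-1} \to H^{s+1}$ estimate uniformly in the parameters.

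The main obstacle will be justifying the commutator expansion rigorously rather than only formally, given that $A$ is unbounded and, depending on the sign of $s$, the spaces $H^{s \pm 1}$ may straddle $L^2$. To handle this I would carry out the computations first on the Schwartz class $\Sc$, which is a core for all the operators involved and is invariant under $\ad_{x_j}$ and $\ad_A$, and then close by density using the uniform bounds already obtained at the previous step of the induction.
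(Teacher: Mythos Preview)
Your proposal is correct and follows essentially the same approach as the paper, which defers to Proposition~7.9 of \cite{boucletr14}: a Neumann-series perturbation off $-\D+1$ for the base case, followed by the iterated commutator--resolvent identity for general $(\m,m)$. The only cosmetic difference is that the induction on $\abs\m+m$ is not really needed once you have the full product expansion---each term already involves only bare resolvents $\Riz(-1)$ and inner factors $\ad_x^{\m_j}\ad_A^{m_j}\tHii$, so the base case together with Proposition~\ref{prop-Pii-Hii-D} suffices directly.
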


\begin{proof}
The idea of the proof is the same as the proof of Proposition 7.9 in \cite{boucletr14}. We only have to be careful with the fact that the dissipative term has to be seen as an operator of order 2. However, with the smallness assumption on $a(1-\h_{\yyy})$, it is still a small perturbation of $-\D$, and we can proceed as for the wave resolvent. We also have to be careful with the restriction on $s$ which is stronger than for the wave equation. This is due to the analogous restriction in the second statement of Proposition \ref{prop-Pii-Hii-D}. We omit the details.
\end{proof}

\begin{proposition} \label{prop-gain-par-regularite}
\begin{enumerate}[(i)]
\item Let $s \in \big[ 0 , \frac d 2 \big[$, $\d > s$ and $m \in \N$ be such that $m\geq s$. Then there exist $\y_0 \in ]0,1]$ and $C \geq 0$ such that for $z \in \C_+$ with $\abs z \leq 1$ and $\ybar \in ]0,\y_0]^2$ we have 
\[
\nr{\pppg x^{-\d} \Thz \Riz^m (-1)  \Thz\inv \pppg x^{-\d}}_{\Lc(L^2)} \leq C \abs z^{s}.
\]

\item 
Let $s \in \big[ 0 , \frac d 2 \big[$, $\d > s$, and $m \in \N$ large enough (say $m \geq \d + \frac s 2 + 1$).  Then there exist $\y_0 \in ]0,1]$ and $C \geq 0$ such that for $z \in \C_+$ with $\abs z \leq 1$ and $\ybar \in ]0,\y_0]^2$ we have 
\[
\nr{\pppg x^{-\d} \Thz \Riz^m (-1) \pppg A^\d}_{\Lc(L^2)} \leq C \abs z^{\frac s 2}
\]
and 
\[
\nr{\pppg A^\d  \Riz^m (-1) \Thz\inv \pppg x^{-\d} }_{\Lc(L^2)} \leq C \abs z^{\frac s 2}.
\]
\end{enumerate}
\end{proposition}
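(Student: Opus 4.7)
The plan is to exploit two structural features: the unitarity of $\Thz$ on $L^2$ (from \eqref{estim-dil-Lp} with $p=2$) and the fact that $\Thz = e^{i(\ln\abs z)A/2}$ commutes with every Borel function of $A$, in particular with $\pppg A^\d$. Setting $w_z(x) = \pppg{\abs z^{-1/2} x}^{-\d}$, the explicit formula of Proposition \ref{prop-A} gives $\Thz^{-1} \pppg x^{-\d} \Thz = w_z$ as a multiplication operator. Writing $R_0 = \Riz(-1)$, the two statements to prove reduce respectively to
\[
\nr{w_z R_0^m w_z}_{\Lc(L^2)} \lesssim \abs z^s \qquad \text{and} \qquad \nr{w_z R_0^m \pppg A^\d}_{\Lc(L^2)} \lesssim \abs z^{s/2}.
\]

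The key technical ingredient is the weighted Sobolev inequality
\[
\nr{w_z u}_{L^2} \lesssim \abs z^{\sigma/2} \nr{u}_{H^\sigma}, \qquad \sigma \in [0,d/2[,\ \d > \sigma,
\]
which I would establish by H\"older's inequality with exponents $q = d/\sigma$ and $r = 2d/(d-2\sigma)$, combined with the Sobolev embedding $H^\sigma \hookrightarrow L^r$ and the scaling identity $\nr{w_z}_{L^q} = \abs z^{\sigma/2} \nr{\pppg \cdot^{-\d}}_{L^q}$ (finite because $q\d>d$ is equivalent to $\d>\sigma$; the case $\sigma=0$ is trivial since $w_z\leq 1$). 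By duality, $w_z$ then also maps $L^2$ continuously into $H^{-\sigma}$ with norm $\lesssim \abs z^{\sigma/2}$.

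Applied on both sides of $R_0^m$ with $\sigma = s$, this reduces Part (i) to showing that $R_0^m \in \Lc(H^{-s}, H^s)$ uniformly in $z \in \C_+$ with $\abs z \leq 1$ and $\ybar \in ]0,\y_0]^2$. This uniform bound follows by combining the $L^2$-boundedness of $R_0$ (a consequence of Remark \ref{rem-dissipative-accretive}, since $-1$ satisfies $\Re z < 0$) with the smoothing $R_0 \in \Lc(H^{-1}, H^1)$ provided by Proposition \ref{prop-resolvante-regularisante}: the $m \geq s$ available resolvents, distributed between the two sides of $R_0^m$ and combined with standard interpolation between integer indices, supply the required regularity gain.

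Part (ii) is the more subtle step. The left-hand weight is again treated by the weighted Sobolev inequality, contributing the factor $\abs z^{s/2}$, so the remaining task is to show that $R_0^m \pppg A^\d \in \Lc(L^2, H^s)$. My plan is to represent $\pppg A^\d$ via a Helffer--Sj\"ostrand-type integral of resolvents $(A-\zeta)^{-1}$ and to commute this integral through $R_0^m$ using the iterated $A$-commutators $\ad_A^k R_0$ furnished by Proposition \ref{prop-resolvante-regularisante}; taking $m$ as large as stated ($m \geq \d + s/2 + 1$) provides enough resolvents to simultaneously absorb the order-$\d$ weight in $A$ via commutator expansion and to gain the $H^s$ regularity. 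The dual bound is then obtained by adjunction, using that $(R_0)^*$ is the resolvent of $\tHii^* = \tPii + i\tBai$ at $-1$ and satisfies symmetric commutator estimates. The main obstacle is bookkeeping: one must ensure that every intermediate operator produced by the commutator expansion maps between Sobolev spaces whose indices lie in the admissible range $\big]-\tfrac d 2 + 1, \tfrac d 2 - 1\big[$ of Proposition \ref{prop-resolvante-regularisante}, and this is precisely the constraint that forces the lower bound on $m$ in the hypothesis.
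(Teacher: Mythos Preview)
Your part~(i) is correct and is the paper's argument in conjugated form: your weighted Sobolev inequality $\nr{w_z u}_{L^2}\lesssim\abs z^{s/2}\nr u_{H^s}$ is exactly the estimate $\nr{\pppg x^{-\d}\Thz}_{\Lc(H^s,L^2)}\lesssim\abs z^{s/2}$ after conjugation by the unitary $\Thz$.

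Part~(ii), however, has a real gap. Your reduction asks for $R_0^m\pppg A^\d\in\Lc(L^2,H^s)$, but this operator is \emph{unbounded}, already in the model case $R_0=(-\D+1)^{-1}$, $\d=1$, $s=0$. Testing the adjoint $AR_0^m$ on translates $\h(\cdot-ne_1)$ of a fixed bump $\h$ gives $L^2$-norms of order $n$, because $A=-i(x\cdot\nabla)-\tfrac{id}2$ carries an explicit factor of $x$ while $R_0^m$ commutes with translations and supplies no spatial decay. Neither a commutator expansion nor a Helffer--Sj\"ostrand representation repairs this: pushing $\pppg A^\d$ across $R_0^m$ always leaves a leading term $\pppg A^\d R_0^m$ that is just as unbounded. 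The decay in $w_z$ is precisely what must balance this $x$-growth, so it cannot be factored off as a separate $H^s\to L^2$ bound.

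The paper keeps the weight in play by first showing $\nr{\pppg x^{-\d}\Thz(1+\abs x^\d)}_{\Lc(H^s,L^2)}\lesssim\abs z^{s/2}$ (the extra $\abs x^\d$ costs nothing since $\Thz\abs x^\d=\abs z^{\d/2}\abs x^\d\Thz$ and $\abs z\leq1$), so that it suffices to prove, for integer $\d$ and all $\m\in\N^d$, that $\pppg x^{-\d}\ad_x^\m\big(R_0^m A^\d\big)\in\Lc(L^2,H^s)$ uniformly. This is done by induction on $\d$: one peels off one factor $A=\sum_j x_jD_j+\text{const}$, absorbs $D_jR_0$ by Proposition~\ref{prop-resolvante-regularisante}, and commutes $x_j$ to the far left where $\pppg x^{-\d}x_j$ behaves like $\pppg x^{-(\d-1)}$, reducing to the hypothesis at level $\d-1$. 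The $\ad_x^\m$ in the inductive statement is there exactly so that this commutation step closes.
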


\begin{proof}
According to Proposition \ref{prop-resolvante-regularisante} the operator $\Riz^m (-1) $ is bounded in $\Lc(H^{-s},H^s)$ uniformly in $z \in \C_+$ with $\abs z \leq 1$ and $\ybar$ close to (0,0). On the other hand, according to the Sobolev embedding $H^s \subset L^p$ for $p = \frac {2d}{d-2s}$, the fact that $\pppg x^{-\d}$ belongs to $\Lc(L^p,L^2)$ and \eqref{estim-dil-Lp} we have 
\[
\nr{\pppg x^{-\d} \Thz}_{\Lc(H^s,L^2)} \lesssim \nr{\Thz}_{\Lc(L^p)} \lesssim \abs{z}^{\frac s 2}.
\]
We similarly have 
\[
\nr{\Thz\inv \pppg x^{-\d}}_{\Lc(L^2,H^{-s})} \lesssim \abs{z}^{\frac s 2},
\]
and the first statement follows. For the second statement we use the same idea as in the proof of Proposition 7.11 in \cite{boucletr14}. We only prove the first estimate. For this we first remark that
\begin{align*}
\nr{\pppg x^{-\d} \Thz \big( 1 + \abs x^\d \big) }_{\Lc(H^s,L^2)}
& \leq \nr{\pppg x^{-\d} \Thz}_{\Lc(H^s,L^2)} + \nr{\pppg x^{-\d} \Thz  \abs x^\d  }_{\Lc(L^2)}\\
& \lesssim \nr{\Thz}_{\Lc(L^p)} + \abs z^{\frac \d 2} \nr{\pppg x^{-\d}   \abs x^\d \Thz }_{\Lc(L^2)}\\
& \lesssim \abs z^{\frac s 2}
\end{align*}
where, again, $p$ stands for $\frac {2d}{d-2s}$. Then it remains to prove that for all $\d \geq 0$ (we no longer need the assumption that $\d > s$), $m\geq \d + \frac s 2 + 1$ and $\m \in \N^d$ the operator 
\[
\pppg x^{-\d} \ad_x^\m \left( \Riz^m (-1) \pppg A^\d \right)
\]
is bounded in $\Lc(L^2,H^s)$ uniformly in $z \in \C_+$. With $\m = 0$ this will conclude the proof. By interpolation it is enough to consider the case where $\d$ is an integer and $m \geq \d + \frac s 2$ (we do not mean to be sharp with this assumption). We proceed by induction. The statement for $\d = 0$ is given by Proposition \ref{prop-resolvante-regularisante}. Now let $\d \in \N^*$. We have 
\[
 \Riz^m (-1) A^\d = \sum_{k=0}^\d C_\d^k \Riz^{m-1} (-1) A^{\d-k} \ad_A^k \big(\Riz (-1)\big).
\]
When $k \neq 0$ we can apply the inductive assumption to $\Riz^{m-1} (-1) A^{\d-k}$. With Proposition \ref{prop-resolvante-regularisante} we obtain that the contributions of the corresponding terms are uniformly bounded in $\Lc(L^2,H^s)$ as expected. It remains to consider the term corresponding to $k=0$. It is enough to consider 
\[
\Riz^{m-1} (-1) A^{\d-1} x_j D_j \Riz (-1) 
\]
for some $j \in \Ii 1 d$. The operator $ D_j \Riz (-1) $ and its commutators with powers of $x$ are uniformly bounded operators on $L^2$, and 
\[
\Riz^{m-1} (-1) A^{\d-1} x_j = x_j \Riz^{m-1} (-1) A^{\d-1} + \ad_{x_j} \big(\Riz^{m-1} (-1) A^{\d-1}\big).
\]
We conclude with the inductive assumption.
\end{proof}

\begin{proposition} \label{prop-mourre-low-freq}
Let $k \in \N$ and $\d > k + \frac 12$.  Then there exist $\y_0 \in ]0,1]$ and $C \geq 0$ such that for $z \in \C_+$ with $\abs z \leq 1$ and $\ybar =(\yy,\yyy) \in ]0,\y_0]^2$ we have 
\[
\nr{\pppg A^{-\d} \Riz^{k+1} (\hat z)\pppg A^{-\d}}_{\Lc(L^2)} \leq C.
\]
\end{proposition}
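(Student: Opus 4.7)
The plan is to apply the dissipative Mourre theorem (Theorem~\ref{th-mourre}) to the rescaled operator $\tHii$ on the form domain $\Kc = H^1(\R^d)$, using the generator of dilations $A$ as conjugate operator, and with spectral interval a small neighborhood $J$ of the energy $E = 1$. Since $\hat z = z/\abs z$ lies on the closed upper unit semi-circle, the delicate values are those near $\hat z = 1$; for $\hat z$ away from $1$ we either have $\Im \hat z \geq c > 0$ or $\Re \hat z \leq -c < 0$, so Remark~\ref{rem-dissipative-accretive} yields $\nr{(\tHii - \hat z)^{-(k+1)}}_{\Lc(L^2)} \leq c^{-(k+1)}$ directly, and $\pppg A^{-\d}$ is bounded.

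I would first verify the abstract hypotheses of Definition~\ref{def-mourre} with constants uniform in $z \in \C_+$ with $\abs z \leq 1$ and in $\ybar \in ]0,\y_0]^2$, for $\y_0$ small enough. The invariance of $\Kc = H^1$ under $e^{i\th A}$ is immediate from Proposition~\ref{prop-A}. Proposition~\ref{prop-Pii-Hii-D}, applied at $s = 0 \in \big]-\tfrac d 2 + 1, \tfrac d 2 - 1\big[$ (which holds since $d \geq 3$), shows that $\tPii + \D$ and $\tBai$ are $O(\e_0)$ in $\Lc(H^1, H^{-1})$ when $\ybar$ is small; the same type of argument, applied to $\ad_A^m (\tPii)$ and $\ad_A^m (\tBai)$ for $1 \leq m \leq k+2$, yields the analogous bound for all iterated commutators (the extra $(x \cdot \nabla)^m$ factors generated by commutation with $A$ are absorbed into the seminorms $\nr{\cdot}_{\n,m}$ of the coefficients, which are finite thanks to \eqref{hyp-long-range} and \eqref{hyp-a-short-range}).

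Next, I would establish the Mourre lower bound on $J = [1-\s, 1+\s]$ from the identity $[-\D, iA] = -2\D$ and the smallness of $\tPii + \D$: writing $[\tPii, iA] = 2\tPii + R_1$ with $R_1 = O(\e_0)$ in $\Lc(H^1, H^{-1})$, the spectral projection $\1_J(\tPii)$ yields
\[
\1_J(\tPii) [\tPii, iA] \1_J(\tPii) \geq \big(2(1-\s) - C\e_0\big) \1_J(\tPii) \geq \a_0 \1_J(\tPii)
\]
for $\s$ and $\e_0$ small enough. Since $\tBai$ itself is of size $O(\e_0)$ in $\Lc(H^1, H^{-1})$, its form contribution can be incorporated with an appropriate choice of $\b \geq 0$ in \eqref{hyp-mourre}. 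Theorem~\ref{th-mourre} then produces, for any $\d > k + \tfrac 12$ and any compact $I \subset \mathring J$, the bound $\nr{\pppg A^{-\d} \Riz^{k+1}(\hat z) \pppg A^{-\d}}_{\Lc(L^2)} \leq C$ for $\hat z \in \C_+$ with $\Re \hat z \in I$. Combining with the trivial regime described above covers the whole upper unit semi-circle and completes the proof.

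The main technical obstacle will be the uniform control of the iterated commutators: although $[-\D, iA] = -2\D$ is elementary, the dissipative term $\tBai$ involves the non-local operator $\pppg{D_z}^{\a}$, whose successive commutators with $A$ must be handled via the Helffer--Sj\"ostrand calculus already deployed in the proof of Proposition~\ref{prop-Pii-Hii-D}. Verifying that the resulting estimates remain $O(\e_0)$ uniformly in $(z, \ybar)$ at every order up to $k+2$ is the heart of the argument, and is what forces the choice of $\y_0$ to depend on $k$.
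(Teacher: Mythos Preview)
Your proposal is correct and follows essentially the same approach as the paper: verify the hypotheses of Theorem~\ref{th-mourre} for $\tHii$ with conjugate operator $A$ near energy $1$, using Proposition~\ref{prop-Pii-Hii-D} at $s=0$ for the commutator bounds and the smallness of $\tPii + \D$ for the Mourre lower bound. The only cosmetic difference is that the paper simply takes $\b = 0$ (so $\tBai$ does not enter \eqref{hyp-mourre} at all) and works on the fixed interval $J = \big]\tfrac 12, \tfrac 32\big[$.
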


\begin{proof}
The estimate is clear when $\hat z$ is outside some neighborhood of 1. For $\hat z$ close to 1 we apply Theorem \ref{th-mourre} uniformly in $z$ with $A$ as a conjugate operator. We have already said that $e^{-itA}$ leaves $H^1$ invariant for all $t \in \R$. The assumptions (ii) and (iii) of Definition \ref{def-mourre} with $\a_0=1/2$ and $\b= 0$ are consequences of Proposition \ref{prop-Pii-Hii-D} applied with $s = 0$, $\m = 0$ and $m \in \N^*$. For $m \in \{0,1\}$, $z \in \C_+$ and $u \in \Sc$ we have 
\begin{eqnarray*}
\lefteqn{\abs{\innp{ \ad_{iA}^m (\tPii) u}{u}_{L^2} - 2^m \innp{-\D u}{u}_{L^2}}}\\
&& \leq \sum_{j,k= 1}^d \abs{\innp{(2-x\cdot \nabla)^m (G_{\yy,z,j,k} - \d_{j,k}) D_j u}{D_k u}_{L^2}}\\
&& \lesssim O(\yy^{\rho/2}) \nr{\nabla u}^2_{L^2},
\end{eqnarray*}
and hence
\[
[\tPii,iA] \geq \Big(2 - O\big(\yy^{\rho / 2} \big)\Big)(- \D)  \geq \Big(2 - O\big(\yy^{\rho/ 2} \big)\Big)  \tPii.
\]
Let $J = \big] \frac 12 , \frac 32 \big[$. After conjugation by $\1 J (\tPii)$ we obtain that if $\y_0$ is small enough then for all $\yy \in ]0,\y_0]$ and $z \in \C_+$ we have 
\[
\1 J (\tPii) [\tPii,iA] \1 J (\tPii) \geq \frac 12 \1 J (\tPii).
\]
Then Proposition \ref{prop-mourre-low-freq} follows from Theorem \ref{th-mourre}.
\end{proof}

\begin{remark} \label{rem-mourre-formes}
It is important to notice that we have estimated $[\Bai,iA]$ and $[[\Bai,iA],iA]$ in $\Lc(H^1,H\inv)$ and not in $\Lc(H^2,L^2)$. By pseudo-differential calculus, these two commutators define operators in $\Lc(H^2,L^2)$. But in low dimensions ($d \in \{3.4\}$) they can be estimated uniformly by Proposition \ref{prop-Pii-Hii-D} only in the sense of forms. This is why we need a form version of the dissipative Mourre method here.
\end{remark}

\begin{proposition} \label{prop-low-freq-petite-perturbation}
Let $\e > 0$ and $n \in \N$. Let $\d$ be as in the statement of Theorem \ref{th-low-freq}. Then there exist $\y_0 \in ]0,1]$, $C\geq 0$ and a neighborhood $\Uc$ of $0$ in $\C$ such that for $\ybar =(\yy,\yyy) \in ]0,\y_0]^2$ and $\b_l,\b_r \in \R_+$ with ${\b_l} + {\b_r} \leq 2$ we have 
\[
\nr{\pppg x^{-\d} \pppg D^{\b_l} \Ri^{n+1}(z) \pppg D^{\b_r} \pppg x ^{-\d}}_{\Lc(L^2)} \leq C \left( 1 + \abs z ^{\frac d 2 - \e - 1 - n} \right).
\]
\end{proposition}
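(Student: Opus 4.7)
The plan is to follow the scaling--plus--decomposition strategy that produced Theorem~\ref{th-inter-freq}, substituting the low-frequency smoothing and Mourre estimates of this subsection (Propositions~\ref{prop-resolvante-regularisante}--\ref{prop-mourre-low-freq}) for the pseudo-differential toolbox used there. First I would apply the dilation identity $\Ri^{n+1}(z) = |z|^{-(n+1)} \Thz \Riz^{n+1}(\hat z) \Thz\inv$ together with the intertwining relation $\pppg D^\b \Thz = \Thz \pppg{\sqrt{|z|}D}^\b$ (consequence of $\Thz\inv D \Thz = \sqrt{|z|}\, D$) to reduce the problem to bounding $|z|^{-(n+1)}$ times
\[
\nr{\pppg x^{-\d} \Thz \pppg{\sqrt{|z|}D}^{\b_l} \Riz^{n+1}(\hat z) \pppg{\sqrt{|z|}D}^{\b_r} \Thz\inv \pppg x^{-\d}}_{\Lc(L^2)}.
\]
For $|z|\leq 1$ the factors $\pppg{\sqrt{|z|}D}^{\b_{l,r}}$ are pointwise dominated by $\pppg D^{\b_{l,r}}$ and, since $\b_l+\b_r\leq 2$, they will be absorbed into one copy of $\Riz(-1)$ on each side via Proposition~\ref{prop-resolvante-regularisante} without inflating the power of $|z|$.

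Next, exactly as at the end of the proof of Theorem~\ref{th-inter-freq}, iterating the resolvent identity $\Riz(\hat z)=\Riz(-1)+(\hat z+1)\Riz(-1)\Riz(\hat z)$ around the pivot $-1$ produces
\[
\Riz^{n+1}(\hat z)=\sum_{\b}(\hat z+1)^{\b}\,\Riz^{n+1+\b}(-1)\;+\;(\hat z+1)^{2m-n-1+\nu}\,\Riz^m(-1)\Riz^\nu(\hat z)\Riz^m(-1),
\]
with $m$ to be chosen large and $\max(1,n+1-2m)\leq\nu\leq n+1$. Set $s:=\min(n+1,\,d/2-\e)$. Each regular term is controlled by Proposition~\ref{prop-gain-par-regularite}(i), which gives a bound of order $|z|^s$. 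For the sandwich term I would insert the weights $\pppg A^{\pm\d}$ between the three resolvent blocks and write
\begin{align*}
&\nr{\pppg x^{-\d}\Thz(\cdots)\Riz^m(-1)\Riz^\nu(\hat z)\Riz^m(-1)(\cdots)\Thz\inv\pppg x^{-\d}}\\
&\qquad\leq \nr{\pppg x^{-\d}\Thz(\cdots)\Riz^m(-1)\pppg A^\d}\cdot\nr{\pppg A^{-\d}\Riz^\nu(\hat z)\pppg A^{-\d}}\cdot\nr{\pppg A^\d\Riz^m(-1)(\cdots)\Thz\inv\pppg x^{-\d}},
\end{align*}
where the middle factor is uniformly controlled by Proposition~\ref{prop-mourre-low-freq} (applicable since $\nu\leq n+1$ and $\d>n+\tfrac 12$) and each outer factor is bounded by $C|z|^{s/2}$ by Proposition~\ref{prop-gain-par-regularite}(ii). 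Thus the sandwich also contributes $|z|^s$.

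Multiplying by the scaling factor $|z|^{-(n+1)}$ yields $|z|^{s-(n+1)}$. When $2n+1\geq d$, the hypothesis $\d>n+\tfrac 12\geq d/2$ permits $s=d/2-\e$ and this is precisely the target $|z|^{d/2-\e-1-n}$; when $2n+1<d$ the target exponent is positive, so uniform boundedness suffices, and the hypothesis $\d>n+1$ allows either $s=n+1$ (if $n+1\leq d/2-\e$) or $s=d/2-\e$, giving $s-(n+1)\geq 0$. The main technical obstacle I anticipate is the bookkeeping of the dilations $\Thz^{\pm 1}$ and the factors $\pppg{\sqrt{|z|}D}^{\b_{l,r}}$: they must be commuted through and absorbed without disturbing the combinatorial structure of the decomposition, while respecting the Sobolev-range constraint $s\in[0,d/2)$ together with $\d>s$ appearing in Propositions~\ref{prop-resolvante-regularisante}--\ref{prop-gain-par-regularite}. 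The uniform choice $s=\min(n+1,d/2-\e)$ is what reconciles both hypotheses on $\d$ with the target exponent.
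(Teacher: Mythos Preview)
Your decomposition of $\Riz^{n+1}(\hat z)$ and the subsequent application of Propositions~\ref{prop-gain-par-regularite} and~\ref{prop-mourre-low-freq} match the paper's argument exactly, including the choice $s=\min(n+1,d/2-\e)$. The difference lies in how you treat the derivative factors $\pppg D^{\b_l}$, $\pppg D^{\b_r}$, and here there is a genuine gap.

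You carry the derivatives through the dilation and propose to absorb $\pppg{\sqrt{|z|}D}^{\b_{l,r}}$ into one copy of $\Riz(-1)$ on each side ``via Proposition~\ref{prop-resolvante-regularisante} without inflating the power of $|z|$''. But Proposition~\ref{prop-resolvante-regularisante} only gives $\Riz(-1):H^{s-1}\to H^{s+1}$ for $s\in\,]-\tfrac d2+1,\tfrac d2-1[$, so the chain $\pppg D^{\b_l}\Riz^m(-1)\pppg D^{\b_r}:H^{-s}\to H^s$ requires $s+\max(\b_l,\b_r)<d/2$. When $2n+1\geq d$ you need $s=d/2-\e$, and then any $\b_l>\e$ (or $\b_r>\e$) violates this constraint; for $d=3$ and $\b_l=2$ it is outright impossible. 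The same obstruction reappears in the outer factors of the sandwich term: Proposition~\ref{prop-gain-par-regularite}(ii) is stated without the extra derivative, and its proof uses $\Riz^m(-1):L^2\to H^s$ right up to the threshold $s<d/2$, leaving no room to upgrade to $H^{s+\b_l}$. The domination $\pppg{\sqrt{|z|}D}^{\b}\leq \pppg D^{\b}$ does not help here, and exploiting the extra factor $|z|^{\b/2}$ hidden in $\pppg{\sqrt{|z|}D}^{\b}$ would require a separate argument you have not supplied.

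The paper sidesteps the issue entirely by peeling off the derivatives \emph{before} rescaling. For $n\geq 1$ one writes, via two applications of the resolvent identity at the pivot $-1$,
\[
\Ri^{n+1}(z)=\Ri(-1)\Big(\Ri^{n-1}(z)+2(1+z)\Ri^{n}(z)+(1+z)^2\Ri^{n+1}(z)\Big)\Ri(-1),
\]
so that $\pppg x^{-\d}\pppg D^{\b_l}\Ri(-1)\pppg x^{\d}$ and $\pppg x^{\d}\Ri(-1)\pppg D^{\b_r}\pppg x^{-\d}$ can be bounded by pseudo-differential calculus on weighted $L^2$ (no Sobolev-index restriction, uniformly in $\bar\y$). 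A similar reduction handles $n=0$ since $\b_l+\b_r\leq 2$. One is then left with the case $\b_l=\b_r=0$, where your argument---which is also the paper's---goes through verbatim. This is precisely the point of Remark~\ref{rem-mourre-formes}'s neighbour (the remark following the proposition) about there being ``no gain when we add a derivative'': the restricted Sobolev range of Proposition~\ref{prop-resolvante-regularisante} forces the derivatives to be handled outside the rescaled picture.
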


\begin{remark}
Compared to the analogous result for the wave equation (see Theorem 1.3 in \cite{boucletr14}) there is no gain when we add a derivative. This is a consequence of the restriction on the Sobolev index $s$ in Proposition \ref{prop-resolvante-regularisante}, which is stronger than in Proposition 7.9 in \cite{boucletr14}.
\end{remark}

\begin{proof} [Proof of Proposition \ref{prop-low-freq-petite-perturbation}]
First assume that $n \geq 1$. By the resolvent identity we have
\begin{align*}
\pppg x^{-\d} \pppg D^{\b_l} \Ri^{n+1}(z) \pppg D^{\b_r} \pppg x^{-\d}
& = {\pppg x^{-\d} \pppg D^{\b_l} \Ri(-1) \pppg x^\d} \\
&\quad  \times \pppg x^{-\d} \left( \Ri^{n-1}(z) + 2 (1+z) \Ri^{n}(z) + (1+z)^2 R^{n+1}(z) \right) \pppg x^{-\d}\\
&\quad  \times \pppg x^\d \Ri(-1) \pppg D^{\b_r} \pppg x^{-\d}.
\end{align*}
The first and last factors are bounded on $L^2$ uniformly in $\ybar \in ]0,1]^2$ by pseudo-differential calculus, so it is enough to prove the statement without additionnal derivatives if $n \geq 1$. Since ${\b_l} +  {\b_r} \leq 2$ we have a similar argument for $n=0$.

We have
\[
\pppg x^{-\d} \Ri^{n+1} (z) \pppg x^{-\d} = \abs z^{-(n+1)} \pppg x^{-\d} \Thz \Riz^{n+1} (\hat z) \Thz \inv \pppg x^{-\d}.
\]
As in the proof of Theorem \ref{th-inter-freq} in Section \ref{sec-inter-freq} we can prove by induction on $m \in \N^*$ that $\Riz^{n+1} (\hat z)$ can be written as a sum of terms of the form 
\begin{equation} \label{decomp-tRizi}
(1+\hat z)^\b \Riz^{n+1+\b} (-1) \quad \text{or} \quad (1+\hat z)^{2m-n-1+\n} \Riz^m (-1) \Riz^\n (\hat z) \Riz^m (-1),
\end{equation}
where $\max(1,n+1-2m) \leq \n \leq n+1$ and $\b \in \N$. Let $s = \min\big(n+1,\frac d 2 - \e\big)$. For $\b \in \N$ we have $s \in \big[0,\frac d 2 \big[$, $n+1+\b \geq s$ and $\d > s$ so according to the first statement of Proposition \ref{prop-gain-par-regularite} we have 
\[
\abs z^{-(n+1)} \nr{\pppg x^{-\d} \Riz^{n+1+\b} (-1) \pppg x^{-\d}}_{\Lc(L^2)} \lesssim \abs z^{s-(n+1)} \lesssim 1 + \abs z^{\frac d 2 - \e - n- 1}.
\]
Now we consider the contributions of terms of the second kind in \eqref{decomp-tRizi}. We can assume that $m$ is large enough to apply the second statement of Proposition \ref{prop-gain-par-regularite}. We have $\d > \n - \frac 12$ so with proposition \ref{prop-mourre-low-freq} we get
\begin{eqnarray*}
\lefteqn{ \abs z^{-(n+1)} \nr{\pppg x^{-\d} \Thz \Riz^m (-1) \Riz^\n (\hat z) \Riz^m (-1) \Thz \inv \pppg x^{-\d}}}\\
&& \leq \abs z^{-(n+1)} \nr{\pppg x^{-\d} \Thz \Riz^m (-1) \pppg A^\d} \nr{\pppg A^{-\d} \Riz^\n (\hat z) \pppg A^{-\d}} \nr{\pppg A^\d  \Riz^m (-1) \Thz \inv \pppg x^{-\d}}\\
&&\lesssim \abs z^{s-(n+1)}  \lesssim 1 + \abs z^{\frac d 2 - \e - n- 1}.
\end{eqnarray*}
This concludes the proof.
\end{proof}

\subsection{Low frequency estimates for a general perturbation of the Laplacian} \label{sec-low-freq-2}

In this paragraph we use the estimates on $\Ri(z)$ to prove the same estimates for $\Rg(z)$. To this purpose we have to add the contributions of $\Pcc$ and $W$ in the self-adjoint part, and the contribution of $\Bac$ in the dissipative part.\\

For $\y_0, \yyy \in ]0,1]$ and $\ybar = (\y_0,\yyy)$ we set $\Kz = \Pcco + W$ and, for $\p \in C_0^\infty(\R)$,
\[
\Spz = \Kz \Ri (z) \p(\Pg).
\]
From Proposition \ref{prop-low-freq-petite-perturbation} we obtain the following result:

\begin{proposition} \label{prop-estim-S}
Let $\e > 0$, $n \in \N$ and $M \in \R$. Let $\p \in C_0^\infty(\R)$. Let $\d$ be as in the statement of Theorem \ref{th-low-freq}. Let $\y_0 \in ]0,1]$ be given by Proposition \ref{prop-low-freq-petite-perturbation}. Then there exists $C \geq 0$ and a neighborhood $\Uc$ of $0$ in $\C$ such that for $\yyy \in ]0,\y_0]$ and $z \in \Uc \cap \C_+$ we have 
\[
\nr{\pppg x^{M} \Sp^{(n)}(z) \pppg x ^{-\d}}_{\Lc(L^2)} \leq C\left( 1 + \abs z ^{\frac d 2 - \e - 1 - n} \right),
\]
where $\ybar = (\y_0,\yyy)$.
\end{proposition}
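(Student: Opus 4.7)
\emph{Proof plan.} Since $z \mapsto \Ri(z)$ is analytic on $\rho(\Hii)$ with $\Ri^{(n)}(z) = n!\,\Ri^{n+1}(z)$, we have $\Sp^{(n)}(z) = n!\,\Kz \Ri^{n+1}(z) \p(\Pg)$, so the task reduces to bounding $\pppg x^M \Kz \Ri^{n+1}(z) \p(\Pg) \pppg x^{-\d}$ in $\Lc(L^2)$. The plan rests on three ingredients: (a) the differential operator $\Kz = \Pcco + W$ has smooth compactly supported coefficients (inherited from $\h_{\y_0}$ and the $b_j$), so $\supp(\Kz v) \subset K$ for a fixed compact $K \subset \R^d$ independent of $v$; (b) interior elliptic regularity for the elliptic (pseudo-)differential operator $(\Hii-z)^{n+1}$ of order $2(n+1)$, with constants uniform in $z$ near $0$ and in $\yyy \in \,]0,\y_0]$; (c) the weighted resolvent estimate of Proposition~\ref{prop-low-freq-petite-perturbation}.

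Fix $u \in L^2$ and set $v := \Ri^{n+1}(z) \p(\Pg) \pppg x^{-\d} u$, so that $(\Hii - z)^{n+1} v = f$ with $f := \p(\Pg) \pppg x^{-\d} u$. By (a), $\pppg x^M \Kz v$ is supported in $K$ and $\pppg x^M$ is bounded there, so
\[
\|\pppg x^M \Kz v\|_{L^2} \leq C \|v\|_{H^2(K)}.
\]
By (b), choosing a slightly larger compact $K' \Supset K$,
\[
\|v\|_{H^2(K)} \leq \|v\|_{H^{2(n+1)}(K)} \leq C\bigl( \|f\|_{L^2(K')} + \|v\|_{L^2(K')} \bigr).
\]
Uniformity of the constant holds because the principal symbol of $\Hii-z$ is $\innp{G_{\y_0}(x)\xi}{\xi}$, depending only on the fixed $\y_0$; the lower-order perturbation $\Bai$ is a pseudo-differential operator of order $\a<2$ whose symbol seminorms are uniformly bounded in $\yyy$; and $z$ stays in a bounded neighborhood of $0$. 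Moreover $\|f\|_{L^2} \leq C\|u\|_{L^2}$ since $\p(\Pg)$ and multiplication by $\pppg x^{-\d}$ are $L^2$-bounded.

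To estimate $\|v\|_{L^2(K')}$, pick $\chi \in C_0^\infty(\R^d)$ with $\chi \equiv 1$ on $K'$ and factor
\[
\|v\|_{L^2(K')} \leq \|\chi \pppg x^\d\|_\infty \cdot \|\pppg x^{-\d}\Ri^{n+1}(z)\pppg x^{-\d}\|_{\Lc(L^2)} \cdot \|\pppg x^\d \p(\Pg)\pppg x^{-\d}\|_{\Lc(L^2)} \cdot \|u\|_{L^2}.
\]
The first factor is finite since $\chi$ has compact support; the middle factor is $O\bigl(1 + |z|^{d/2-\e-1-n}\bigr)$ by Proposition~\ref{prop-low-freq-petite-perturbation} applied with $\b_l=\b_r=0$; and the last is finite by a standard Helffer--Sj\"ostrand argument for $\p(\Pg)$ with $\p \in C_0^\infty(\R)$ and $\Pg$ self-adjoint. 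Chaining these bounds yields the required estimate for $\pppg x^M \Sp^{(n)}(z) \pppg x^{-\d}$ with constant uniform in $\yyy \in \,]0,\y_0]$ and $z$ in a neighborhood $\Uc$ of $0$.

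The only nontrivial point is the uniform interior elliptic regularity in $z$ and $\yyy$, addressed above; once that is granted, the argument is a routine assembly of standard ingredients.
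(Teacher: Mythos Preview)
Your approach differs from the paper's, and the interior elliptic regularity step (b) has a genuine gap.

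The paper's proof is a one-line factorization. Since $\Kz$ is a second-order differential operator with compactly supported coefficients, one writes
\[
\pppg x^{M}\Kz\,\Ri^{n+1}(z)\,\p(\Pg)\pppg x^{-\d}
=\big[\pppg x^{M}\Kz(1-\D)^{-1}\pppg x^{\d}\big]\cdot\big[\pppg x^{-\d}(1-\D)\Ri^{n+1}(z)\pppg x^{-\d}\big]\cdot\big[\pppg x^{\d}\p(\Pg)\pppg x^{-\d}\big].
\]
The first bracket is a bounded $\Psi$DO (compactly supported coefficients absorb both weights), the last is bounded by Helffer--Sj\"ostrand, and the middle bracket is precisely Proposition~\ref{prop-low-freq-petite-perturbation} with $\b_l=2$, $\b_r=0$. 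Thus the two derivatives needed to control $\Kz$ are supplied directly by the proposition; no elliptic regularity is invoked.

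Your route instead uses only the case $\b_l=\b_r=0$ of Proposition~\ref{prop-low-freq-petite-perturbation} and tries to recover the missing $H^2$ control on $K$ via interior elliptic regularity for $(\Hii-z)^{n+1}$. The estimate you state,
\[
\|v\|_{H^{2(n+1)}(K)}\leq C\big(\|f\|_{L^2(K')}+\|v\|_{L^2(K')}\big),
\]
is the standard form for \emph{local} elliptic operators and does not follow for a nonlocal $\Psi$DO: $\Bai$ contains $\pppg D^{\a}$, so $(\Hii-z)v$ on $K'$ depends on $v$ on all of $\R^d$. Concretely, for $n=0$ one has $(\Pii-z)v=f+i\Bai v$, and interior regularity for the differential operator $\Pii-z$ leaves the term $\|\Bai v\|_{L^2(K')}$, which is not controlled by $\|v\|_{L^2(K')}$. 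Your justification (uniform symbol seminorms of $\Bai$) addresses only the size of constants, not the non-locality. The argument can be repaired---using off-diagonal kernel decay one obtains an estimate with $\|\pppg x^{-\d}v\|_{L^2(\R^d)}$ on the right, which you do control---but this is substantially more work than simply invoking the $\b_l=2$ case of Proposition~\ref{prop-low-freq-petite-perturbation} as the paper does.
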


\begin{proof}
The Proposition is a consequence of Proposition \ref{prop-low-freq-petite-perturbation}, the boundedness of $\p(\Pg)$ in $L^{2,\d}$ and the boundedness of $\pppg x ^M \Kz (1-\D) \inv \pppg x^\d$.
\end{proof}

\begin{remark} \label{rem-yy-yyy}
Until now we had not used the distinction between $\yy$ and $\yyy$. However the size of $\pppg x^M \Kzyy$ depends on $\yy$, so $\yy$ has to be fixed in order to obtain uniform estimates in Proposition \ref{prop-estim-S} and in Proposition \ref{prop-S-eps} below. On the other hand we have to keep the possibility to take $\yyy$ small. More precisely the choice of the cut-off function $\p$ in Proposition \ref{prop-S-eps} (and hence in the proof of Proposition \ref{prop-low-freq-perturbation-inter}) will depend on $\yy$, and then the choice of $\yyy$ will in turn depend on $\p$. This is why we could not simply take $\yy = \yyy$ in the definition of $\Hii$.
\end{remark}

\begin{proposition} \label{prop-S-eps}
Let $\y_0 \in ]0,1]$ be given by Proposition \ref{prop-low-freq-petite-perturbation}. Let $\e_1 > 0$, $\s > 2$ and $M \geq 0$. Then there exist a bounded neighborhood $\Uc$ of 0 in $\C$, $\p \in C_0^\infty(\R)$ equal to 1 on a neighborhood of 0 and $\yoo \in ]0,\y_0]$ such that for $\yyy \in ]0,\yoo]$ and $z \in \Uc \cap \C_+$ we have 
\[
\nr{\pppg x ^M \Spz \pppg x^{-\s}}_{\Lc(L^2)} \leq \e_1,
\]
where $\ybar = (\y_0,\yyy)$.
\end{proposition}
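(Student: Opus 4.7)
The strategy is to refine the uniform boundedness from Proposition~\ref{prop-estim-S} by extracting additional smallness from the three free ingredients in the statement: the narrowness of $\supp \p$, the smallness of $\yyy$ (which drives $\Bai$ to zero), and the smallness of $|z|$ through $\Uc$. The natural decomposition
\[
\Kz \Ri(z) \p(\Pg) = \Kz (\Pii - z)\inv \p(\Pii) + \Kz \bigl[\Ri(z) - (\Pii - z)\inv\bigr] \p(\Pg) + \Kz (\Pii - z)\inv \bigl[\p(\Pg) - \p(\Pii)\bigr]
\]
isolates a self-adjoint main term from two error terms to be treated separately.

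For the main term, the plan is to use the functional calculus of $\Pii$ to write $(\Pii - z)\inv \p(\Pii) = g_z(\Pii)$ with $g_z(t) = \p(t)/(t-z)$, and then to combine this with Stone's formula. Fixing $\d' \in \,]1, \s[$, Proposition~\ref{prop-low-freq-petite-perturbation} specialized to $\yyy = 0$ (which is allowed since its constants are uniform in $\ybar \in \,]0, \y_0]^2$) gives the uniform bound $\nr{\pppg x^{-\d'} (\Pii - z)\inv \pppg x^{-\d'}}_{\Lc(L^2)} \leq C$ for $z$ close to $0$, hence a corresponding bound on the weighted spectral measure of $\Pii$ near $0$. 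A spectral integral estimate for $g_z(\Pii)$, composed on the left with the operator $\pppg x^M \Kz (1-\D)\inv \pppg x^{\d'}$ (bounded by the same reasoning as in the proof of Proposition~\ref{prop-estim-S}), yields a bound of order $\sqrt{\abs{\supp \p}}$ times a factor coming from $g_z$, which is made small by choosing $\supp \p$ narrow.

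For the first error term, the plan is to use the standard identity $\Ri(z) - (\Pii - z)\inv = i(\Pii - z)\inv \Bai \Ri(z)$ together with the weighted bounds of Proposition~\ref{prop-low-freq-petite-perturbation} on $\Ri(z)$ and $(\Pii - z)\inv$, and with the fact that $\pppg x^{\d_0} \Bai \pppg x^{\d_0}$ is of size $O(\yyy^{\rho/2})$ in the relevant norms (shown along the lines of the proof of Proposition~\ref{prop-Pii-Hii-D}(ii)). This gives a term arbitrarily small in $\yyy$ once $\p$ is fixed. For the second error, I would use the Helffer--Sj\"ostrand representation
\[
\p(\Pg) - \p(\Pii) = -\frac{1}{\pi} \int \bar\partial \tilde\p(\zeta)\, (\Pg - \zeta)\inv \Kz (\Pii - \zeta)\inv \, dx\, dy,
\]
in which the fixed compactly supported $\Kz$ is sandwiched between weighted resolvents bounded on $\supp \bar\partial \tilde\p$ by Theorem~\ref{th-inter-freq} (for $\Re\zeta$ away from $0$) and by \eqref{estim-Rg-trivial} (for $\Re\zeta < 0$); the integration domain shrinks with $\supp \p$, delivering the needed smallness.

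The three parameters are then fixed in order: first $\p$ with sufficiently narrow support (absorbing the main term and the $\p(\Pg) - \p(\Pii)$ error), then $\yoo \in \,]0, \y_0]$ small enough (depending on $\p$) to control the $\Bai$ error, and finally $\Uc$ shrunk to a small neighborhood of $0$ to absorb residual $z$-dependent constants under $\e_1$. The main obstacle I expect is the uniform control of the main term as $z$ approaches the real axis inside $\Uc$, since the naive spectral integral $\int \abs{\p(\l)} / \abs{\l - z}\, d\l$ against a merely bounded weighted spectral density of $\Pii$ need not be uniformly small; closing the argument relies on the fact that the weighted spectral density of $\Pii$ inherited from Proposition~\ref{prop-low-freq-petite-perturbation} enjoys, for $d \geq 3$, enough decay at $\l = 0$ to make this integral of order $\abs{\supp \p}^{(d-2)/2}$, uniformly in $z \in \Uc \cap \C_+$.
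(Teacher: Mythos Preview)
Your decomposition into a self-adjoint main term plus two errors is natural, but the route you propose for the main term does not close, and this is precisely the obstacle you flag at the end. Proposition~\ref{prop-low-freq-petite-perturbation} gives only \emph{boundedness} of the weighted resolvent of $\Piio$ near $z=0$; it does not give any decay of the weighted spectral density $e(\l)$ at $\l=0$. Even if one established $e(\l)\lesssim \l^{(d-2)/2}$ by a separate argument, the bound $\int_{\supp\p}\abs{\p(\l)/(\l-z)}\,e(\l)\,d\l$ is still not uniform as $z\to\t\in(0,\abs{\supp\p})$: the decay of $e$ is at $\l=0$, not at $\l=\t$, and the singularity $\abs{\l-\t}^{-1}$ is not integrable. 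So the spectral-integral argument cannot deliver uniform smallness in $z\in\Uc\cap\C_+$. Similarly, your Helffer--Sj\"ostrand treatment of $\p(\Pg)-\p(\Piio)$ does not produce smallness: when $\supp\p$ shrinks while $\p(0)=1$, the derivatives of $\p$ blow up at exactly the rate that cancels the shrinking of the integration domain.

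The paper's argument avoids both issues by a different mechanism. It does not split off a self-adjoint piece; instead it uses the Hardy inequality to control $\pppg x^M \Kz$ by the $\dot H^2$ norm, then Proposition~\ref{prop-Pii-Hii-D}(iii) to replace $\dot H^2$ by $\nr{\Piio\,\cdot\,}_{L^2}$, and finally the algebraic identity
\[
\Piio\,\Ri(i\m)=\Id+i\m\,\Ri(i\m)+i\Bai\,\Ri(i\m)
\]
at purely imaginary $z=i\m$. The first term yields $\nr{\p(\Pg)\pppg x^{-\s}}$, which is made small by a compactness argument: one inserts a fixed $\p_1\in C_0^\infty$ with $\p_1\p=\p$, observes that $\p_1(\Pg)\pppg x^{-\s}$ is compact, and uses that $\p(\Pg)\to 0$ weakly as $\supp\p\to\{0\}$ (since $0$ is not an eigenvalue of $\Pg$), so the product tends to $0$ in norm. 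The $\m$-term is harmless by the trivial resolvent bound, and the $\Bai$-term is $O(\yyy^{1+\rho})$. Having obtained smallness on the imaginary axis, the passage to $z=\t+i\m$ is done by integrating the derivative $\partial_\t\Spz$ and invoking Proposition~\ref{prop-low-freq-petite-perturbation} for $\Ri^2$, which gives an extra $O(\abs\t)$ and hence smallness once $\Uc$ is chosen. The two-step structure (imaginary axis first, then perturb in $\Re z$) is what replaces your spectral-density decay, and the compactness trick for $\p(\Pg)\pppg x^{-\s}$ is what replaces the shrinking-integration-domain idea.
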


\begin{proof}
According to the Hardy inequality we have for $u \in \Sc$
\begin{align*}
\nr{\pppg x^M \Pcco u}_{L^2}
& \lesssim \sum_{j,k=1}^d \nr{\pppg x^M \big(D_j (\h_{\y_0} G_{j,k}) \big) D_k u}_{L^2} + \sum_{j,k=1}^d\nr{\pppg x^M \h_{\y_0}  \big(G_{j,k} - \d_{j,k}\big)  D_j D_k u}_{L^2}\\
& \lesssim \nr{u}_{\dot H^2}
\end{align*}
and 
\[
\nr{\pppg x^M W u}_{L^2} \lesssim \sum_{j=1}^d \nr{\pppg x^M b_j D_j  u}_{L^2}  \lesssim \nr{u}_{\dot H^2}.
\]
According to the third statement of Proposition \ref{prop-Pii-Hii-D} we obtain for $\m > 0$
\begin{eqnarray*}
\lefteqn{\nr{\pppg x ^M \Kz \Ri (i\m) \p (\Pg) \pppg x^{-\s} u}_{L^2} \lesssim \nr{ \Ri (i\m) \p (\Pg) \pppg x^{-\s} u}_{\dot H^2}}\\
&& \lesssim \nr{ \Piio \Ri (i\m) \p (\Pg) \pppg x^{-\s} u}_{L^2}\\
&& \lesssim \nr{\p (\Pg) \pppg x^{-\s} u}_{L^2} + \m \nr{\Ri(i\m)\p (\Pg) \pppg x^{-\s} u}_{L^2}+ \nr{\Bai \Ri(i\m) \p(\Pg) \pppg x^{-\s} u} \\
&& \lesssim \nr{\p (\Pg) \pppg x^{-\s} u}_{L^2} + O(\yyy^{1+\rho})\nr{\pppg x^{-1-\rho} (-\D+1) \Ri(i\m)\p (\Pg) \pppg x^{-\s} u}_{L^2} .
\end{eqnarray*}
The term with the factor $\m$ is estimated by the analog of \eqref{estim-res-trivial} for $\Hii$. For the term involving $\Bai$ we have used the fact that 
\[
\nr{\Bai (-\D+1)\inv \pppg x^{1+\rho}}_{\Lc(L^2)} = \bigo {\yyy} 0 (\yyy^{1+\rho}).
\]
Let $\p_1 \in C_0^\infty(\R)$ be equal to 1 on [-1,1]. For $\p \in C_0^\infty(\R)$ supported in $]-1,1[$ we have $\p(P) \pppg x ^{-\s} = \p(P) \p_1 (P) \pppg x ^{-\s}$. The operator $\p_1(P) \pppg x ^{-\s}$ is compact. On the other hand, since 0 is not an eigenvalue of $P$ the operator $\p(P)$ goes weakly to 0 when the support of $\p$ shrinks to $\singl 0$. Thus we can find $\p$ equal to 1 on a neighborhood of 0 such that for $\m > 0$ and $\yyy$ small enough we have
\[
\nr{\pppg x ^M S_{\p,\ybar}(i\m) \pppg x^{-\s}}_{\Lc(L^2)} \leq \frac {\e_1} 2.
\]
Now let $\t \in \R$ and $\m > 0$. We have 
\begin{eqnarray*}
\lefteqn{\nr{\pppg x^M \Kz \big( \Ri(\t+i\m) - \Ri(i\m) \big) \p(\Pg) \pppg x^{-\s}}}\\
&& \leq \nr{\pppg x^M \Kz (-\D+1)\inv \pppg x^\s} \times  \int_0^\t \nr{\pppg x^{-\s} (-\D+1)  \Ri^2(\th + i\m) \pppg x^{-\s}}\, d\th.
\end{eqnarray*}
The first factor is bounded by pseudo-differential calculus, and the second factor is of size $O(\abs \t)$ according to Proposition \ref{prop-low-freq-petite-perturbation}. Thus this norm is not greater that $\frac {\e_1} 2$ if $\t$ is small enough, and the proposition is proved.
\end{proof}

For $z \in \C_+$ and $\yyy \in ]0,1]$ we set 
\begin{equation} \label{def-Ro}
\Ro(z) = \big( \Pg  - z\big) \inv
\end{equation}
and
\[
\tRi(z) = \big( \Pg - i\Bai - z\big) \inv.
\]
In the following proposition we prove the resolvent estimates for $\tRi(z)$. Then we will add the contribution of $\Bac$ in the dissipative part to conclude the proof of Theorem \ref{th-low-freq}.

\begin{proposition} \label{prop-low-freq-perturbation-inter}
Let $\e > 0$ and $n \in \N$. Let $\d$ be as in the statement of Theorem \ref{th-low-freq}. Then there exist $\yyy$, $C\geq 0$ and a neighborhood $\Uc$ of $0$ in $\C$ such that for $z \in \Uc \cap \C_+$ and $\b_l,\b_r \in \R_+$ with $\b_l + \b_r \leq 2$ we have 
\[
\nr{\pppg x^{-\d} \pppg D^{\b_l} \tRi^{n+1}(z) \pppg D^{\b_r} \pppg x ^{-\d}}_{\Lc(L^2)} \leq C \left( 1 + \abs z ^{\frac d 2 - \e - 1 - n} \right).
\]
\end{proposition}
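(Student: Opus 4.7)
The plan is a perturbation argument. Since $\Pg - i\Bai = \Hii + \Kz$ with $\Kz = \Pcc + W$, I would start from the two forms of the resolvent identity
\[
\tRi(z) = \Ri(z) - \tRi(z)\Kz\Ri(z) = \Ri(z) - \Ri(z)\Kz\tRi(z)
\]
and try to reduce estimates on $\tRi^{n+1}(z)$ to those on $\Ri^{k+1}(z)$ (already controlled by Proposition~\ref{prop-low-freq-petite-perturbation}) and on derivatives of $\Sp(z) = \Kz\Ri(z)\p(\Pg)$ (controlled by Propositions~\ref{prop-estim-S} and~\ref{prop-S-eps}), for a cutoff $\p \in C_0^\infty(\R)$ equal to $1$ near $0$.

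First I would multiply the second identity on the right by $\p(\Pg)$ to get the basic identity
\[
\tRi(z)\p(\Pg) = \Ri(z)\p(\Pg) - \tRi(z)\Sp(z).
\]
The complementary piece $\tRi(z)(1-\p(\Pg))$ has to be handled separately: since $1-\p(\Pg)$ localizes away from the spectrum of $\Pg$ at $0$ and the coefficients of $\Kz$ have bounded support, I expect to bound $\pppg x^{-\d}\tRi(z)(1-\p(\Pg))\pppg x^{-\d}$ uniformly on a neighborhood of $0$ by combining functional and pseudo-differential calculus with the trivial maximal-dissipative estimate and the intermediate-frequency bounds of Theorem~\ref{th-inter-freq}.

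Next I would fix $\p$ and $\yyy$ according to Proposition~\ref{prop-S-eps} so that $\pppg x^M \Sp(z)\pppg x^{-\s}$ is as small as desired on a neighborhood $\Uc$ of $0$, for chosen $\s > 2$ and $M$ large. A key point is that the coefficients of $\Kz$ are compactly supported (through $\h_{\y_0}$ and the $b_j \in C_0^\infty$), so $\Sp(z)$ has compact support on the left and the weight $\pppg x^M$ can be traded for an arbitrary $\pppg x^{-\d}$ at no cost, sidestepping the restriction $\s > 2$ when $\d$ is smaller (as happens for $n = 0,1$ in low dimensions). For the $(n{+}1)$-th power I would differentiate the Step~1 identity $n$ times, using $\partial_z^n \tRi(z) = n!\,\tRi^{n+1}(z)$ and the Leibniz rule. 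This expresses $\tRi^{n+1}(z)\p(\Pg)$ as $\Ri^{n+1}(z)\p(\Pg)$ plus a sum of terms $\tRi^{j+1}(z)\Sp^{(n-j)}(z)$ for $0 \leq j \leq n$. The term with $j=n$ contains $\tRi^{n+1}(z)\Sp(z)$ and can be absorbed on the left by smallness, while for $j < n$ the bounds of Propositions~\ref{prop-low-freq-petite-perturbation} and~\ref{prop-estim-S} provide the powers of $|z|$ that recombine to the announced $1 + |z|^{d/2 - \e - 1 - n}$. The extra derivatives $\pppg D^{\b_l}, \pppg D^{\b_r}$, with $\b_l + \b_r \leq 2$, are dealt with exactly as in the proof of Proposition~\ref{prop-low-freq-petite-perturbation}: by pseudo-differential calculus one commutes them past outer factors $\Ri(-1)$, reducing (when $n \geq 1$) to the case $\b_l = \b_r = 0$.

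The hard part will be the interplay between Step~2 and Step~3: the smallness of $\Sp(z)$ has to be used in the weight $\pppg x^{-\d}$ of the conclusion, which can be below the threshold $\s > 2$ required by Proposition~\ref{prop-S-eps}. Resolving this through the compact-support structure of $\Kz$, while keeping estimates uniform in $z \in \Uc\cap\C_+$ and compatible with the powers of $|z|$ produced by Propositions~\ref{prop-low-freq-petite-perturbation} and~\ref{prop-estim-S}, is the main technical balancing act — closely parallel to the corresponding step in the Bouclet--Royer analysis of the damped wave equation. Once the absorption is in place, the remainder is careful bookkeeping of weights, derivative orders and powers of~$|z|$.
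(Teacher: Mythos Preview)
Your overall architecture is right, but there is a genuine gap in how you treat the complementary piece $\tRi(z)(1-\p)(\Pg)$. You cannot bound $\pppg x^{-\d}\tRi(z)(1-\p)(\Pg)\pppg x^{-\d}$ directly by ``functional and pseudo-differential calculus with the trivial maximal-dissipative estimate and the intermediate-frequency bounds'': the spectral parameter $z$ is near $0$, so Theorem~\ref{th-inter-freq} does not apply; the maximal-dissipative bound blows up like $1/\Im z$; and $(1-\p)(\Pg)$ is a spectral projection of $\Pg$, not of the non-self-adjoint operator $\Pg-i\Bai$ whose resolvent $\tRi(z)$ you are trying to control. What the paper does instead is to pass through $\Ro(z)=(\Pg-z)^{-1}$: writing $\BB(z)=\Ro(z)(1-\p)(\Pg)$ (which \emph{is} uniformly bounded on every $L^{2,\gamma}$ by the spectral theorem), the identity $\tRi(z)(\Pg-z)=1+i\,\tRi(z)\Bai$ yields
\[
\tRi(z)(1-\p)(\Pg)=\BB(z)+i\,\tRi(z)\,\Bai\,\BB(z).
\]
The second term is a \emph{second} absorption term, controlled not by Proposition~\ref{prop-S-eps} but by the smallness of $\pppg x^{\s}\Bai\BB(z)\pppg x^{-\s}$ as $\yyy\to 0$. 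This is precisely where $\yyy$ is finally fixed, and your sketch never sees it.

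There is also a smaller issue with the weight mismatch. Your idea of using the compact support of $\Kz$ to trade weights only helps on the \emph{left} of $\Sp(z)$; the restriction $\s>2$ in Proposition~\ref{prop-S-eps} sits on the \emph{right}. The paper resolves this by a two-pass argument: first prove \eqref{estim-tRi} with the heavier weight $\pppg x^{-\s}$, $\s=\max(\d,3)$, on the right (so both absorption terms $-\Sp(z)$ and $i\Bai\BB(z)$ can be made $\leq \tfrac12$), and only then re-run the identity using Proposition~\ref{prop-estim-S} (boundedness, not smallness) on $\Sp(z)$ to replace $\s$ by $\d$; at that stage only the $\Bai\BB(z)$ term still needs to be absorbed, and it is small in the $\d$-weight as well for $\yyy$ small.
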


\begin{proof}
As for Proposition \ref{prop-low-freq-petite-perturbation} we see that it is enough to consider the case $\b_l = \b_r = 0$. Let $\s = \max(\d,3)$. Let $\e_1 \in\big] 0,\frac 14\big]$ and consider $\p \in C_0^\infty(\R)$ as given by Proposition \ref{prop-S-eps} for $M = \s$. We set $\BB(z) = \Ro(z) (1-\p)(\Pg)$. For any $\g \in \R$, this operator and its derivatives are uniformly bounded on $L^{2,\g}$ for $z \in \C_+$ close to 0. Let $\y_0$ be given by proposition \ref{prop-low-freq-petite-perturbation}. For $\yyy \in ]0,\y_0]$ we write $\ybar$ for $(\y_0,\yyy)$. We have 
\[
\tRi(z) = \Ri(z) \p(\Pg) - \tRi(z) \Sp(z) + \BB(z) + i \tRi(z) \Bai \BB(z),
\]
and hence for $n \in \N$
\begin{equation} \label{decomp-tRi}
\begin{aligned}
\tRi^{(n)}(z)
& = \Ri^{(n)}(z) \p(\Pg) + \tRi^{(n)}(z) \big( - \Sp(z) + i \Bai \BB(z) \big) + \BB^{(n)}(z)\\
& \quad  + i \sum_{j=0}^{n-1} C^n_j \tRi^{(j)}(z)\big( - \Sp^{(n-j)}(z) + i \Bai \BB^{(n-j)}(z) \big) .
\end{aligned}
\end{equation}
We prove by induction on $n \in \N$ that 
\begin{equation} \label{estim-tRi}
\nr{\pppg x^{-\d} \tRi^{(n)}(z) \pppg x^{-\s}} \lesssim 1 + \abs{z}^{\frac d 2 - \e - n - 1}.
\end{equation}
According to Propositions \ref{prop-low-freq-petite-perturbation}, \ref{prop-estim-S} and \ref{prop-S-eps}, the fact that $\p(\Pg)$ is uniformly bounded on $L^{2,\s}$ and the inductive induction for the sum in \eqref{decomp-tRi} (it vanishes if $n=0$), there exists $C \geq 0$ such that for $z \in \C_+$ close to 0 we have 
\[
\nr{\pppg x^{-\d} \tRi^{(n)}(z) \pppg x^{-\s}} \left(1 - \e_1 - \nr{\pppg x^{\s} \Bai \BB(z) \pppg x^{-\s} } \right) \leq C \left( 1 + \abs{z}^{\frac d 2 - \e - n - 1} \right).
\]
By pseudo-differential calculus we see that the norm of $\pppg x^{\s} \Bai \BB(z) \pppg x^{-\s}$ goes to 0 when $\yyy$ goes to 0. Thus if $\yyy$ is small enough we have 
\[
1 - \e_1  - \nr{\pppg x^{\s} \Bai \BB(z) \pppg x^{-\s} } \geq \frac 12,
\]
which concludes the proof of \eqref{estim-tRi}. In order to replace $\s$ by $\d$ we use \eqref{decomp-tRi} again and, estimating the second term with \eqref{estim-tRi} and Proposition \ref{prop-estim-S} instead of Proposition \ref{prop-S-eps} we obtain
\[
\nr{\pppg x^{-\d} \tRi^{(n)}(z) \pppg x^{-\d}} \left(1  - \nr{\pppg x^{\d} \Bai \BB(z) \pppg x^{-\d} } \right) \leq C \left( 1 + \abs{z}^{\frac d 2 - \e - n - 1} \right),
\]
and we conclude similarly.
\end{proof}

In order to prove Theorem \ref{th-low-freq} it remains to add the dissipative part with compactly supported absorption index. We begin with a lemma:

\begin{lemma} \label{lem-Rc0-Rc1}
Let $\Hc$ be a Hilbert space. Let $\Rc_0,\Rc_1 \in \Lc(\Hc)$ and let $\Bc$ be such that
\[
\Rc_1 = \Rc_0 - \Rc_0 \Bc \Rc_1 = \Rc_0 - \Rc_1 \Bc \Rc_0.
\]
Then for all $m \in \N$ we can write $\Rc_1^{m+1}$ as a linear combination of terms of the form 
\begin{equation}
\Rc_{0}^{m_1+1} \Bc \Rc_{j_2}^{m_2+1} \Bc \dots \Bc \Rc_{j_{k-1}}^{m_k+1} \Bc \Rc_0^{m_k + 1}
\end{equation}
where $k \in \N^*$, $j_1,\dots,j_{k-1} \in \{ 0,1\}$ and $m_1,\dots,m_k \in \N$ are such that 
\[ 
\sum_{l=1}^k m_l \leq m \quad \text{and} \quad m_l = 0 \quad \text{if} \quad j_l = 1.
\]
\end{lemma}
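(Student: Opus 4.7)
My plan is to proceed by induction on $m$, using the two resolvent identities to write everything with $\Rc_0$-factors at both ends.

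\textbf{Base case ($m=0$).} Starting from $\Rc_1 = \Rc_0 - \Rc_0 \Bc \Rc_1$ and substituting $\Rc_1 = \Rc_0 - \Rc_1 \Bc \Rc_0$ into the right-hand $\Rc_1$, I obtain
\[
\Rc_1 = \Rc_0 - \Rc_0 \Bc \Rc_0 + \Rc_0 \Bc \Rc_1 \Bc \Rc_0.
\]
This is a combination of three terms of the required form: one with $k=1$ and $m_1=0$; one with $k=2$, $m_1=m_2=0$; and one with $k=3$, $j_2=1$, $m_1=m_2=m_3=0$. In all of them $\sum m_l = 0 \leq m$, and the middle factor carries $j_2 = 1$ with $m_2 = 0$ as required.

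\textbf{Inductive step.} Assume the claim holds for some $m \geq 0$, and consider $\Rc_1^{m+2} = \Rc_1^{m+1} \cdot \Rc_1$. By the induction hypothesis, it suffices to show that for every admissible term
\[
T = \Rc_0^{m_1+1} \Bc \Rc_{j_2}^{m_2+1} \Bc \dots \Bc \Rc_{j_{k-1}}^{m_{k-1}+1} \Bc \Rc_0^{m_k+1}
\]
with $\sum_l m_l \leq m$, the product $T \Rc_1$ is itself an admissible linear combination with $\sum_l m_l \leq m+1$. Using the base-case identity on the rightmost $\Rc_1$,
\[
T \Rc_1 = T \Rc_0 - T \Rc_0 \Bc \Rc_0 + T \Rc_0 \Bc \Rc_1 \Bc \Rc_0,
\]
and absorbing $T \Rc_0 = \Rc_0^{m_1+1} \Bc \dots \Bc \Rc_0^{(m_k+1)+1}$ into the last block, I find that each of the three resulting terms has the prescribed form, the $\Rc_1$ insertion in the third term carries $m=0$ (so the constraint $m_l = 0$ whenever $j_l = 1$ is preserved), and the new sum of exponents increases by exactly $1$ in each case, hence is bounded by $m+1$.

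\textbf{Main potential obstacle.} The only subtle point is bookkeeping: one must verify that each of the three substitutions shifts the sum $\sum m_l$ by exactly one and that the new middle occurrences of $\Rc_1$ always appear with exponent $1$ (i.e.\ $m_l+1=1$), so the constraint on $j_l = 1$ blocks is maintained throughout the induction. This is straightforward but needs to be tracked carefully. No other estimate or identity is needed; the lemma is purely algebraic.
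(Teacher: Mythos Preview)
Your proof is correct and follows exactly the approach of the paper: you derive the key identity $\Rc_1 = \Rc_0 - \Rc_0 \Bc \Rc_0 + \Rc_0 \Bc \Rc_1 \Bc \Rc_0$ by combining the two resolvent equations, and then induct on $m$. The paper's own proof is just a two-line sketch of precisely this argument, so your write-up is simply a fuller version of it.
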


\begin{proof}
Using both of the identities between $\Rc_1$ and $\Rc_0$ we obtain 
\[
\Rc_1(z)  = \Rc_0(z) - \Rc_0(z)  \Bc \Rc_0(z)  + \Rc_0(z)   \Bc  \Rc_1(z)  \Bc  \Rc_0(z) .
\]
Then the result is proved by induction on $m$.
\end{proof}

Now we can finish the proof of Theorem \ref{th-low-freq}:

\begin{proof}[Proof of Theorem \ref{th-low-freq}]
Let $\yyy$ be given by Proposition \ref{prop-low-freq-perturbation-inter}.
Let $T = \pppg D^{\frac \a 2} a \in \Lc(H^1,L^2)$. We have $T^*T = \Ba$ so according to Corollary \ref{cor-quad-estim} we have 
\[
\nr{T R(z) T^*}_{\Lc(L^2)} \leq 1.
\]
Let $M \geq 0$ and $T_M = \pppg x^{-M} \pppg D^{\frac \a 2}$. We can write $\Bac = T_M^* \Bc_1 T = T^* \Bc_2 T_M = T_M^* \Bc_3 T_M$ where $\Bc_1$, $\Bc_2$ and $\Bc_3$ are bounded on $L^2$.
According to Lemma \ref{lem-Rc0-Rc1} applied with
\[
\Rc_0 = \tRi(z), \quad \Rc_1  = \Rg(z) \quad \text{and} \quad \Bc = \Bac,
\]
we can write $R^{m+1}(z)$ as a sum of terms of the form 
\begin{equation} \label{term-Rc1}
\Tc = \Rc_{0}^{m_1+1}(z) \Bc \Rc_{j_2}^{m_2+1}(z) \Bc \dots \Bc \Rc_{j_{k-1}}^{m_k+1}(z) \Bc \Rc_0^{m_k + 1}(z),
\end{equation}
where $k \in \N^*$, $j_1,\dots,j_{k-1} \in \{ 0,1\}$ and $m_1,\dots,m_k \in \N$ are such that $\sum_{l=1}^k m_l \leq m$ and $m_l = 0$ if $j_l = 1$.
If $M$ is large enough we obtain for such a term 
\begin{align*}
\nr {\pppg x^{-\d} \Tc \pppg x^{-\d}}
& \lesssim \nr{\pppg x^{-\d} \tRi^{m_1+1}(z) T_M^*} \times  \prod _{\substack {l=2\\ j_l = 0}}^{k-1} \nr{T_M \tRi^{m_l+1}(z) T_M^*}\\
& \qquad  \times\prod _{\substack{l=2 \\ j_l = 1}}^{k-1} \nr{T R(z) T^*} \times  \nr{T_M \tRi^{m_k+1} (z) \pppg x^{-\d} } \\
& \lesssim \prod_{l=1}^k \left( 1 + \abs z^{\frac d 2 - 1 - m_l - \e}\right)\\
& \lesssim \left( 1 + \abs z^{\frac d 2 - 1 - m - \e}\right).
\end{align*}
This concludes the proof of Theorem \ref{th-low-freq}.
\end{proof}

\subsection{Sharp low frequency resolvent estimate} \label{sec-low-freq-sharp}

We finish this section with the proof of Theorem \ref{th-low-freq-sharp}. The result follows from the self-adjoint analog by a simple perturbation argument, using the quadratic estimates and the spatial decay of the dissipative term:

\begin{proof}[Proof of Theorem \ref{th-low-freq-sharp}]
According to the resolvent identity, Proposition \ref{prop-quad-estim} and Theorem 1.1 in \cite{boucletr} we have
\begin{align*}
\nr{\pppg x \inv \Rg(z) \pppg x \inv}
& = \nr{\pppg x \inv \Ro(z) \pppg x \inv} + \nr{\pppg x\inv \Ro(z) \sqrt{\Ba}} \nr{\sqrt {\Ba} \Rg(z) \pppg x \inv }\\
& \lesssim 1 + \nr{\pppg x\inv \Ro(z) \sqrt{\Ba}} \nr{\pppg x \inv \Rg(z) \pppg x \inv}^\frac 12.
\end{align*}
Moreover,
\begin{align*}
\nr{\pppg x\inv \Ro(z) \sqrt{\Ba}}
& \leq \nr{\pppg x\inv \Ro(i) \sqrt{\Ba}} + \abs{z-i} \nr{\pppg x\inv \Ro(z) \pppg x\inv} \nr{\pppg x \Ro(i) \sqrt{\Ba}}\\
& \lesssim 1.
\end{align*}
For the norms involving $\Ro(i)$ we have used the fact that $\pppg x^\s \Ro(i) \sqrt {Ba}$ extends to a bounded operator since for $\s \leq 1$ and $u \in \Sc$ we have by pseudo-differential calculus
\[
\nr{\sqrt{\Ba} \Ro(i) \pppg x^\s u}_{L^2}^2 \leq \innp{\pppg x^\s \Ro(-i) \Ba \Ro(i) \pppg x^\s u}u \lesssim \nr{u}_{L^2}^2.
\]
This gives
\[
\nr{\pppg x \inv \Rg(z) \pppg x \inv} \lesssim 1 + \nr{\pppg x \inv \Rg(z) \pppg x \inv}^{\frac 12},
\]
from which the conclusion follows.
\end{proof}

\section{High frequency estimates} \label{sec-high-freq}

In this section we prove Theorem \ref{th-high-freq}. To this purpose we use semiclassical analysis (see for instance \cite{zworski}). For $h > 0$ and $\z \in \C_+$ we set $\Hh = h^2 \Hg$, $\Ph = h^2 \Pg$ and $R_h(\z) = (\Hh -\z)\inv$. Then for $n\in\N$, $z \in \C_+$ and $h = \abs z^{-\frac 12}$ we have 
\begin{equation} \label{eq-Rz-Rh}
\Rg(z)^{n+1} = \frac 1 {\abs z^{n+1}} R_h^{n+1}(\hat z) = h^{2(n+1)} R_h^{n+1}(\hat z)
\end{equation}
(we recall that $\hat z = z / \abs z$).\\

In order to prove uniform estimates for the resolvent $R_h(z)$ we use again the Mourre method. For high frequencies and in a dissipative context we follow \cite{art-mourre,boucletr14}. Here we have to be careful with the form of the dissipative part $h^2 \Ba$.

Let $\h_\a \in C_0^\infty(\R)$ be positive in a neighborhood of 1 and such that $0 \leq \h_\a(r) \leq r^{\frac \a 2}$ for all $r \in \R_+$. For $h \in ]0,1]$ we set 
\[
\tBa =  a(x) \h_\a \big(-h^2\D \big) a(x).
\]
Then we have
\begin{equation} \label{ineq-Ba}
0 \leq h^{2-\bb} \tBa  \leq h^{2-\a} \tBa \leq h^2 (-\D)^{\frac \a 2} \leq h^2 \Ba,
\end{equation}
in the sense that for all $\f \in H^{\a/2}(\R^d)$ we have 
\begin{equation} \label{comp-Ba-tBa}
0 \leq h^{2-\bb} \innp{\tBa \f}\f_{L^2(\R^d)} \leq h^2 \innp{\Ba \f}\f_{L^2(\R^d)}.
\end{equation}
The operator $\tBa$ is a bounded pseudo-differential operator on $L^2$. Its principal symbol is 
\[
b(x,\x) = a(x)^2 \h_\a(\abs \x^2). 
\]
The damping assumption \eqref{hyp-damping} on bounded trajectories is satisfied with $b$ instead of $a$:
\[
\forall w \in \O_b, \exists T \in\R, \quad b \big( \vf^T (w) \big) > 0. 
\]

Set 
\[
f_0(x,\x ) = x\cdot \x.
\]
As in \cite{boucletr14} (see Proposition 8.1), we can prove that there exist an open neighborhood $\tilde J$ of 1, $f_c \in C_0^\infty(\R^{2n},\R)$, $\b \geq 0$ and $c_0 > 0$ such that on $p\inv(\tilde J)$ we have 
\begin{equation} \label{eq-minor-symbol}
\{ p, f_0 + f_c \} + \b b  \geq 3 c_0,
\end{equation}
where $\{p,q\}$ is the Poisson bracket $\nabla_\x p \cdot \nabla_x q - \nabla_x p \cdot \nabla _\x q$. The fact that the symbol of the dissipative part depends on $\x$ does not change anything in the proof of this statement.
We set 
\[
F_h = \Opw (f_0 + f_c),
\]
where $\Opw$ is the Weyl quantization:
\[
\Opw(q) u(x) = \frac 1 {(2\pi h)^n} \int_{\R^n} \int_{\R^n} e^{\frac ih \innp{x-y}\x} q \left( \frac {x+y} 2, \x \right) u(y) \, dy \, d\x.
\]

Let $J$ be a neighborhood of 1 and a compact subset of $\tilde J$. Let $\h \in C_0^\infty(\tilde J,[0,1])$ be equal to 1 on a neighborhood of $J$.
After multiplication by $(\h \circ p)^2$, the (easy) G\aa rding inequality (Theorem 4.26 in \cite{zworski}) gives for $h>0$ small enough
\[
\Op \Big( (\h \circ p)^2 \{ p, f_0+ f_c \} + \b b (\h \circ p)^2 + 3c_0 \big(1-(\h\circ p)^2\big) \Big) \geq 3c_0 - O(h) \geq 2 c_0.
\]
After multiplication by $h^{2-\bb}$ we obtain 
\[
\h (\Ph) \big( [\Ph , ih^{1-\bb} F_h] + \b h^{2-\bb} \tBa \big) \h (\Ph) + 3c_0 (1-\h^2)(\Ph) \geq 2 c_0 h^{2-\bb} - O(h^{3-\bb}).
\]
After conjugation by $\1 J (\Ph)$ we obtain for $h$ small enough
\[
\1 J (\Ph) \big( [\Ph , ih^{1-\bb} F_h] + \b h^{2-\bb} \tBa \big) \1 J (\Ph) \geq c_0 h^{2-\bb} \1 J (\Ph).
\]
According to \eqref{ineq-Ba} this finally gives
\begin{equation} \label{minor-mourre}
\1 J (\Ph) \big( [\Ph , ih^{1-\bb} F_h] + \b h^2 \Ba \big) \1 J (\Ph) \geq c_0 h^{2-\bb} \1 J (\Ph), 
\end{equation}
which is the main assumption of Definition \ref{def-mourre} with $\b h^2$ instead of $\b$ and $\a =c_0 h^{2-\bb}$.
\\

It remains to check the other assumptions of Definition \ref{def-mourre}. The first is proved as in \cite{boucletr14} (except that we look at the norm in the form domain $H^1$ instead of the domain $H^2$), and the commutator properties are proved using (standard) pseudo-differential calculus, considering $h$ as a parameter (for the dissipative part we cannot use $h^{2-\a} \tBa$ as above, so we have to control directly the commutators of $h^2 \Ba$ with $h^{1 - \bb} F_h$) .\\

Thus we have proved that for $h \in ]0,h_0]$ the operator $h^{1-\bb} F_h$ is a conjugate operator to $\Hh$ on a neighborhood $J$ of 1 with lower bounds $h^{2-\bb} c_0$ for some $c_0 > 0$. According to Theorem \ref{th-mourre} we have proved the following result with $\pppg {F_h}^{-\d}$ instead of $\pppg x^{-\d}$:

\begin{proposition} \label{prop-high-freq-Rh}
Let $n \in \N$ and $\d > n + \frac 12$. There exists a neighborhood $J$ of $1$, $h_0 > 0$ and $C \geq 0$ such that for all $\z \in \C_+$ with $\Re(\z) \in J$ we have 
\[
\nr{\pppg x^{-\d} R_h^{n+1}(\z) \pppg x^{-\d}}_{\Lc(L^2)} \leq \frac C {h^{(2-\bb)(n+1)}}.
\]
\end{proposition}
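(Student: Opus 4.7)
The plan is a direct application of Theorem \ref{th-mourre} with conjugate operator $A_h := h^{1-\bb} F_h$, followed by a resolvent-identity argument to trade the conjugate-operator weights for the spatial weights $\pppg x^{-\d}$. All four hypotheses of Definition \ref{def-mourre} have been verified in the discussion preceding the statement: invariance of the form domain $H^1$ under $e^{-itA_h}$, extension of the iterated commutators of $\Hh$ with $iA_h$ to operators in $\Lc(H^1, H^{-1})$ with the required norm controls (by semiclassical pseudo-differential calculus, handling $h^2 \Ba$ directly so as not to use the form comparison \eqref{ineq-Ba}), and the Mourre inequality \eqref{minor-mourre} on a neighborhood $\tilde J$ of $1$, with $\b$ replaced by $\b h^2$ and lower bound $\a_0 = c_0 h^{2-\bb}$. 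Theorem \ref{th-mourre} then yields, for any compact $I \subset \mathring{\tilde J}$ and $\z \in \C_+$ with $\Re(\z) \in I$,
\[
\nr{\pppg{A_h}^{-\d} R_h^{n+1}(\z) \pppg{A_h}^{-\d}}_{\Lc(L^2)} \leq \frac{C}{\a_0^{n+1}} = \frac{C'}{h^{(2-\bb)(n+1)}}.
\]

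To pass from $\pppg{A_h}^{-\d}$ to $\pppg x^{-\d}$, I would follow exactly the scheme used at the end of the proof of Theorem \ref{th-inter-freq}. Iterating the resolvent identities $R_h(\z) = R_h(i) + (\z - i) R_h(i) R_h(\z) = R_h(i) + (\z - i) R_h(\z) R_h(i)$, write $R_h^{n+1}(\z)$ as a linear combination of terms of the form $(\z - i)^\b R_h^{n+1+\b}(i)$ with $\b \in \N$, and of terms of the form
\[
(\z - i)^{2m - n - 1 + \n} R_h^m(i) R_h^\n(\z) R_h^m(i),
\]
with $\max(1, n+1-2m) \leq \n \leq n+1$ and $m$ as large as convenient. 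Since $\Re(\z)$ ranges over a bounded set the prefactors $(\z - i)^\bullet$ are harmless. Sandwiching by $\pppg x^{-\d}$, the first family of terms is uniformly bounded in $h \in ]0, h_0]$ by semiclassical pseudo-differential calculus applied to $R_h(i) = (\Hh - i)^{-1}$, which is a bounded $h$-$\Psi$DO of order $-2$. For the second family insert $\pppg{A_h}^{\pm\d}$:
\[
\nr{\pppg x^{-\d} R_h^m(i) R_h^\n(\z) R_h^m(i) \pppg x^{-\d}} \leq \nr{\pppg x^{-\d} R_h^m(i) \pppg{A_h}^{\d}} \cdot \nr{\pppg{A_h}^{-\d} R_h^\n(\z) \pppg{A_h}^{-\d}} \cdot \nr{\pppg{A_h}^{\d} R_h^m(i) \pppg x^{-\d}},
\]
the middle factor being controlled by the Mourre estimate above (with $\n$ in place of $n+1$, hence no worse than the target bound).

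The main technical point is to show that the outer factors $\pppg x^{-\d} R_h^m(i) \pppg{A_h}^{\d}$ and its adjoint are bounded in $\Lc(L^2)$ uniformly in $h \in ]0, h_0]$ once $m$ is chosen large enough in terms of $\d$. The symbol of $A_h = h^{1-\bb} \Opw(f_0 + f_c)$ grows like $h^{1-\bb} \pppg x \pppg \x$, while each factor of $R_h(i)$ contributes two orders of semiclassical decay in $\pppg \x$. Iterated Moyal composition and commutation of $\pppg{A_h}^{\d}$ through $R_h^m(i)$ produce only lower-order $h$-$\Psi$DO remainders which, multiplied on the left by $\pppg x^{-\d}$, give uniformly bounded operators as soon as $m$ is sufficiently large. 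This is a standard semiclassical argument, entirely parallel to the corresponding step in \cite{boucletr14}. Combining the two types of contributions concludes the proof.
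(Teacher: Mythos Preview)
Your proposal is correct and follows essentially the same approach as the paper: apply the dissipative Mourre theorem (Theorem \ref{th-mourre}) with conjugate operator $h^{1-\bb}F_h$ and lower bound $\a_0 = c_0 h^{2-\bb}$ to obtain the estimate with conjugate-operator weights, then use the resolvent-identity decomposition from the end of Section \ref{sec-inter-freq} to trade these for $\pppg x^{-\d}$. The paper states the Mourre output with weight $\pppg{F_h}^{-\d}$ rather than your $\pppg{A_h}^{-\d} = \pppg{h^{1-\bb}F_h}^{-\d}$, but since $h^{1-\bb}\leq 1$ these are equivalent up to a bounded factor, and the conversion step (which the paper also defers to \cite{art-mourre} for the semiclassical version) goes through in either formulation.
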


In order to have the estimate with $\pppg x^{-\d}$ we proceed as usual (see the end of Section \ref{sec-inter-freq} for intermediate frequencies or \cite{art-mourre} in the semi-classical context). With \eqref{eq-Rz-Rh} and Proposition \ref{prop-high-freq-Rh} we obtain the second statement of Theorem \ref{th-high-freq}. For the first statement, we observe that under the non-trapping condition we can proceed as above with $\b = 0$ and with $\bb$ replaced by 1 in \eqref{minor-mourre}.

\section{Local energy decay} \label{sec-loc-decay}

In this section we use Theorems \ref{th-inter-freq}, \ref{th-low-freq} and \ref{th-high-freq} to prove Theorem \ref{th-loc-decay}. Let $u_0 \in \Sc$. We denote by $u$ the solution of \eqref{eq-damp-schrodinger}. Let $\m > 0$. For $t \in \R$ we set 
\[
u_\m(t) = \1 {\R_+}(t) u(t) e^{-t\m}.
\]
Then for $\t \in \R$ we set
\begin{equation} \label{def-check-u}
\check u_\m (\t) = \int_\R e^{it\t} u_\m (t) \, dt  =  \int_0^{+\infty} e^{it(\t+i\m)} u (t) \, dt ,
\end{equation}
so that for all $n \in \N$ and $\t \in \R$ we have 
\begin{equation} \label{check-u-n}
\check u_\m^{(n)} (\t) =  \int_\R (it)^n e^{it\t} u_\m (t) \, dt.
\end{equation}

We multiply \eqref{eq-damp-schrodinger} by $e^{it(\t+i\m)}$ and integrate over $\R_+$. This yields
\[
\big( \Hg - (\t+i\m) \big) \check u_\m (\t) = - i u_0
\]
and hence, for all $n \in \N$:
\begin{equation} \label{eq-der-umu}
\check u_\m ^{(n)} (\t) = - i \, n! \, \Rg^{n+1} (\t+i\m) u_0.
\end{equation}

\begin{lemma}
For all $n \in \N^*$ and $\m > 0$ the map 
$
\t \mapsto  \Rg^{n+1}(\t+i\m) u_0
$
belongs to $L^1(\R,L^2(\R^d))$.
\end{lemma}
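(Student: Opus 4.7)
The plan is to identify $\tau\mapsto R^{n+1}(\tau+i\mu)u_0$, up to a constant, as the Fourier transform of a function of $t$ living in $H^1(\R, L^2(\R^d))$, and then conclude by the standard fact that $H^1(\R)$ embeds into the Fourier preimage of $L^1(\R)$.

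Combining \eqref{check-u-n} with \eqref{eq-der-umu}, one obtains for $\tau\in\R$
\[
-i\, n!\, R^{n+1}(\tau+i\mu)\,u_0 \;=\; \int_\R e^{it\tau}\, g(t)\, dt,\qquad g(t) := (it)^n\,\1{\R_+}(t)\, e^{-t\mu}\, e^{-itH}u_0,
\]
so it is enough to show that $\hat g$, the Fourier transform of $g$, belongs to $L^1(\R, L^2(\R^d))$. For this, Cauchy--Schwarz with $\langle\tau\rangle^{-1}\in L^2(\R)$ combined with Plancherel's theorem reduces the problem to proving that $g\in H^1(\R, L^2(\R^d))$.

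The first step is to check this $H^1$ regularity. On $\R_-$ the function $g$ vanishes, and the crucial role of the hypothesis $n\in\N^*$ is that the factor $t^n$ guarantees $g(0)=0$, so that $g$ is continuous on $\R$ and no Dirac mass appears when differentiating through the cut-off at $t=0$. On $\R_+$, using $u_0\in\Sc\subset \Dom(H)$ and the contractivity of the semigroup $e^{-itH}$ (Proposition~\ref{prop-max-diss-acc} together with the Hille--Yosida theorem), a direct computation gives
\[
\|g'(t)\|_{L^2} \;\leq\; t^{n-1}\, e^{-t\mu}\,\bigl((n+\mu t)\,\nr{u_0}_{L^2} + t\,\nr{Hu_0}_{L^2}\bigr),
\]
which is square integrable on $\R_+$ thanks to the exponential factor; a similar (easier) bound shows $g\in L^2(\R, L^2(\R^d))$.

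The main (and essentially only) obstacle is this very continuity at $t=0$: without it, $g'$ would contain a surface term $\delta_0\otimes u_0$ spoiling the $L^2$ estimate, which explains the restriction $n\geq 1$ in the statement. Once the $H^1$ bound is in hand, the $L^1$ integrability of $\hat g$ follows automatically, and the lemma is established.
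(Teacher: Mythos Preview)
Your argument is correct and takes a genuinely different route from the paper. You exploit the Fourier correspondence already set up in \eqref{check-u-n}--\eqref{eq-der-umu}: writing $g(t)=(it)^n u_\m(t)$, you show $g\in H^1(\R,L^2)$ (continuity at $t=0$ thanks to $n\geq 1$, differentiability on $\R_+$ via the semigroup, square-integrability from the factor $e^{-t\m}$), and then conclude $\hat g\in L^1$ by the embedding $H^1(\R)\hookrightarrow\mathcal F^{-1}L^1(\R)$. The paper instead works directly on the resolvent side: it iterates the identity $\Rg(z)=\big(\Rg(z)(\Hg+1)-1\big)/(z+1)$ twice to obtain
\[
\Rg(z)u_0=\frac{1}{(z+1)^2}\Rg(z)(\Hg+1)^2u_0-\frac{1}{(z+1)^2}(\Hg+1)u_0-\frac{1}{z+1}u_0,
\]
and observes that after one differentiation in $z$ every term decays like $\abs{z+1}^{-2}$ (using only the crude bound $\nr{\Rg(\t+i\m)}\leq 1/\m$), which is integrable in $\t$. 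Your approach is conceptually cleaner and fits naturally with the Fourier framework of the section, requiring only $u_0\in\Dom(\Hg)$; the paper's approach is purely algebraic, never refers back to the time-dependent problem, and yields an explicit $O(\abs\t^{-2})$ decay at the cost of needing $u_0\in\Dom(\Hg^2)$ (harmless here since $u_0\in\Sc$). In both arguments the restriction $n\geq 1$ is essential for the same underlying reason: the term $-\frac{1}{z+1}u_0$ (equivalently, the jump of $u_\m$ at $t=0$) only disappears after one derivative.
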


\begin{proof}
Let $\h_0 \in C_0^\infty(\R,[0,1])$ be equal to 1 on a neighborhood of 0. According to \eqref{estim-Rg-trivial} the map $\t \mapsto  \Rg^{n+1} (\t +i\m) u_0$ is bounded, so it is enough to prove that $\t \mapsto (1-\h_0)(\t) \Rg^{n+1} (\t +i\m) u_0$ belongs to $L^1(\R)$. Let $z \in \C_+$. Using twice the identity 
\[
\Rg(z) = \frac {\Rg(z)(\Hg+1) - 1}{z+1}
\]
we get 
\[
\Rg(z)u_0 = \frac {1} {(z+1)^2} \Rg (z) (\Hg+1)^2 u_0 - \frac {1}{(z+1)^2}{(\Hg+1)u_0} - \frac 1 {z+1} u_0.
\]
The result follows after at least one differentiation with respect to $z$.
\end{proof}

This lemma does not provide any uniform estimate, but now we can take the Fourier transform of \eqref{check-u-n}. With \eqref{eq-der-umu} this gives for all $t \geq 0$:
\begin{equation} \label{eq-ut-inversee}
(it)^n e^{-t\m} u(t)  = - \frac {i  n!}{2\pi} \int_{\t\in\R} e^{-it\t}  \Rg^{n+1}(\t+i\m) u_0 \, d\t.
\end{equation}

We consider $\h_-,\h_0, \h \in C^\infty(\R, [0,1])$ such that $\h_-$ is supported in $]-\infty,0[$, $\h_0$ is compactly supported and equal to 1 on a neighborhood of 0, $\h$ is compactly supported in $]0,+\infty[$ and 
\[
\h_- + \h_0 + \sum_{j\in\N^*} \h_j = 1 \quad \text{ on $\R$,}
\]
where for $j \in \N^*$ and $\t \in \R$ we have set $\h_j (\t) = \h(\t/ 2^{j-1})$. We set $\h_+ = \sum_{j\in\N^*} \h_j$.
Starting from \eqref{eq-ut-inversee} applied with $n  = \kk -1$ ($\kk$ was defined in \eqref{def-talpha-kappa}) we can write
\begin{equation} \label{dec-u-v}
 u_\m(t) = -\frac {i n!} {2\pi (it)^{\k-1}} \big(v_-(t) + v_0(t) + v_+(t)\big)
\end{equation}
where for $* \in \{ - ,0,+\}$ we have set 
\begin{equation} \label{def-v-star}
v_*(t) =  \int_{\t\in\R} \h_*(\t) e^{-it\t}  \Rg^\k(\t+i\m) u_0 \, d\t.
\end{equation}

To simplify the notation we forget the dependance in $\m$. From now on, all the quantities depend on $\m > 0$ but the estimates are uniform in $\m$.

\begin{proposition} \label{prop-v-}
Let $k \in \N$. There exists $C \geq 0$ which does not depend on $u_0 \in \Sc$ such that for all $\m > 0$ and $t \geq 0$ we have 
\[
\nr{v_-(t)}_{L^2} \leq C \pppg t^{-k} \nr{u_0}_{L^2}.
\]
\end{proposition}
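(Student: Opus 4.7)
The plan is to exploit two simple facts: on the support of $\h_-$ the spectral parameter $\t+i\m$ stays bounded away from the spectrum of $\Hg$ (so the trivial estimate \eqref{estim-Rg-trivial} applies), and the oscillating factor $e^{-it\t}$ allows arbitrarily many integrations by parts in the variable $\t$.

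First I would observe that, since $\h_0$ equals $1$ on a neighborhood of $0$ and the partition of unity satisfies $\h_- + \h_0 + \sum_{j\geq 1}\h_j \equiv 1$, there exists $c_0 > 0$ such that $\supp(\h_-) \subset {]-\infty,-c_0]}$. On this set, the trivial estimate \eqref{estim-Rg-trivial} gives, uniformly in $\m > 0$,
\[
\nr{\Rg^{n+1}(\t+i\m)}_{\Lc(L^2)} \leq \frac{C_n}{\abs \t^{n+1}}, \qquad n \in \N.
\]
Since $d \geq 3$ implies $\k \geq 2$, the integrand in \eqref{def-v-star} is integrable in $\t$, which yields at once the bound $\nr{v_-(t)}_{L^2} \lesssim \nr{u_0}_{L^2}$. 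This settles the case of $t$ in a bounded set and, in particular, the case $k=0$.

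To obtain decay in $t$, I would integrate by parts $k$ times in $\t$, using the identity $\partial_\t \Rg^n(\t+i\m) = n\Rg^{n+1}(\t+i\m)$ (which comes directly from the resolvent identity). Since $\h_-$ is smooth and $\nr{\Rg^\k(\t+i\m)}$ decays at $-\infty$, no boundary terms appear, and we get
\[
v_-(t) = \frac{1}{(it)^k} \int_{\R} e^{-it\t}\, \frac{d^k}{d\t^k}\bigl[\h_-(\t)\Rg^\k(\t+i\m)\bigr] u_0 \, d\t.
\]
By the Leibniz formula, the integrand is a linear combination of terms $\h_-^{(j)}(\t) \Rg^{\k+k-j}(\t+i\m) u_0$ for $j \in \Ii 0 k$. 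The term $j=0$ is integrable on ${]-\infty,-c_0]}$ because $\nr{\Rg^{\k+k}(\t+i\m)} \lesssim \abs \t^{-\k-k}$ with $\k+k\geq 2$. The terms with $j \geq 1$ involve derivatives $\h_-^{(j)}$ which have compact support (since $\h_-$ is eventually constant at $-\infty$), so they are integrable by the trivial bound on a fixed compact set of $]-\infty,-c_0]$. All constants remain uniform in $\m > 0$.

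The only mildly subtle point is that $\h_-$ itself is not compactly supported, which is why we must put all $\k+k$ powers of the resolvent on the undifferentiated piece to ensure convergence at $-\infty$; once the exponent exceeds $1$ (which is automatic since $\k \geq 2$), there is no real obstacle. Taking the minimum of the $\nr{u_0}_{L^2}$ bound (valid for bounded $t$) and the $\abs t^{-k}$ bound (valid for large $t$) gives the claimed $\pppg t^{-k}$ estimate.
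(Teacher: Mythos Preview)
Your proof is correct and follows essentially the same approach as the paper: integrate by parts $k$ times in $\t$ and use the trivial resolvent bound (the paper invokes Remark~\ref{rem-dissipative-accretive}, you invoke \eqref{estim-Rg-trivial}, which agree on $\supp(\h_-)$) to see that $\frac{d^k}{d\t^k}\big(\h_-(\t)\Rg^\k(\t+i\m)\big)$ is $O\big(\pppg\t^{-\k-k}\big)$ in $\Lc(L^2)$. Your explicit Leibniz decomposition and the observation that $\h_-^{(j)}$ has compact support for $j\geq 1$ simply spell out what the paper compresses into one line.
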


This implies that the corresponding contribution for $u(t)$ decays like any power of $t$ in $L^2$.

\begin{proof}
After $k$ partial integrations in \eqref{def-v-star} we get 
\[
(it)^k v_-(t) =  \int_\R e^{-it\t}  \frac {d^k}{d\t^k} \big( \h_-(\t) R(\t+i\m)^\k  \big) u_0 \, d\t.
\]
According to Remark \ref{rem-dissipative-accretive} we have
\[
\nr{ \frac {d^k}{d\t^k} \big( \h_-(\t) R(\t+i\m)^\k  \big)}_{\Lc(L^2)} \lesssim \pppg \t^{-(\k + k)},
\]
and the result follows.
\end{proof}

We now deal with $v_0$. The following result is (a slightly modified version of) Lemma 4.3 in \cite{boucletr14}:

\begin{lemma} \label{lem-holder2}
Let $\Hc$ be a Hilbert space. Let $f \in C^1 (\R^*, \Hc)$ be equal to 0 outside a compact subset of $\R$. Assume that for some $\g \in ]0,1[$ and $M_f \geq 0$ we have
\[
\forall \t \in \R^*, \quad  \nr{f(\t)}_\Hc \leq M_f \abs \t^{-\g} \quad \text{and} \quad \nr{f'(\t)}_\Hc \leq M_f \abs \t ^{-1 -\g}.
\]
Let $\b \in [0,1[$. Then there exists $C \geq 0$ which does not depend on $f$ and such that for all $t \in \R$ we have
\[
\nr{\hat f (t)}_\Hc \leq C \, M_f \, \pppg t ^{\b(\g-1)}.
\]
\end{lemma}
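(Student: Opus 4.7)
The plan is to perform a low/high frequency split of the integral defining $\hat f$ at the natural scale $|\t| \sim 1/|t|$, then integrate by parts in the high-frequency piece to exploit the derivative bound. It is enough to prove the seemingly stronger estimate $\nr{\hat f(t)}_\Hc \leq C M_f |t|^{\g-1}$ for $|t| \geq 1$: since $\b \in [0,1[$ and $\g-1 < 0$, we have $(1-\b)(\g-1) \leq 0$, hence $|t|^{\g-1} \leq |t|^{\b(\g-1)}$ for $|t| \geq 1$, while $\pppg t^{\b(\g-1)}$ is bounded below by a positive constant on $|t| \leq 1$. For $|t| \leq 1$ the trivial estimate
\[
\nr{\hat f(t)}_\Hc \leq \int_\R \nr{f(\t)}_\Hc \, d\t \leq M_f \int_{\supp f} |\t|^{-\g} \, d\t
\]
gives a uniform bound (finite since $\g < 1$ and $f$ has compact support), which is absorbed into the right-hand side of the conclusion.

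For $|t| \geq 1$ I would write $\hat f(t) = I_1(t) + I_2(t)$ with
\[
I_1(t) = \int_{|\t| \leq 1/|t|} e^{-it\t} f(\t)\, d\t, \qquad I_2(t) = \int_{|\t| > 1/|t|} e^{-it\t} f(\t)\, d\t.
\]
The first piece is controlled directly by the hypothesis on $f$:
\[
\nr{I_1(t)}_\Hc \leq M_f \int_{-1/|t|}^{1/|t|} |\t|^{-\g}\, d\t = \frac{2 M_f}{1-\g} |t|^{\g-1}.
\]
For $I_2$ I would integrate by parts separately on $\{\t > 1/|t|\}$ and $\{\t < -1/|t|\}$; a single IBP across $0$ is forbidden because $f$ is only $C^1$ on $\R^*$ and may be singular at $0$. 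The compact support of $f$ kills the boundary terms at $\pm\infty$; the boundary terms at $\t = \pm 1/|t|$ are each bounded by $\nr{f(\pm 1/|t|)}_\Hc / |t| \leq M_f |t|^{\g-1}$; and the remaining interior integral satisfies
\[
\frac{1}{|t|} \int_{|\t|>1/|t|} \nr{f'(\t)}_\Hc\, d\t \leq \frac{M_f}{|t|} \int_{|\t|>1/|t|} |\t|^{-1-\g}\, d\t = \frac{2 M_f}{\g} |t|^{\g-1}.
\]
Summing these contributions yields $\nr{\hat f(t)}_\Hc \lesssim M_f |t|^{\g-1}$ for $|t| \geq 1$, which combined with the bounded-$t$ estimate above gives the lemma.

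There is no serious obstacle here; this is the classical Heisenberg trade-off between the size of $f$ (concentrated near $0$ where it may blow up) and its regularity. The only point requiring mild care is that one cannot integrate by parts through $\t = 0$, which is why $I_2$ must be split into two half-line integrals rather than treated globally, but this is handled automatically by the symmetric choice of cutoff at $\pm 1/|t|$.
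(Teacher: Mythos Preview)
Your proof is correct and more direct than the paper's. Both arguments split the integral at a small scale in $\tau$ and exploit the derivative bound on the outer piece, but the paper (following \cite{boucletr14}) inserts an extra mollification step: it replaces $f$ on the outer region by $f_t(\tau) = \int \varphi(s)\, f(\tau - s/t)\,ds$, estimates $f - f_t$ via the mean value inequality, and only then integrates $f_t$ by parts. You bypass this detour by splitting at the sharper scale $|\tau| = 1/|t|$ and integrating $f$ itself by parts on each half-line, which is legitimate since $f \in C^1(\R^*,\Hc)$ away from the origin. As a bonus your argument delivers the endpoint decay $|t|^{\gamma-1}$ rather than the paper's $|t|^{\beta(\gamma-1)}$ with $\beta < 1$; in the application the paper then has to choose $\beta$ close to $1$ to recover nearly optimal decay, whereas your estimate makes that step unnecessary.

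One small caveat: your bound for $|t| \leq 1$ involves $\int_{\supp f} |\tau|^{-\gamma}\,d\tau$, so the constant $C$ you produce depends on the support of $f$, not only on $\gamma$ and $\beta$. The paper's proof has exactly the same feature (it only treats $|t| \geq 1$ explicitly), and in the application the support is fixed once and for all by the cutoff $\eta_0$, so this is harmless in context—but strictly speaking it is a mild dependence on $f$ that the lemma's wording does not permit.
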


\begin{proof}
Following the proof of \cite{boucletr14} we set $f_t(\t) = \int_{-1}^1 f \big(\t - \frac s t\big) \, ds$ where $\vf \in C_0^\infty(]-1,1[,\R)$ satisfies $\int_\R \vf = 1$ and we write for $\abs t \geq 1$
\begin{align*}
\abs{\hat f(t)}
& \leq \int_{\abs \t \leq t^{-\b}} \nr{f(\t)} \, d\t + \int_{\abs \t \geq t^{-\b}} \nr{f(\t) - f_t(\t)} \, d\t + \nr{\int_{\abs \t \geq t^{-\b}} e^{-it\t} f_t(\t) \, d\t}\\
& \lesssim \abs t^{\b(1-\g)} + \abs {t}^{\g \b -1} + \frac 1 t \left( \nr{f_t(t^{-\b})} + \nr{f_t(-t^{-\b})} + \nr{\int_{\abs \t \geq t^{-\b}} e^{-it\t} f_t'(\t) \, d\t} \right)\\
& \lesssim \abs t^{\b(1-\g)}.
\end{align*}
We omit the details.
\end{proof}

\begin{proposition}
Let $\e \in \big]0,\frac 12\big[$ and $\d > \k + \frac 12$. Then there exists $C \geq 0$ which does not depend on $u_0 \in \Sc$ and such that for all $\m > 0$ and $t \geq 0$ we have 
\[
\nr{v_0(t)}_{L^{2,-\d}} \leq \pppg t^{\k-1 - \frac d 2  + \e} \nr{u_0}_{L^{2,\d}}.
\]
\end{proposition}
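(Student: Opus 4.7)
The plan is to apply Lemma \ref{lem-holder2} to the $L^{2,-\d}$-valued function $f(\t) = \h_0(\t) \Rg^{\k}(\t + i\m) u_0$, since \eqref{def-v-star} with $n + 1 = \k$ shows that $v_0(t) = \hat f(t)$. To do so I need to verify the two pointwise bounds on $f$ and $f'$ required by the lemma, uniformly in $\m > 0$. Assuming that $\h_0$ is supported in the neighborhood $\Uc$ of $0$ given by Theorem \ref{th-low-freq}, this will follow directly from the low-frequency resolvent estimates of that theorem.

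For the size of $f(\t)$ itself I would apply Theorem \ref{th-low-freq} with $n = \k - 1$. Writing out the two cases of the hypothesis on $\d$ in that theorem, one checks that $\d > \k + \tfrac 12$ is amply sufficient (it gives $\d > \k$ in even dimension and $\d > \k - \tfrac 12$ in odd dimension). Choosing $\tilde \e \in (0, \e)$ and applying the theorem with $\tilde\e$ in place of $\e$ yields on the compact support of $\h_0$ a bound $\nr{f(\t)}_{L^{2,-\d}} \lesssim \abs{\t}^{-\g} \nr{u_0}_{L^{2,\d}}$ with $\g = \k - \tfrac d 2 + \tilde \e$, which lies in $(0,1)$ for the admissible values of $\tilde\e$. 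For the derivative I would use $\frac{d}{d\t} \Rg^\k(\t+i\m) = \k \Rg^{\k+1}(\t+i\m)$, so that $f'(\t)$ decomposes as $\h_0'(\t) \Rg^\k(\t+i\m)u_0 + \k \h_0(\t) \Rg^{\k+1}(\t+i\m) u_0$; the first piece is compactly supported away from $0$ and uniformly bounded, and the second is estimated by Theorem \ref{th-low-freq} with $n = \k$. This is the step where the hypothesis $\d > \k + \tfrac 12$ is used in a tight way: since $2\k + 1 > d$ always, the condition in the theorem for $n = \k$ is exactly $\d > \k + \tfrac 12$. The resulting exponent matches and gives $\nr{f'(\t)}_{L^{2,-\d}} \lesssim \abs{\t}^{-1-\g}\nr{u_0}_{L^{2,\d}}$.

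It then remains to invoke Lemma \ref{lem-holder2}, which produces the bound $\nr{v_0(t)}_{L^{2,-\d}} \lesssim \pppg t^{\b(\g - 1)} \nr{u_0}_{L^{2,\d}}$ for any $\b \in [0,1[$. The target exponent $\k - 1 - \tfrac d 2 + \e$ equals $(\g - 1) + (\e - \tilde \e)$, so I would choose $\b < 1$ close enough to $1$ to satisfy $\b (1-\g) \geq 1 - \e + \tfrac d 2 - \k$, which is possible precisely because $\tilde \e < \e$ makes the inequality strict at $\b = 1$. There is no real obstacle; the only delicate point is the exact matching of $\d > \k + \tfrac 12$ with the derivative estimate at $n = \k$, which is why this is the sharp hypothesis.
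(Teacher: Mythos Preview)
Your proposal is correct and follows essentially the same route as the paper: bound $f(\t)=\h_0(\t)\Rg^\k(\t+i\m)u_0$ and its $\t$-derivative via the low-frequency resolvent estimates (the paper uses $\e/2$ where you use an arbitrary $\tilde\e<\e$), then feed these bounds into Lemma~\ref{lem-holder2} with $\b$ close to $1$. The only cosmetic difference is that the paper does not assume $\supp\h_0\subset\Uc$ but instead cites Theorem~\ref{th-inter-freq} alongside Theorem~\ref{th-low-freq} to cover the portion of $\supp\h_0$ lying in a compact subset of $\R_+^*$; either device works.
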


\begin{proof}
According to Theorem \ref{th-low-freq} applied with $\e /2$ instead of $\e$ and Theorem \ref{th-inter-freq} there exists $C \geq 0$ (which does not depend on $u_0$) such that for $\m > 0$, $\t \in \R$ and $z = \t + i\m$ we have
\[
\nr{\h_0(\t) \Rg^\k(z) u_0}_{L^{2,-\d}} \leq C \abs{z}^{\frac d 2 - \k - \frac \e 2} \nr{u_0}_{L^{2,\d}}
\]
and
\[
\nr{\frac d {d\t} \big( \h_0(\t) \Rg^\k(z) \big) u_0}_{L^{2,-\d}} \leq C \abs{z}^{\frac d 2 - \k -1 - \frac \e 2} \nr{u_0}_{L^{2,\d}}.
\]
Then the statement follows from Lemma \ref{lem-holder2} applied with $\b \in ]0,1[$ so close to 1 that 
\[
\b \big(\k  - \frac d 2 - 1+ \frac \e 2\big) \leq \k - \frac d 2 - 1 +  \e. \qedhere
\]
\end{proof}

To finish the proof of Theorem \ref{th-loc-decay} we have to estimate $v_+(t)$. As for $v_-(t)$ above, $k$ partial integrations yield
\begin{align*}
(it)^k v_+(t)
& = \int_\R e^{-it\t}  \sum_{j=1}^{k} C_k^j  \h_+^{(j)} (\t) \Rg^{\k+k-j}(\t+i\m)  u_0 \, d\t\\
& \quad + \int_\R  e^{-it\t}  \h_+ (\t) \Rg^{\k+k}(\t+i\m) u_0 \, d\t\\
& =: v_{+,k}^0(t) + w_k(t)
\end{align*}

The following proposition proves that the contribution of $v_+(t)$ in \eqref{dec-u-v} decays like any power of $t$. However there may be a loss of two derivatives when $\a = 0$ if the non-trapping assumption does not hold.

\begin{proposition} \label{prop-v+}
Let $k \in \N^*$ and $\d > \k + k - \frac 12$. Let $\s \in [0,2]$.
\begin{enumerate} [(i)]
\item \label{item-v+i} 
There exists $C \geq 0$ which does not depend on $u_0$ and such that for all $\m > 0$ and $t \geq 1$ we have 
\[
\nr{\pppg x^{-\d} v_{+,k}^0(t)}_{L^2} \leq C \nr{u_0}_{L^{2,\d}}.
\]
\item \label{item-v+ii} 
Assume that the non-trapping assumption \eqref{hyp-non-trapping} holds or that we have the damping condition \eqref{hyp-damping} together with $(\kk+k) \bb + \s  \geq 2$. Then there exists $C \geq 0$ which does not depend on $u_0$ such that for all $\m > 0$ and $t \geq 1$ we have 
\[
\nr{\pppg x^{-\d} w_{k}(t)}_{L^2} \leq C \nr{u_0}_{H^{\s,\d}}.
\]
\end{enumerate}
\end{proposition}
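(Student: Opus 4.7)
For part (i), the key structural observation is that each $\h_+^{(j)}$ with $j \geq 1$ is compactly supported in a set bounded away from $0$ and $\infty$: since $\h_+ = 1 - \h_0 - \h_-$ and $\h_-$ vanishes on $[0,+\infty[$, we have $\h_+^{(j)} = -\h_0^{(j)}$ on $\R_+$, which is supported in the compact annulus where $\h_0$ transitions from $1$ (near $0$) to $0$. On such a compact set $K \subset\, ]0,+\infty[$, Theorem~\ref{th-inter-freq} applied with $n+1 = \k+k-j$ yields
\[
\nr{\pppg x^{-\d} \Rg^{\k+k-j}(\t+i\m) \pppg x^{-\d}}_{\Lc(L^2)} \leq C
\]
uniformly in $\t\in K$ and $\m > 0$; since for $j \geq 1$ the required weight $\d > \k+k-j-\frac 12$ is weaker than the standing assumption $\d > \k+k-\frac 12$, integration of the uniformly bounded integrand over the compact support of each $\h_+^{(j)}$ yields the stated bound on $v_{+,k}^0(t)$.

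For part (ii), I would split $\h_+ = \chi_{\mathrm{int}} + \chi_{\mathrm{hi}}$, where $\chi_{\mathrm{int}} \in C_0^\infty(]0,+\infty[)$ coincides with $\h_+$ on a large bounded interval and $\chi_{\mathrm{hi}}$ is supported in $[\t_0,+\infty[$ for $\t_0$ as large as required by Theorem~\ref{th-high-freq}, and decompose $w_k = w_k^{\mathrm{int}} + w_k^{\mathrm{hi}}$. The intermediate piece $w_k^{\mathrm{int}}(t)$ is treated exactly as in part (i) with $n+1 = \k+k$ in Theorem~\ref{th-inter-freq}. For $w_k^{\mathrm{hi}}(t)$ I would invoke Theorem~\ref{th-high-freq}. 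Under the non-trapping hypothesis the bound $\nr{\pppg x^{-\d} \Rg^{\k+k}(z) \pppg x^{-\d}}_{\Lc(L^2)} \lesssim \abs z^{-(\k+k)/2}$ is absolutely integrable on $[\t_0,+\infty[$ since $d \geq 3$ gives $\k \geq 2$ and $k \geq 1$ forces $(\k+k)/2 \geq \frac 32 > 1$, so the estimate follows with $\s = 0$. Under the damping hypothesis only $\lesssim \abs z^{-(\k+k)\bb/2}$ is available, and the missing decay must be extracted from the $\s$ derivatives carried by $u_0$.

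The extraction of this $\s$-derivative gain is the main obstacle. Writing $u_0 = \pppg D^{-\s} v$ with $v = \pppg D^\s u_0$ and $\nr{\pppg x^\d v}_{L^2} \lesssim \nr{u_0}_{H^{\s,\d}}$ (the commutator between $\pppg D^\s$ and $\pppg x^\d$ is of lower order and standard to handle by pseudo-differential calculus), the task reduces to proving
\[
\nr{\pppg x^{-\d} \Rg^{\k+k}(z) \pppg D^{-\s} \pppg x^{-\d}}_{\Lc(L^2)} \lesssim \abs z^{-((\k+k)\bb + \s)/2}
\]
for $\Re(z)$ large. I would establish this by introducing a Helffer--Sj\"ostrand spectral cutoff of $\Pg$ at scale $\abs z$: on the low-spectral piece (where $\Pg \lesssim \abs z /2$) the trivial bound $\nr{\Rg(z) \1{[0,\abs z /2]}(\Pg)}_{\Lc(L^2)} \lesssim \abs z^{-1}$ distributed over the $\k+k$ resolvent factors provides an ample gain, while on the complementary high-spectral piece $\pppg D^{-\s}$ is comparable to $\Pg^{-\s/2}$ (modulo the long-range perturbation of the metric handled as in Section~\ref{sec-high-freq}) and hence contributes an extra factor $\abs z^{-\s/2}$ when composed with Theorem~\ref{th-high-freq}(ii). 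Once this is in hand, the integrand in $w_k^{\mathrm{hi}}$ decays like $\abs \t^{-((\k+k)\bb + \s)/2}$: when $(\k+k)\bb + \s > 2$ it is directly integrable, and in the borderline case $(\k+k)\bb + \s = 2$ a single integration by parts exploiting the oscillation $e^{-it\t}$ yields $\int_{\t_0}^\infty e^{-it\t} \abs \t^{-1} d\t = O(1)$ uniformly in $t \geq 1$. The genuinely delicate point is the interplay between the spectral cutoff, the long-range perturbation, and the dissipation $\Ba$, which for $d \in \{3,4\}$ must be treated in the form sense as in Remark~\ref{rem-mourre-formes}.
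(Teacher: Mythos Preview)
Your treatment of part \eqref{item-v+i} and of the non-trapping (or $\a \geq 1$) case in part \eqref{item-v+ii} is correct and coincides with the paper's argument: integrate the intermediate-frequency resolvent bound over the compact support of $\h_+^{(j)}$, and for large $\t$ integrate the high-frequency bound $\abs\t^{-(\k+k)/2}$ directly since $\k+k \geq 3$.

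For the damping case with $\a < 1$ your route diverges from the paper's, and it has a genuine gap. The paper does \emph{not} aim for a pointwise-in-$\t$ estimate followed by $L^1$-integration. Instead it performs a dyadic decomposition $\h_+ = \sum_{j\geq 1} \h_j$ with $\h_j$ supported where $\t \sim 2^j$, inserts matching spectral cutoffs $\tilde\h_j(\Pg)$, and sums via almost orthogonality. The $\s$-derivative gain enters because $\nr{\tilde\h_j(\Pg) \pppg x^\d u_0} \lesssim 2^{-j\s/2} \nr{\pppg\Pg^{\s/2} \pppg x^\d u_0}$, and the condition $(\k+k)\bb + \s \geq 2$ makes $\sup_j 2^{j(2-(\k+k)\a - \s)}$ finite. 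The off-diagonal pieces (where the $\t$-localization and the $\Pg$-localization mismatch) are shown to be $O(2^{-2j})$ by an iterated resolvent-identity argument peeling off $\Ro(z)(1-\z_j)(\Pg)$ factors.

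Your approach breaks down at the borderline $(\k+k)\bb + \s = 2$. Even granting your pointwise estimate, the high-spectral contribution is only $\lesssim \abs\t^{-1}$, which is not integrable. Your remedy ``a single integration by parts yields $\int_{\t_0}^\infty e^{-it\t} \abs\t^{-1} d\t = O(1)$'' conflates the scalar integral with the operator-valued one: having $\nr{f(\t)} \lesssim \abs\t^{-1}$ gives no control on $\int e^{-it\t} f(\t)\,d\t$ without information on $f'$. But differentiating produces $\Rg^{\k+k+1}$, for which Theorem~\ref{th-high-freq} requires $\d > \k+k+\frac12$, strictly more than the hypothesis $\d > \k+k-\frac12$. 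The paper's $\ell^2$-summation via almost orthogonality is precisely what absorbs this endpoint without extra weight. A secondary issue: your low-spectral bound ``$\nr{\Rg(z) \1{[0,\abs z /2]}(\Pg)} \lesssim \abs z^{-1}$ distributed over the $\k+k$ factors'' is not valid as stated, since $\Rg(z)$ neither commutes with spectral projections of $\Pg$ nor satisfies this bound without weights; the correct mechanism is again the resolvent-identity iteration the paper uses for the mismatch terms.
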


\begin{proof}
\stepp Statement \eqref{item-v+i} follows from Theorem \ref{th-inter-freq} and the fact that $\h_+^{(j)}$ is compactly supported in $]0,+\infty[$ for all $j \geq 1$. 

\stepp Assume that $\a \geq 1$ (and hence $\bb = 1$) or that the non-trapping condition holds. Then according to Theorem \ref{th-high-freq} we have for $\t \in \supp(\h_+)$
\[
\nr{\pppg x^{-\d} \Rg^{\k+k}(\t+i\m) \pppg x^{-\d}}_{\Lc(L^2)} \lesssim \abs{\t}^{-\frac {\k+k}2}.
\]
Since $\k + k \geq 3$ this gives the second statement with $\s = 0$.

\stepp Now assume that $\a \in [0,1[$. For $j \in \N^*$ we set
\begin{equation*} 
w_{k,j}(t) =  \int_{\t\in\R} \h_j(\t) e^{-it\t}  \Rg^{\k+k}(\t+i\m) u_0 \, d\t.
\end{equation*}
Let $\tilde \h \in C_0^\infty(\R_+^*,[0,1])$ be equal to 1 on a neighborhood of $\supp \h$. For $\t \in \R$ and $j \in \N^*$ we set $\tilde \h_j(\t) = \tilde \h(\t/2^{j-1})$. Let 
\[
I_{k,j}(t) =  \int_{\t\in\R} \h_j(\t) e^{-it\t}  \pppg x^{-\d}  \Rg^{\k+k}(\t+i\m) \pppg x^{-\d}  \, d\t \quad \in \Lc(L^2).
\]
We have 
\[
\pppg x^{-\d} w_{k,j}(t) = w_{k,j}^1(t) + w_{k,j}^2(t) + w_{k,j}^3(t)
\]
where 
\[
w_{k,j}^1(t) = \tilde \h_j(\Pg) I_{k,j}(t)  \tilde \h_j(\Pg) \pppg x^{\d}  u_0,
\]
\[
w_{k,j}^2(t) = (1-\tilde \h_j)(\Pg) I_{k,j}(t)  \tilde \h_j(\Pg) \pppg x^\d  u_0
\]
and
\[
w_{k,j}^3(t) =  I_{k,j}(t)   (1-\tilde \h_j)(\Pg)   \pppg x^\d u_0.
\]

\stepp By almost orthogonality, Theorem \ref{th-high-freq} and almost orthogonality again we have 
\begin{eqnarray*}
\lefteqn{\nr{\sum_{j\in\N^*}  w_{k,j}^1(t)}^2 \lesssim \sum_{j\in\N^*} \nr{ w_{k,j}^1(t)}^2}\\
&& \lesssim \sup_{j \in \N^*} \left( \int_{\t\in\R} \h_j(\t) \nr{ \pppg x^{-\d}  \Rg^{\k+k}(\t+i\m) \pppg x^{-\d}} \, d\t \right)^2 \times \sum_{j\in\N^*} \nr{\tilde \h_j(\Pg) \pppg x^{\d}  u_0}^2\\
&& \lesssim \sup_{j \in \N^*}  2^{2j} 2^{- {j(\k+k)\a}} 2^{-  {j\s}} \nr{\pppg{\Pg}^{\frac {\s}2} \pppg x^\d u_0}^2\\
&& \lesssim \nr{u_0}_{H^{\s,\d}}^2.
\end{eqnarray*}
It remains to prove that 
\begin{equation} \label{estim-w2w3}
\nr{w_{k,j}^2(t)} + \nr{w_{k,j}^3(t)}  \lesssim 2^{-j} \nr{u_0}_{L^{2,\d}}.
\end{equation}

\stepp 
For the contribution of $w_{k,j}^2(t)$ we prove that there exists $C \geq 0$ such that for $j \in \N^*$ and $\t \in \supp(\h_j)$ we have 
\begin{equation} \label{estim-blip}
\nr{(1-\tilde \h_j)(\Pg) \pppg x^{-\d} \Rg^{\k + k}(\t + i\m) \pppg x^{-\d}}_{\Lc(L^2)} \leq C 2^{-2j}.
\end{equation}
Let $\z \in C_0^\infty(\R_+^*,[0,1])$ be equal to 1 on a neighborhood of $\supp \h$ and such that $\tilde \h = 1$ on a neighborhood of $\supp \z$. For $j \in \N^*$ and $\t \in \R$ we set $\z_j(\t) = \z(\t/2^{j-1})$. According to Theorem 8.7 in \cite{dimassis} about functions of a self-adjoint semiclassical pseudo-differential operator (with $h = 2^{-\frac{j-1}2}$) we see that for any $M \geq 0$ we have
\[
\nr{(1-\tilde \h _j)(\Pg) \pppg x^{-\d} \z_j(\Pg) \pppg x^\d}_{\Lc(L^2)} \lesssim 2^{-jM}.
\]
Since $(1-\tilde \h_j)(\Pg)$ is uniformly bounded, it is remains to prove 
\begin{equation*} %\label{estim-blop}
\nr{ \pppg x^{-\d}(1- \z_j)(\Pg) \Rg^{\kk + k}(\t + i\m) \pppg x^{-\d}}_{\Lc(L^2)} \lesssim 2^{-2j}.
\end{equation*}
We recall that for $z \in \C_+$ we have set $\Ro(z) = (\Pg-z)\inv$. By the resolvent identity we have
\begin{equation*} %\label{dec-blip-blop}
\begin{aligned}
\Rg^{\kk+k}(z)
& = \Ro(z)  \Rg^{\kk + k-1}(z) + i \Ro(z) \Ba \Rg^{\kk+k}(z) \\
& = \Ro(z)^2 \Rg^{\kk + k -2}(z) + i \Ro^2(z) \Ba \Rg^{\kk + k -1}(z) + i \Ro(z) \Ba \Rg^{\kk+k}(z).
\end{aligned}
\end{equation*} 
For $\Re(z) \in \supp(\h_j)$ we have 
\begin{eqnarray*}
\lefteqn{\nr{ \pppg x^{-\d}(1- \z_j)(\Pg) \Ro^2(z)  \Rg^{{\kk+k}-2}(z) \pppg x^{-\d}}}\\
&&\leq \nr{ \pppg x^{-\d}(1- \z_j)(\Pg) \Ro^2(z)\pppg x^\d} \nr{  \pppg x^{-\d}  \Rg^{{\kk+k}-2}(z) \pppg x^{-\d}}\\
&& \lesssim 2^{-2j}.
\end{eqnarray*}
We have used the Spectral Theorem and pseudo-differential calculus to estimate the first factor, and Theorem \ref{th-high-freq} for the second. Similarly
\begin{eqnarray*}
\lefteqn{\nr{ \pppg x^{-\d}(1- \z_j)(\Pg) \Ro^2(z) \Ba  \Rg(z)^{\kk+k-1} \pppg x^{-\d}}}\\
&&\leq \nr{ \pppg x^{-\d}(1- \z_j)(\Pg) \Ro^2(z) \Ba \pppg x^\d} \nr{  \pppg x^{-\d}  \Rg(z)^{\kk+k-1} \pppg x^{-\d}}\\
&& \lesssim 2^{j \left( -2 + \frac \a 2 \right) - \frac {\a j} 2 (\kk + k-1)}\lesssim 2^{-2j}.
\end{eqnarray*}
For the last term we have to prove 
\begin{equation} \label{estim-zeta}
\nr{ \pppg x^{-\d}(1- \z_j)(\Pg)\Ro(z) \Ba \Rg^{\kk+k}(z) \pppg x^{-\d}} \lesssim 2^{-2j}.
\end{equation}
We proceed as above. We consider $\vf \in C_0^\infty(\R_+^*,[0,1])$ equal to 1 on a neighborhood of $\supp(\h)$ and such that $\z =1$ on a neighborhood of $\supp(\vf)$, and then we set $\vf_j = \vf (\cdot / 2^{j-1})$ for $j \in \N^*$. We write 
\begin{eqnarray*}
\lefteqn{\nr{\pppg x^{-\d}(1- \z_j)(\Pg)\Ro(z) \Ba \Rg^{\kk+k}(z) \pppg x^{-\d}}}\\
&& \leq  \nr{\pppg x^{-\d}(1- \z_j)(\Pg)\Ro(z) \Ba \vf_j(P) \pppg x^\d}\nr{\pppg x^{-\d} \Rg^{\kk+k}(z) \pppg x^{-\d}}\\
&& \quad + \nr{\pppg x^{-\d}(1- \z_j)(\Pg)\Ro(z) \Ba \pppg x^\d} \nr{\pppg x^{-\d} (1-\vf_j)(P) \Rg^{\kk+k}(z) \pppg x^{-\d}}\\
&& \lesssim 2^{-jM} + 2^{j \left(\frac \a 2 - 1 \right)} \nr{\pppg x^{-\d} (1-\vf_j)(P) \Rg^{\kk+k}(z) \pppg x^{-\d}},
\end{eqnarray*}
for any $M \geq 0$. Then we use again the resolvent identity
\begin{align*}
\pppg x^{-\d} (1-\vf_j)(P) \Rg^{\kk+k}(z) \pppg x^{-\d}
& = \pppg x^{-\d} (1-\vf_j)(P) \Ro(z) \Rg^{\kk+k-1}(z) \pppg x^{-\d}\\
&\quad  + \pppg x^{-\d} (1-\vf_j)(P) \Ro(z) \Ba \Rg^{\kk+k}(z) \pppg x^{-\d}
\end{align*}
and we conclude as above. We obtain \eqref{estim-zeta}, then \eqref{estim-blip}, and finally the contribution of $w_{k,j}^2$ in \eqref{estim-w2w3} after integration over $\supp(\h_j)$. The contribution of $w_{k,j}^3$ is estimated similarly. Then it remains to sum over $j \in \N^*$ to conclude the proof of the proposition and hence the proof of Theorem \ref{th-loc-decay}. 
\end{proof}

\section{Smoothing effect} \label{sec-smoothing-effect}

In this section we prove Theorem \ref{th-smoothing-bis}. With Theorems \ref{th-inter-freq}, \ref{th-low-freq-sharp} and \ref{th-high-freq} it implies Theorem \ref{th-smoothing-effect}. For this we use a dissipative version of the theory of relatively smooth operators in the sense of Kato.

\begin{proposition} \label{prop-estim-regularisant}
Under the assumption of Theorem \ref{th-smoothing-bis} there exists $C \geq 0$ such that for all $z \in \C_+$ we have 
\[
\nr{ \pppg x \inv \pppg \Pg ^{\frac \bbb 4} \Rg(z) \pppg \Pg ^{\frac \bbb 4} \pppg x \inv }_{\Lc(L^2)} \leq C.
\]
\end{proposition}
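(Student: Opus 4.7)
The statement asserts that one may trade the $\pppg z^{-\bbb/2}$ decay in \eqref{hyp-res-estim} for the insertion of $\pppg\Pg^{\bbb/4}$ on each side of $\Rg(z)$. The intuition is that the hypothesis is strongest on spectral sections of $\Pg$ lying at height $\sim\pppg z$ (where the resolvent is most singular), and on such a section $\pppg\Pg^{\bbb/4}$ acts essentially as the scalar $\pppg z^{\bbb/4}$; placing it on both sides then produces exactly the compensating factor $\pppg z^{\bbb/2}$.

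The plan is to implement this idea through a spectral localisation of $\Pg$ at the scale $\pppg z$. Fix $\chi \in C_0^\infty(\R,[0,1])$ equal to $1$ near the origin and supported in $[-2,2]$, and write $1 = \chi_z(\Pg) + \bigl(1-\chi_z(\Pg)\bigr)$ with $\chi_z(\lambda) = \chi\bigl((\lambda-\Re z)/\pppg z\bigr)$ (modified in the obvious way when $\Re z < 0$). Inserting this decomposition on both sides of $\Rg(z)$ splits $\pppg\Pg^{\bbb/4}\Rg(z)\pppg\Pg^{\bbb/4}$ into a principal near part and mixed far remainders. For the near part, functional calculus gives $\nr{\chi_z(\Pg)\pppg\Pg^{\bbb/4}}_{\Lc(L^2)} \lesssim \pppg z^{\bbb/4}$; after commuting $\chi_z(\Pg)\pppg\Pg^{\bbb/4}$ past $\pppg x^{-1}$ (the commutator has a symbol with extra spatial decay and is therefore of the same type but of lower order), the hypothesis \eqref{hyp-res-estim} gives a contribution of size $\pppg z^{\bbb/2} \cdot \pppg z^{-\bbb/2} = O(1)$. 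For a far remainder with $(1-\chi_z)(\Pg)$ on, say, the left, the identity $\Pg\Rg(z) = 1 + z\Rg(z) + i\Ba\Rg(z)$ (a consequence of $\Hg = \Pg - i\Ba$) allows rewriting $(1-\chi_z)(\Pg)\pppg\Pg^{\bbb/4}\Rg(z)$ as a bounded function of $\Pg$ (the multiplier $(1-\chi_z(\lambda))\pppg\lambda^{\bbb/4}/(\lambda-z)$ is $O(1)$ in $L^\infty(\R_+)$ uniformly in $z$, because $\bbb/4 \leq 1/2$) plus a term involving $\Ba\Rg(z)$. The latter is absorbed via the factorisation $\Ba = T^*T$ with $T = \pppg D^{\alpha/2} a$, Corollary~\ref{cor-quad-estim} giving $\nr{T\Rg(z)T^*}_{\Lc(L^2)} \leq 1$ uniformly in $z$, together with the hypothesis applied one more time to what remains.

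The main difficulty will be the non-commutativity between $\pppg\Pg^{\bbb/4}$ (which commutes with $\Pg$) and $\Rg(z)$ (whose defining operator is $\Hg = \Pg - i\Ba$): every spectral manipulation produces commutator contributions of the form $[\pppg\Pg^{\bbb/4},\Ba]$. The reason the argument nevertheless closes is that each such commutator carries an intrinsic $\Ba$-factor, which can be absorbed by Corollary~\ref{cor-quad-estim}, while the remaining spatial weights $\pppg x^{-1}$ are controlled using the short-range decay \eqref{hyp-a-short-range} of $a$ and the long-range structure \eqref{hyp-long-range} of the metric via standard pseudo-differential calculus.
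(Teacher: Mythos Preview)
Your approach is essentially the paper's: spectral localisation of $\Pg$ at scale $\pppg z$, the near piece controlled by $\nr{\pppg x^{-1}\pppg\Pg^{\bbb/4}\chi_z(\Pg)\pppg x}\lesssim\pppg z^{\bbb/4}$ together with hypothesis \eqref{hyp-res-estim}, and the far pieces handled via the resolvent identity $\Rg=\Ro+i\Rg\Ba\Ro$ (the spectral theorem for the $\Ro$ part, Proposition~\ref{prop-quad-estim} and Corollary~\ref{cor-quad-estim} for the $\Ba$-corrections). The paper first disposes of bounded $z$ by a separate resolvent identity at $z=i$, uses the cutoff $\eta(\lambda/|z|)$ rather than one centred at $\Re z$, and treats the weight commutation by showing directly that $\e^{\bbb/4}\pppg\Pg^{\bbb/4}\pppg{\e\Pg}^{-\bbb/4}$ is a pseudo-differential operator with uniformly bounded symbol (hence $\pppg x^{-1}\pppg\Pg^{\bbb/4}\eta_z(\Pg)\pppg x$ is bounded on $L^2$ uniformly) rather than through a commutator expansion --- but these are cosmetic differences.
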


\begin{proof}
\stepp Let $K$ be a compact subset of $\C$. Using the resolvent identity
\begin{align*}
\Rg(z) = \Rg(i) + (z-i) \Rg(i)^2 + (z-i)^2 \Rg(i) \Rg(z) \Rg(i),
\end{align*}
we obtain for $z \in \C_+ \cap K$
\begin{eqnarray*}
\lefteqn{\nr{ \pppg x \inv \pppg \Pg ^{\frac \bbb 4} \Rg(z) \pppg \Pg ^{\frac \bbb 4} \pppg x \inv}}\\
&& \lesssim 1 + \nr{ \pppg x \inv \pppg \Pg ^{\frac \bbb 4}\Rg(i) \pppg x} \nr{\pppg x \inv \Rg(z) \pppg x \inv} \nr{\pppg x \Rg(i) \pppg \Pg ^{\frac \bbb 4} \pppg x \inv}.
\end{eqnarray*}
By pseudo-differential calculus the operators $\pppg \Pg ^{\frac \bbb 4}\Rg(i)$ and $\Rg(i) \pppg \Pg ^{\frac \bbb 4}$ are bounded on $L^{2,-1}$ and $L^{2,1}$, respectively. For the second factor in the right-hand side we use \eqref{hyp-res-estim}, and the conclusion follows for $z \in \C_+ \cap K$.

\stepp It remains to prove the result for $\abs z \gg 1$. Let $\h \in C_0^\infty(\R,[0,1])$ be supported on [-3,3] and equal to 1 on [-2,2]. For $z \in \C_+$ we define $\h_z : \l \mapsto \h(\l/\abs z)$. The operator $\e^{\frac \bbb 4} \pppg {\Pg}^{\frac \bbb 4}  \pppg {\e \Pg}^{-\frac \bbb 4}$ is a pseudo-differential operator whose symbol has bounded derivatives uniformly in $\e \in ]0,1]$, so the operator 
\begin{equation} \label{op-1}
\abs {z}^{-\frac \bbb 4} \pppg x^{-1} \pppg {\Pg}^{\frac \bbb 4}  \pppg {\frac {\Pg}{\abs z}}^{-\frac \bbb 4} \pppg x
\end{equation}
extends to a bounded operator on $L^2$ uniformly in $z$ with $\abs z \geq 1$. The operator 
\begin{equation} \label{op-2}
\pppg x^{-1}  \pppg {\frac {\Pg}{\abs z}}^{\frac \bbb 4} \h\left( \frac \Pg {\abs z} \right) \pppg x
\end{equation}
is also bounded on $L^2$ uniformly in $z$ with $\abs z \geq 1$, and we have similar estimates for the adjoint operators of \eqref{op-1} and \eqref{op-2}. Thus
\begin{eqnarray*}
\lefteqn{\nr{\pppg x^{-1} \pppg {\Pg}^{\frac \bbb 4}  \h_z(\Pg) \Rg(z)  \h_z(\Pg) \pppg {\Pg}^{\frac \bbb 4}\pppg x^{-1}}}\\
&& \lesssim \abs z^{\frac \bbb 2} \nr{\pppg x^{-1} \pppg {\abs z \inv {\Pg}}^{\frac \bbb 4}  \h_z(\Pg) \Rg(z)  \h_z(\Pg) \pppg {\abs z \inv {\Pg}}^{\frac \bbb 4}\pppg x^{-1}}\\
&& \lesssim \abs z^{\frac \bbb 2} \nr{\pppg x^{-1}  \Rg(z) \pppg x^{-1}}\\
&& \lesssim 1.
\end{eqnarray*}

\stepp 
With $\Ro(z) = (\Pg-z)\inv$ we have the resolvent identity 
\[
\Rg(z) = \Ro(z) + i \Rg(z) \Ba \Ro(z).
\]
We have
\begin{eqnarray*}
\lefteqn{\nr{\pppg x^{-1} \pppg {\Pg}^{\frac \bbb 4}  \h_z(\Pg) \Ro(z)  (1-\h_z)(\Pg) \pppg {\Pg}^{\frac \bbb 4}\pppg x^{-1}}}\\
&& \leq \nr{\pppg x^{-1} \pppg {\Pg}^{\frac \bbb 4}\h_z(\Pg)} \nr{\Ro(z) (1- \h_z)(\Pg) \pppg {\Pg}^{\frac \bbb 4}\pppg x^{-1}}\\
&& \lesssim \pppg z^{\frac \bbb 4} \pppg z^{\frac \bbb 4 - 1}  \lesssim 1.
\end{eqnarray*}
We have estimated the first factor as above and the second by the Spectral Theorem.
On the other hand, since the operator $\sqrt {\Ba} \pppg {\Pg}^{-\frac 12}$ is bounded we also have by Proposition \ref{prop-quad-estim}
\begin{eqnarray*}
\lefteqn{\nr{\pppg x^{-1} \pppg {\Pg}^{\frac \bbb 4}  \h_z(\Pg) \Rg(z) \Ba \Ro(z)  (1-\h_z)(\Pg) \pppg {\Pg}^{\frac \bbb 4}\pppg x^{-1}}}\\
&& \leq \nr{\pppg x^{-1} \pppg {\Pg}^{\frac \bbb 4}\h_z(\Pg)\pppg x} \nr{\pppg x^{-1} \Rg(z) \sqrt {\Ba}} \nr{ \pppg {\Pg}^{\frac 12} \Ro(z) (1- \h_z)(\Pg) \pppg {\Pg}^{\frac \bbb 4}}\\
&& \lesssim \pppg z^{\frac \bbb 4} \pppg z^{-\frac \bbb 4} \pppg z^{\frac 12 + \frac \bbb 4 - 1}  \lesssim 1.
\end{eqnarray*}
This proves that 
\[
\nr{\pppg x^{-1} \pppg {\Pg}^{\frac \bbb 4}  \h_z(\Pg) \Rg(z)  (1-\h_z)(\Pg) \pppg {\Pg}^{\frac \bbb 4}\pppg x^{-1}} \lesssim 1.
\]

\stepp 
The operator 
\[
\pppg x^{-1} \pppg {\Pg}^{\frac \bbb 4} (1-\h_z) (\Pg) \Rg(z)  \h_z (\Pg) \pppg {\Pg}^{\frac \bbb 4}\pppg x^{-1}
\]
is estimated similarly. Finally for 
\[
\pppg x^{-1} \pppg {\Pg}^{\frac \bbb 4} (1-\h_z) (\Pg) \Rg(z)  (1-\h_z) (\Pg) \pppg {\Pg}^{\frac \bbb 4}\pppg x^{-1}
\]
we only have to use twice the resolvent identity:
\begin{align*}
\Rg(z)
& = \Ro(z) + i \Ro(z) \Ba \Ro(z) - \Ro(z) \Ba \Rg(z) \Ba \Ro(z).
\end{align*}
Then we apply the same idea as above, using Corollary \ref{cor-quad-estim} to estimate $\sqrt {\Ba} R(z) \sqrt {\Ba}$. This concludes the proof.
\end{proof}

Taking the adjoint in the estimate of proposition \ref{prop-estim-regularisant} we obtain the same estimate with $\Rg(z)$ replaced by $\Rg(z)^* = (\Pg +i\Ba - \bar z)\inv$ (the same is true for the estimates of Theorems \ref{th-inter-freq}, \ref{th-low-freq-sharp} and \ref{th-high-freq}). In particular we obtain the following result:

\begin{corollary} \label{cor-estim-regularisant}
Then there exists $C \geq 0$ such that for all $z \in \C_+$ and $\f \in \Sc$ we have 
\[
\abs{ \innp{ \big( (H-z)\inv - (H^* - \bar z)\inv \big) \pppg \Pg ^{\frac \bbb 4} \pppg x^{-1} \f}{ \pppg \Pg^{\frac \bbb 4} \pppg x^{-1} \f}_{L^2}} \leq C \nr \f_{L^2}^2.
\]
\end{corollary}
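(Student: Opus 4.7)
The plan is to deduce the corollary directly from Proposition \ref{prop-estim-regularisant} by a one-line duality argument, so I do not expect any genuine obstacle. Set
\[
T = \pppg{\Pg}^{\frac \bbb 4} \pppg x^{-1},
\]
so that, since $\Pg$ is self-adjoint on $L^2_w$ (hence on $L^2$ up to the equivalent weight $w$) and $\pppg x^{-1}$ is real multiplication, the formal adjoint is
\[
T^* = \pppg x^{-1} \pppg{\Pg}^{\frac \bbb 4}.
\]
Then the quantity to bound can be rewritten as
\[
\innp{\bigl((H-z)\inv - (H^* - \bar z)\inv\bigr) T\f}{T\f}_{L^2}
 = \innp{T^* \bigl(\Rg(z) - \Rg(z)^*\bigr) T\f}{\f}_{L^2}.
\]

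Next I would invoke Proposition \ref{prop-estim-regularisant} to get
\[
\nr{T^* \Rg(z) T}_{\Lc(L^2)} \leq C
\]
uniformly in $z \in \C_+$. Taking Hilbert-space adjoints preserves the operator norm, and $(T^* \Rg(z) T)^* = T^* \Rg(z)^* T$, so the same proposition yields
\[
\nr{T^* \Rg(z)^* T}_{\Lc(L^2)} \leq C
\]
as well. By the triangle inequality,
\[
\nr{T^* \bigl(\Rg(z) - \Rg(z)^*\bigr) T}_{\Lc(L^2)} \leq 2C,
\]
and the Cauchy--Schwarz inequality applied to the inner product displayed above concludes the proof with constant $2C$.

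The only subtlety is the sentence in the excerpt claiming that the analogues of Theorems \ref{th-inter-freq}, \ref{th-low-freq-sharp} and \ref{th-high-freq} hold for $\Rg(z)^*$ as well; this is implicitly used, but it is immediate, since taking adjoints in the already-proved resolvent bounds for $\Rg(z)$ gives the corresponding bounds for $\Rg(z)^* = (\Pg + i\Ba - \bar z)\inv$ without modification. In particular Proposition \ref{prop-estim-regularisant} itself, whose proof relies only on these weighted resolvent bounds together with the quadratic estimates of Corollary \ref{cor-quad-estim}, extends verbatim to $\Rg(z)^*$, which is all that is needed.
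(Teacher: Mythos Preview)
Your argument is correct and is exactly the paper's approach: the paper deduces the corollary from Proposition \ref{prop-estim-regularisant} by taking adjoints (noting $\Rg(z)^* = (\Pg + i\Ba - \bar z)^{-1}$) and then combining the two bounds, which is precisely your duality-plus-triangle-inequality step written out explicitly.
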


It is known that such an estimate on the resolvent implies Theorem \ref{th-smoothing-bis}. This comes from the dissipative version of the theory of relatively smooth operators. The self-adjoint theory can be found in \cite[\S XIII.7]{rs4}. The dissipative version uses the theory of self-adjoint dilations for a dissipative operator described in \cite{nagyf}. All this has been combined in Proposition 6.2 in \cite{art-mourre-formes}, according to which Theorem \ref{th-smoothing-bis} follows.

\bibliographystyle{alpha}
\bibliography{bibliotex}

\begin{thebibliography}{{Alo}08b}

\bibitem[ABG96]{amrein}
W.~Amrein, A.~{Boutet de Monvel}, and V.~Georgescu.
\newblock {\em ${C}_0$-groups, Commutator Methods and Spectral theory of
  {$N$}-body Hamiltonians}, volume 135 of {\em Progress in mathematics}.
\newblock Birkh\"auser Verlag, 1996.

\bibitem[AK07]{alouik07}
L.~Aloui and M.~Khenissi.
\newblock {Stabilization of {S}chr\"odinger equation in exterior domains}.
\newblock {\em ESAIM, Control Optim. Calc. Var.}, 13(3):570--579, 2007.

\bibitem[AK10]{alouik10}
L.~Aloui and M.~Khenissi.
\newblock Boundary stabilization of the wave and {S}chr\"odinger equations in
  exterior domains.
\newblock {\em Discrete and continuous dynamical systems}, 27(3):919--934,
  2010.

\bibitem[AKR]{alouikr}
L.~Aloui, M.~Khenissi, and L.~Robbiano.
\newblock The {K}ato smoothing effect for regularized {S}chr\"odinger equations
  in exterior domains.
\newblock Prprint arxiv:1204.1904.

\bibitem[AKV13]{alouikv13}
L.~Aloui, M.~Khenissi, and G.~Vodev.
\newblock Smoothing effect for the regularized {S}chr\"odinger equation with
  non-controlled orbits.
\newblock {\em Comm. in P.D.E.}, 38(2):265--275, 2013.

\bibitem[{Alo}08a]{aloui08b}
L.~{Aloui}.
\newblock {Smoothing effect for regularized Schr\"odinger equation on bounded
  domains.}
\newblock {\em {Asymptotic Anal.}}, 59(3-4):179--193, 2008.

\bibitem[{Alo}08b]{aloui08}
L.~{Aloui}.
\newblock {Smoothing effect for regularized Schr\"odinger equation on compact
  manifolds.}
\newblock {\em {Collect. Math.}}, 59(1):53--62, 2008.

\bibitem[BC14]{bortotc}
C.A. Bortot and M.M. Cavalcanti.
\newblock Asympotic stability for the damped {S}chr\"odinger equation on
  noncompact {R}iemannian manifolds and exterior domains.
\newblock {\em Comm. in P.D.E.}, 39:1791--1820, 2014.

\bibitem[BGT04]{burqgt04}
N.~{Burq}, P.~{G\'erard}, and N.~{Tzvetkov}.
\newblock {On nonlinear Schr\"odinger equations in exterior domains.}
\newblock {\em {Ann. Inst. Henri Poincar\'e, Anal. Non Lin\'eaire}},
  21(3):295--318, 2004.

\bibitem[BH12]{bonyh12}
J.-F. Bony and D.~H\"afner.
\newblock {Local Energy Decay for Several Evolution Equations on Asymptotically
  Euclidean Manifolds.}
\newblock {\em Annales Scientifiques de l' \'Ecole Normale Sup\'erieure},
  45(2):311--335, 2012.

\bibitem[BK92]{benartzik92}
M.~{Ben-Artzi} and S.~{Klainerman}.
\newblock {Decay and regularity for the Schr\"odinger equation.}
\newblock {\em {J. Anal. Math.}}, 58:25--37, 1992.

\bibitem[BLR92]{bardoslr92}
C.~{Bardos}, G.~{Lebeau}, and J.~{Rauch}.
\newblock {Sharp sufficient conditions for the observation, control, and
  stabilization of waves from the boundary.}
\newblock {\em {SIAM J. Control Optim.}}, 30(5):1024--1065, 1992.

\bibitem[Bou11]{bouclet11}
J.-M. Bouclet.
\newblock Low frequency estimates and local energy decay for asymptotically
  {E}uclidean laplacians.
\newblock {\em Comm. Part. Diff. Equations}, 36:1239--1286, 2011.

\bibitem[BR]{boucletr}
J.-M. Bouclet and J.~Royer.
\newblock Low frequency resolvent estimates on asymptotically conical
  manifolds.
\newblock {\em Comm. in Math. Phys.}

\bibitem[BR14]{boucletr14}
J.-M. Bouclet and J.~Royer.
\newblock Local energy decay for the damped wave equation.
\newblock {\em Jour. Func. Anal.}, 266(2):4538--4615, 2014.

\bibitem[Bur98]{burq98}
N.~Burq.
\newblock {D\'ecroissance de l'\'energie locale de l'\'equation des ondes pour
  le probl\`eme ext\'erieur et absence de r\'esonance au voisinage du r\'eel.}
\newblock {\em Acta Math.}, 180(1):1--29, 1998.

\bibitem[{Bur}04]{burq04}
N.~{Burq}.
\newblock {Smoothing effect for Schr\"odinger boundary value problems.}
\newblock {\em {Duke Math. J.}}, 123(2):403--427, 2004.

\bibitem[CS88]{constantins88}
P.~{Constantin} and J.C. {Saut}.
\newblock {Local smoothing properties of dispersive equations.}
\newblock {\em {J. Am. Math. Soc.}}, 1(2):413--439, 1988.

\bibitem[Dav95]{davies95}
E.B. Davies.
\newblock The functional calculus.
\newblock {\em Journal of the London Mathematical Society}, 52(1):166--176,
  1995.

\bibitem[Doi96]{doi96}
S.-I. Doi.
\newblock Smoothing effects of {S}chr\"odinger evolution groups on {R}iemannian
  manifolds.
\newblock {\em Duke Math. J.}, 82(3):679--706, 1996.

\bibitem[{Doi}00]{doi00}
S.-I. {Doi}.
\newblock {Smoothing effects for Schr\"odinger evolution equation and global
  behavior of geodesic flow.}
\newblock {\em {Math. Ann.}}, 318(2):355--389, 2000.

\bibitem[DS99]{dimassis}
M.~Dimassi and J.~Sj\"ostrand.
\newblock {\em Spectral Asymptotics in the Semi-Classical Limit}, volume 268 of
  {\em London Mathematical Society, Lecture Note Series}.
\newblock Cambridge University Press, 1999.

\bibitem[GHS13]{guillarmouhs13}
C.~Guillarmou, A.~Hassell, and A.~Sikora.
\newblock Resolvent at low energy {III}: the spectral measure.
\newblock {\em Trans. Amer. Math. Soc.}, 365(11):6103--6148, 2013.

\bibitem[Kat66]{kato66}
T.~Kato.
\newblock Wave operators and similarity for some non-selfadjoint operators.
\newblock {\em Math. Annalen}, 162:258--279, 1966.

\bibitem[KT06]{kocht06}
H.~Koch and D.~Tataru.
\newblock Carleman estimates and absence of embedded eigenvalues.
\newblock {\em Communication in Mathematical Physics}, 267:419--449, 2006.

\bibitem[LMP63]{laxmp63}
P.D. Lax, C.S. Morawetz, and R.S. Phillips.
\newblock Exponential decay of solutions of the wave equation in the exterior
  of a star-shaped obstacle.
\newblock {\em Comm. on Pure and Applied Mathematics}, 16:477--486, 1963.

\bibitem[Mou81]{mourre81}
E.~Mourre.
\newblock Absence of singular continuous spectrum for certain self-adjoint
  operators.
\newblock {\em Comm. Math. Phys.}, 78:391--408, 1981.

\bibitem[MRS77]{morawetzrs77}
C.S. Morawetz, J.V. Ralston, and W.A. Strauss.
\newblock Decay of the solution of the wave equation outside non-trapping
  obstacles.
\newblock {\em Comm. on Pure and Applied Mathematics}, 30:447--508, 1977.

\bibitem[NF10]{nagyf}
B.S. Nagy and C.~Foias.
\newblock {\em Harmonic analysis of operators on {H}ilbert spaces}.
\newblock Universitext. Springer, 2010.

\bibitem[Ral69]{ralston69}
J.~Ralston.
\newblock Solution of the wave equation with localized energy.
\newblock {\em Comm. on Pure and Applied Mathematics}, 22:807--823, 1969.

\bibitem[Rau78]{rauch78}
J.~Rauch.
\newblock {Local decay of scattering solutions to {S}chr\"odinger's equation}.
\newblock {\em Commun. Math. Phys.}, 61:149--168, 1978.

\bibitem[Roy]{art-mourre-formes}
J.~Royer.
\newblock Mourre's commutators method for a dissipative form perturbation.
\newblock Preprint, arXiv:1411.7705.

\bibitem[Roy10]{art-mourre}
J.~Royer.
\newblock Limiting absorption principle for the dissipative {H}elmholtz
  equation.
\newblock {\em Comm. Part. Diff. Equations}, 35(8):1458--1489, 2010.

\bibitem[Roy14]{art-diss-schrodinger-guide}
J.~Royer.
\newblock Exponential decay for the {S}chr\"odinger equation on a dissipative
  wave guide.
\newblock {\em Ann. Henri Poincar\'e}, 2014.
\newblock In press (published online).

\bibitem[RS79]{rs4}
M.~Reed and B.~Simon.
\newblock {\em Method of Modern Mathematical Physics}, volume IV, Analysis of
  Operator.
\newblock Academic Press, 1979.

\bibitem[RT74]{raucht74}
J.~Rauch and M.~Taylor.
\newblock Exponential decay of solutions to hyperbolic equations in bounded
  domains.
\newblock {\em Indiana Univ. Math. J.}, 24(1):79--86, 1974.

\bibitem[Sj{\"o}87]{sjolin87}
P.~Sj{\"o}lin.
\newblock {Regularity of solutions to the Schr\"odinger equation.}
\newblock {\em {Duke Math. J.}}, 55:699--715, 1987.

\bibitem[Tat13]{tataru13}
D.~Tataru.
\newblock Local decay of waves on asymptotically flat stationary space-times.
\newblock {\em Amer. J. Math.}, 135(2):361--401, 2013.

\bibitem[{Tho}10]{thomann10}
L.~{Thomann}.
\newblock {A remark on the Schr\"odinger smoothing effect.}
\newblock {\em {Asymptotic Anal.}}, 69(1-2):117--123, 2010.

\bibitem[Tsu84]{tsutsumi84}
Y.~Tsutsumi.
\newblock {Local energy decay of solutions to the free {S}chr\"odinger equation
  in exterior domains}.
\newblock {\em J. Fac. Sci., Univ. Tokyo, Sect. I A}, 31:97--108, 1984.

\bibitem[Zwo12]{zworski}
M.~Zworski.
\newblock {\em Semiclassical Analysis}, volume 138 of {\em Graduate Studies in
  Mathematics}.
\newblock American Mathematical Society, 2012.

\end{thebibliography}

\vspace{1cm}

\begin{center}

\begin{minipage}{0.45 \linewidth}

\noindent  {\sc Moez KHENISSI}

\noindent  {\sc \'Ecole Sup\'erieure des Sciences et de Technologie de Hammam Sousse}

\noindent  {\sc Rue Lamine El Abbessi}

\noindent  {\sc 4011 Hammam Sousse}

\noindent  {\sc Tunisia}

 \begin{verbatim}
moez.khenissi@fsg.rnu.tn
\end{verbatim}
\end{minipage}
\hfill
\begin{minipage}{0.45 \linewidth}

\noindent  {\sc Julien ROYER}

\noindent  {\sc Institut de mathématiques de Toulouse}

\noindent  {\sc 118, route de Narbonne}

\noindent  {\sc 31062 Toulouse Cédex 9}

\noindent  {\sc France}

 \begin{verbatim}
julien.royer@math.univ-toulouse.fr
\end{verbatim}
\end{minipage}
\end{center}

\end{document}